\documentclass[journal]{IEEEtran}

\usepackage{graphicx}
\usepackage{amssymb}
\usepackage{amsmath}
\usepackage{color}
\usepackage{subfigure}
\usepackage{epstopdf}
\usepackage{graphicx}
\usepackage{chngcntr}
\usepackage{epstopdf}

\newcommand{\qed}{\hbox{\rule[-2pt]{3pt}{6pt}}}

\newcommand{\be}{\begin{equation}}

\newcommand{\eq}{\end{equation}}
\newcommand{\ba}{\begin{array}}
\newcommand{\ea}{\end{array}}
\newcommand{\bean}{\begin{eqnarray*}}
\newcommand{\eean}{\end{eqnarray*}}
\newcommand{\bea}{\begin{eqnarray}}

\newcommand{\eea}{\end{eqnarray}}
\newcommand{\nn}{\nonumber}

\newcommand{\R}{\rm {I\kern-2pt R}}

\newcommand{\beq}{\begin{equation}}
\newcommand{\eeq}{\end{equation}}

\newcommand{\blm}{\begin{lemma}}\newcommand{\elm}{\end{lemma}}
\newcommand{\bit}{\begin{itemize}}\newcommand{\eit}{\end{itemize}}

\newtheorem{theorem}{Theorem}[section]
\newtheorem{remark}[theorem]{Remark}
\newtheorem{lemma}[theorem]{Lemma}
\newtheorem{example}[theorem]{Example}
\newtheorem{assumption}[theorem]{Assumption}
\newtheorem{corollary}[theorem]{Corollary}

\newtheorem{definition}[theorem]{Definition}

\ifCLASSINFOpdf
\else

\fi

\begin{document}

\title{
\huge{Compensation of Input/Output Delays
for  Retarded Systems by Sequential Predictors:  
A Lyapunov-Halanay Method
}}

\author{Xin Yu  \    and \   Wei Lin, {\it Fellow, IEEE}
\thanks{The authors are
with the Dept. of Electrical Eng. and Computer Science, Case Western Reserve University,
Cleveland, Ohio, USA.
}
}


\maketitle

\begin{abstract}
This paper presents a Lyapunov-Halanay method to study   
global asymptotic stabilization (GAS) of  
nonlinear retarded systems subject to large constant delays in input/output  --- a challenging problem due to their inherent destabilizing effects.
Under the conditions of global Lipschitz continuity (GLC) 
and global exponential stabilizability (GES) of the retarded  
system without input delay, a state feedback controller is designed based on sequential predictors
 to make  the closed-loop retarded system GAS. Moreover, if the 
 retarded system with no output delay permits a global exponential observer,
    a dynamic  output compensator is also constructed based on sequential predictors, achieving GAS of the
corresponding closed-loop retarded system with input/output delays.
The predictor based state and output feedback stabilization results 
are then extended   to a broader class of nonlinear retarded systems with input/output delays, which 
 may not be  GES  
 but satisfy  global asymptotic stabilizability/observability  and suitable ISS conditions. 
 As an application,   a pendulum system with delays in the state, input and output  is used to illustrate 
 the effectiveness of the proposed state and output feedback control strategies
  based on sequential predictors. 
\end{abstract}

\begin{IEEEkeywords}
Retarded systems,  global Lipschitz continuity, sequential predictors, 
 Lyapunov-Halanay method,   
 Constant delays in input/output. 
\end{IEEEkeywords}

\IEEEpeerreviewmaketitle

\section{Introduction}
\counterwithout{equation}{section}

The control of nonlinear dynamical systems with delays in the state, input, and output has 
been one of  central research themes 
 in the field of  time-delay systems. 
On one hand, time-delays are  an inevitable phenomenon inherent in the actuators, sensors, and data transmission processes 
of  physical 
systems.  On the other hand,  the delays often exert adverse effects on the system's 
performance and stability \cite{kr63,gu03,nm14}.


For time-invariant linear systems that possess input 
delay, the 
seminal paper 
\cite{art82} introduces a reduction method to compensate the large input delay effectively 
by using a predictor-based controller, which is indeed a  
generalization of the well-known Smith predictor \cite{sp59}. 
In the time-varying case, 
 \cite{mmn14}  further  
 applied  the reduction model approach to address the asymptotic stabilization problem for
 time-varying linear systems with input delay. 
Over the years,  various  predictor based feedback control strategies, both exact and approximate, 
have been developed   
(e.g.,  for linear strict-feedback systems with delayed integrators \cite{lk10}),  
 and more significantly extended to input-delayed nonlinear systems, 
 thanks to the original work
\cite{krs10}, where a prediction-based control scheme is proposed to compensate
 for large input delays in nonlinear systems that are forward complete 
 as well as  strict-feedforward nonlinear systems that are free of state delays.  
 
 Since then, many subsequent works have been reported in the literature and substantial progress has been obtained.
 For instance, the paper \cite{kra11} developed a stabilization method via approximate predictor-based feedback
  for nonlinear systems subject to input delay.  Under the globally Lipschitz continuous  (GLC) condition, 
  \cite{kk13} 
  further  developed an output feedback control strategy  to compensate long input/output delays 
 in disturbed nonlinear systems, by virtue of  
  approximate predictors and high-gain observers. 
For time-delay nonlinear systems, 
 the work  
\cite{BL14}  has filled a critical gap by solving the delay compensation problem
of nonlinear systems with arbitrary-length input and state delays.   
 For additional  details,
  the reader is referred to the  monograph \cite{kk17} that has documented comprehensively the state-of-the-art  
     in  the development 
   of   predictor-based designs and compensation 
   methods  for  
   nonlinear systems with  large  input/output delays.

The aforementioned  predictor-based  
compensation 
approaches  typically involve an infinite-dimensional term, namely, 
an integral term in the nonlinear predictor, which may give rise to practical implementation challenges. 
       To address the practical feasibility and implementation issue,  alternative methodologies emphasizing simplicity and robustness have been explored. In particular,  attractive delay-free 
        feedback control schemes  have been sought  by a number of researchers.
        For instance,   the early papers \cite{mmn03,Mazenc04} presented  memoryless nested saturation feedback controllers to 
         achieve global asymptotic stabilization for a chain of linear integrators and  feedforward systems with  input delay,
         respectively, while the recent works \cite{sunlin21,sunlin23} developed delay-free,  
         {\it dynamic gain-based} saturation control schemes 
         for  a chain of delayed integrators perturbed by  feedforward nonlinearities with large state/input delays 
          and parametric uncertainty.  Notably, the {\it dynamic state feedback} method was originally developed for the
          control of nonlinear {\it affine systems} with delays in the state 
          by delay-free (finite-dimensional)  \cite{LinZhang20} 
           instead of   delay-dependent (infinite-dimensional) feedback controllers \cite{jan}.
           
Aiming at designing delay-free feedback controllers whenever possible,
we characterized in \cite{wl22,yulin23,zhaolin24}  mild regularity conditions  for  
input-delayed general  nonlinear systems {\it without affine structure} to be globally or semigobally  
asymptotically stabilizable by memoryless state and/or output feedback. Due to the use of memoryless feedback, 
the obtained results in \cite{wl22,yulin23,zhaolin24} 
are only applicable to limited input delay. This is expected even in the case of  linear systems with input delay,  
 as illustrated by the simple example in \cite{zhaolin20}. 
         Under the linear growth  condition  that ensures the forward completeness
          of  nonlinear systems under consideration,  
  \cite{zhaolin21}   designed memoryless  feedback controllers 
   for a class of feedforward systems with large delays in the state and input,  
   More 
  recently,   \cite{YL24} developed a delay-free, non-identifier based adaptive 
  control strategy to globally stabilizes  a chain of integrators  perturbed by linearly growth feedforward systems 
    with  unknown  state/input delays and unknown parameters, 
    eliminating the need for estimating both unknown delays and parameters.
  Building on this progress,  the  work \cite{YL24b}  further developed a universal output feedback control method
   that is capable of coping with  unknown large delays in the state, input and output simultaneously, 
    for a class of time-delay feedforward systems.  
    It should be  
    stressed that the design of these memoryless controllers imposes,
     however,  stringent structural requirements on the time-delay nonlinear systems.

To circumvent the structural restrictions on  time-delay systems and the 
integral terms in the predictor based controllers,  numerous researchers have 
developed some delay compensation methods by virtue of  
sequential predictors or chain predictor.  
The underlying philosophy of these  
approaches is to construct a finite number of cascades of time-delay differential equations which estimate the state
  forward in time,  
   providing a prediction without involving 
    integral terms. In this way, they need no structural conditions on the time-delay nonlinear systems 
    but do impose the global Lipschitz continuity or the ISS conditions. 
  For example, 
    the early work \cite{bgb07} proposed a state feedback controller based on an asymptotic state prediction 
for  linear 
systems and 
 globally Lipschitz continuous (GLC) systems with input delay.
 \cite{mm17} introduced a new sequential predictors based state feedback to stabilize time-varying nonlinear  systems with input delay, which are GLC with respect to state variables and have an appropriate ISS property.
 The paper  \cite{caca16} studied feedback control using a cascade of predictors 
for a class of strict-feedback systems with large delays in the input and output  under certain GLC assumptions. 
In \cite{caca17},  a chain of predictors is proposed for stabilization of linear systems 
subject to  input, state, and output delays. In the case of time-varying linear systems,
 sequential predictors have also been designed under time-varying input delay and sampling \cite{wm19},  
 and continuous-discrete sequential observers were presented in \cite{mm20}. For bilinear systems with input delay, 
 \cite{bm21} presented a delay compensation scheme by designing a
 feedback stabilizer based on sequential predictors under a suitable ISS condition.
More recently,   \cite{pepe25} has proposed a chain of observer-based, GLC output feedback controller to compensate 
 constant input/output delays of a class of  retarded nonlinear systems that are GLC. Interestingly,  
  the analysis and synthesis approach given
 in \cite{pepe25} relies on neither the Lyapunov-Krasovskii functional method nor the Razumikhin theorem.
  Instead, it is based on a solution estimation of the retarded closed-loop system and an open-loop chain  
  observer based controller.   However, the method  works only for the zero initial condition of the observers and thus 
  raises a question on its effectiveness when facing any non-zero initial condition  as well as the
  robustness  issue  with respect to the initial condition of the proposed chain observers.

In this article, we develop a Lyapunov-Halanay method to address 
the GAS problem  
for nonlinear retarded systems subject to constant delays in input/output, by either state or output feedback 
based on sequential predictors.
Under the GLC condition  
and global exponential stabilizability of the retarded  
control system without input delay, 
we give a solution to the GAS problem via state feedback by designing 
a state feedback controller based on sequential predictors,  rendering
 the retarded closed-loop  system GAS.   In the case of output feedback, 
 we show that the GAS problem is also solvable by a dynamic output feedback controller 
 based on sequential predictors,   provided that the nonlinear 
 retarded system without output delay permits an global exponential observer.
Finally, we extend the obtained GAS results again by means of  
sequential-predictor based state and output feedback, respectively,
to a wider class of nonlinear retarded systems with input/output delays, which   
only  satisfy  global asymptotic stabilizability/observability and suitable ISS conditions.

The  paper is organized as follows: 
In Section II,   standing assumptions and some key lemmas are presented. 
Section III elaborates the design of state feedback controller based on  sequential predictors, including the stability analysis that establishes the global asymptotic stability of the closed-loop retarded system. Section IV gives a counterpart of state feedback,
i.e.,  an output feedback controller based on sequential predictors, 
as well as the stability analysis of the closed-loop retarded systems. 
Section V presents  extensions to  ISS retarded systems with input/output delays.  As an application,  
a pendulum system with state
and input/output delays is considered in Section VI, and simulation results  are given  to verify the effectiveness of the 
control laws.  Conclusions are given in Section VII.

 {\it Notation:}  Let $\mathbb{N}$ denote the set of 
natural numbers  and $\R^n$ be the $n$ dimensional Euclidean space. For $x\in \R^n$, $\|x\|$  
 is the Euclidean norm of $x$.
 Let  $C([-\delta,0], \R^n)$   
be  a space of continuous functions mapping from $[-\delta, 0]$ into $\R^n$, which is
 the Banach space 
 endowed with the supremum norm  $\|\cdot\|_c$ defined by $\|\phi\|_c=\sup_{\theta\in[-\delta,0]}\|\phi(\theta)\|$  for $\phi\in C([-\delta,0], \R^n)$. Throughout this paper, for $x_t\in C([-\delta,0], \R^n)$, $x_t(\theta)=x(t+\theta)$, $\forall\theta\in [-\delta,0]$. The
 notations
$z_{t}^i,e_{t}^i\in C([-\delta,0], \R^n)$ stand for 
$z_{t}^i(\theta):=z_{i}(t+\theta), e_{t}^i(\theta):=e_{i}(t+\theta)$,  $\forall\theta\in [-\delta,0]$,
$i\in \mathbb{N}$, and {\it are not the $i$-th power of the functionals $z_t$, $e_t$}.

\section{Standing Assumptions and Key Lemmas}

In this paper we consider a class of  nonlinear retarded 
systems with input and output delays described by the functional differential equation
\bea
&&\hskip -.2in\dot{x}(t) = f(x_t, u(t-d)) \label{s1} \\
&&\hskip -.2in 
y(t)=h(x(t-\tau)), \label{y}
\eea
where $x_t \in C([-\delta,0], \R^n)$ is the system state denoted as the history of segment at time $t$
and defined by $x_t(\theta)=x(t+\theta)$ for $\theta\in[-\delta,0]$,  $u(t) \in \R^p$  and $y(t)\in \R^q$ are the input and output at time instant $t$. 
 The positive constants $\delta ,d, \tau$ denote the time-delays in state, input and output, respectively, 
 which are known and can be arbitrarily large.  
The mappings $f: C([-\delta,0], \R^n) \times \R^p \to \R^n$ and $h: 
\R^n \to \R^q$ are globally Lipschitz continuous (GLC) with  $f(0,0) = 0$ and  $h(0) = 0$, i.e., there are 
 constants $L_f,L_h>0$, such that
\bea 
&&\hskip -.2in 
\| f(\phi, u) - f(\varphi, v)\| \leq L_f ( \| \phi- \varphi \|_c  +  \| u -v \|  ) \label{Lf} \\
&&\hskip -.2in 
\| h(x) - h(z)\| \leq L_h  \|x- z\|, \label{Lh}
\eea
 $\forall (\phi, u),  (\varphi, v) \in C([-\delta,0], \R^n) 
\times \R^p$ and  
$\forall x,z \in \R^n$. 
The GLC condition (\ref{Lf}) ensures the forward completeness of the system 
(\ref{s1}), but also allow us to compensate the large input delay 
by predictor based state and/or output feedback.
   The initial state of the retarded system (\ref{s1}) is $x_0=\psi\in C([-\delta,0], \R^n)$ 
   and the initial input is $u(\theta)=0$ for $\theta\in[-d-\tau,0]$.

Because of  the simultaneous presence of large delays in the state and input/output,
 design of  a feedback controller that renders the retarded  system (\ref{s1}) GAS 
 has been recognized as a challenging problem.  
 The purpose of this research is to find, if possible,  
both state and output feedback controllers based on sequential predictors,
 which tackle the problem for the nonlinear retarded system (\ref{s1}) under appropriate conditions.
   
   Throughout this paper, we make the following assumptions.

\begin{assumption} \label{ap1} \rm
For the retarded system (\ref{s1}) without input delay, i.e., $d=0$,
 there exists a 
GLC state feedback law $u = \alpha(x)$ with $\alpha(0)=0$,  such that the closed-loop system 
\bea
\dot{x}(t) = f(x_t, \alpha(x(t))) \label{s2}
\eea
is globally exponentially stable (GES). That is,  
$\|x(t)\|\leq a_1\|x_0\|_ce^{-a_2t}$,  $\forall x_0\in C([-\delta,0], \R^n)$, where $a_1$ and $a_2$ are positive real constants.
\end{assumption}

\begin{assumption} \label{ap2} \rm
For the input/output  
system (\ref{s1})-(\ref{y}) without output delay,
 i.e., $\tau=0$,  there  exists a GLC functional
$F: C([-\delta,0], \R^n) \times \R^p \times \R^q\to \R^n$ with a Lipschitz constant $L_{F}>0$ and $F(0,0,0)=0$, such that
\bea 
\dot{\hat{x}}(t) = F(\hat{x}_t, u(t-d), y(t)) \label{obs}
\eea
is a global  
exponential 
observer 
of the retarded 
 system (\ref{s1}), i.e., 
$\|x(t)-\hat x(t)\|\leq \lambda_1\|x_0-\hat{x}_0\|_ce^{-\lambda_2t}$, $\forall x_0,\hat{x}_0\in C([-\delta,0], \R^n)$, where 
$\lambda_1$ and $\lambda_2$ are positive 
constants. 
\end{assumption}

\begin{remark}\rm
While Assumption \ref{ap1} is a 
 stabilizability condition for the nonlinear retarded system
(\ref{s1}) with $d=0$ (in the absence of input delay) to be GES by the GLC controller $u=\alpha (x)$,  
  Assumption \ref{ap2} is  
  the existence  condition of an global exponential observer for the input/output system (\ref{s1})-(\ref{y})
  with $\tau=0$ (in the absence of output delay), which can be viewed as a nonlinear counterpart of the 
  hypothesis for
    time-delay linear systems \cite{caca17}.
    It reduces to the observability of the pair $(A,C)$  
    when the input/output delayed system (\ref{s1})-(\ref{y}) is  linear and involved  no state delay. 
    However, in the presence of state delay, characterizing the condition for the existence of an 
    observer is a nontrivial job.   This is true even for  time-delay linear systems. 
    On the other hand, 
it worth pointing out 
  that for a class of GLC retarded systems in a strict feedback form with input/output delays, i.e., 
  $\dot x_i(t)=x_{i+1}(t)+\phi_{i}(x^{1}_t,\cdots, x^{i}_t),
  \ i=1,\cdots,n,$
   with $x_{n+1}:=u(t-d)$ and $y(t)=x_1(t-\tau)$,
   there exists a GES observer, as illustrated in Remark \ref{rk2}.   
That is,  Assumption \ref{ap2} is fulfilled, 
    \end{remark}

To address the question of compensation of large delays in the input and output,
we review a number of important lemmas to be used in the sequel. 

\begin{lemma} \label{lm1}(\cite{halanay} Halanay inequality) \rm
Let $t_0$ and $\delta \ge 0$ be  real constants.  
If a $C^1$
function $w:[t_0-\delta, +\infty)\to\R_+$  satisfies
\bea
\dot w(t) \le-aw(t)+b \max_{\theta\in[-\delta,0]}w(t+\theta), \quad t\ge t_0,
\eea
where
$a>b\ge0$ are constants.
Then, there exists a constant $\lambda>0$ satisfying the equation $\lambda+ be^{\lambda \delta}=a$,
such that 
\bea w(t) \le \max_{\theta\in[t_0-\delta, t_0]} w(\theta)e^{-\lambda (t-t_0)}, \quad  t\ge t_0. \eea
\end{lemma}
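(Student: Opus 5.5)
The plan is a comparison argument. We show that the exponential profile $v(t)=M e^{-\lambda(t-t_0)}$, with $M:=\max_{\theta\in[t_0-\delta,t_0]}w(\theta)$, dominates $w$ on $[t_0,\infty)$.

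First we fix $\lambda$. Set $g(\lambda)=\lambda+be^{\lambda\delta}-a$. Then $g$ is continuous and strictly increasing on $[0,\infty)$, with $g(0)=b-a<0$ and $g(\lambda)\to\infty$ as $\lambda\to\infty$. Hence there is a unique $\lambda>0$ with $\lambda+be^{\lambda\delta}=a$. A direct computation then shows that $v$ satisfies the delay equation with equality:
\[
\dot v(t)=-\lambda v(t)=-av(t)+be^{\lambda\delta}v(t)=-av(t)+b\max_{\theta\in[-\delta,0]}v(t+\theta),
\]
because $v$ is decreasing, so the maximum over the window is attained at $t-\delta$. We also have $w\le M\le v$ on $[t_0-\delta,t_0]$, since $v\ge M$ there.

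To avoid problems with equality in the comparison, we perturb. For $\epsilon>0$ let $v_\epsilon=(M+\epsilon)e^{-\lambda(t-t_0)}$. Then $w<v_\epsilon$ on $[t_0-\delta,t_0]$. Suppose the strict inequality fails at some later time, and let $t_1>t_0$ be the first time with $w(t_1)=v_\epsilon(t_1)$. On $[t_1-\delta,t_1]$ we have $w\le v_\epsilon$, so $\max_\theta w(t_1+\theta)\le v_\epsilon(t_1-\delta)$. The hypothesis then gives
\[
\dot w(t_1)\le -aw(t_1)+b\,v_\epsilon(t_1-\delta)=-av_\epsilon(t_1)+be^{\lambda\delta}v_\epsilon(t_1)=\dot v_\epsilon(t_1).
\]
This alone does not produce a contradiction, since the first-touch argument needs a strict inequality. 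That is the main obstacle.

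The fix is to build strictness into the comparison function. Choose $\lambda'\in(0,\lambda)$, so that $\lambda'+be^{\lambda'\delta}<a$, and run the argument with $v_\epsilon'=(M+\epsilon)e^{-\lambda'(t-t_0)}$. At the first touching time $t_1$ the same computation gives
\[
\dot w(t_1)\le -av'_\epsilon(t_1)+be^{\lambda'\delta}v'_\epsilon(t_1)<-\lambda' v'_\epsilon(t_1)=\dot v'_\epsilon(t_1),
\]
where strictness uses $v'_\epsilon(t_1)>0$. But $w-v'_\epsilon$ is $C^1$, negative on $[t_0,t_1)$ and zero at $t_1$, so $\dot w(t_1)-\dot v'_\epsilon(t_1)\ge0$, a contradiction. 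Hence $w\le (M+\epsilon)e^{-\lambda'(t-t_0)}$ for every $\epsilon>0$ and every $\lambda'<\lambda$.

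Letting $\epsilon\to0$ and then $\lambda'\uparrow\lambda$ pointwise in $t$ gives $w(t)\le Me^{-\lambda(t-t_0)}$ for all $t\ge t_0$, which is the claimed estimate.
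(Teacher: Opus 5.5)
Your comparison argument is correct and complete. Taking $\lambda'<\lambda$, so that $\lambda'+be^{\lambda'\delta}<a$, gives the strict inequality $\dot w(t_1)<\dot v'_\epsilon(t_1)$ at the first touching time, and this works even when $b=0$ or $\delta=0$. The positivity $v'_\epsilon(t_1)>0$ follows from $w\ge0$ (hence $M\ge0$), and the limits $\epsilon\to0$, $\lambda'\uparrow\lambda$ recover the stated rate. The paper gives no proof of this lemma and only cites Halanay's classical inequality; your first-crossing comparison with the exponential solution of $\lambda+be^{\lambda\delta}=a$ is the standard argument for that result.
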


Next, we recall the notion of input-to-state stability (ISS) and related properties for the nonlinear
 retarded system 
\bea\label{gv}
\dot{x}(t)=g(x_t,v(t)), \quad x_0=\psi \in C([-\delta,0], \R^n),
\eea
where $g:C([-\delta,0];\R^n) \times \R^p  \to \R^n$ is 
Lipschitz continuous on bounded sets with $g(0,0)=0$, and  the input $v(t)\in \R^p$
 is a piecewise continuous and bounded function.

\begin{definition}(\cite{teel,pepe08,Ant23a}) \rm 
The nonlinear retarded 
system (\ref{gv}) is said to be input-to-state stable (ISS),  
 if there exist a   function $\beta (\cdot,\cdot)$ of  class $\mathcal{KL}$
  and a unction $\gamma (\cdot)$ of class $\mathcal{K}$,  such that  the solution
$x(t)$ of (\ref{gv}) exists  $\forall t\geq0$ and satisfies 
\bea\label{iss}
\|x(t)\|\leq \beta(\|\psi\|_c,t)+\gamma(\sup_{0\leq s\leq t}\|v(s)\|).
\eea
for any initial state $x_0=\psi$  and any locally essentially bounded input $v$.
\end{definition}

The following lemma is from Theorem 3.2 in  \cite{pepe08}.

\begin{lemma}\label{lm2} \rm
Assume that the functional $g:C([-\delta,0], \R^n) \times \R^p \to \R^n$ is globally Lipschitz continuous
 with $g(0,0)=0$. If the zero-input system  
\bea
\dot{x}(t)=g(x_t,0)
\eea
 is  globally exponentially stable,  
the 
retarded system (\ref{gv}) is ISS.
 \end{lemma}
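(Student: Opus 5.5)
We prove Lemma \ref{lm2} directly, using the global Lipschitz property of $g$ together with a variation-of-constants comparison. Write $L_g$ for the Lipschitz constant of $g$, so that $\|g(\phi,v)-g(\varphi,w)\|\le L_g(\|\phi-\varphi\|_c+\|v-w\|)$. Global Lipschitz continuity guarantees existence and uniqueness of solutions of (\ref{gv}) for all $t\ge 0$ and every piecewise continuous, bounded input $v$. It also means the zero-input system $\dot x(t)=g(x_t,0)$ generates a flow whose solutions depend Lipschitz-continuously on the initial segment. By hypothesis the zero-input system is GES: there exist $k,\mu>0$ such that $\|x(t)\|\le k\|\psi\|_c e^{-\mu t}$ for all initial data.

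The first step converts GES into a Lyapunov–Krasovskii functional. Pick $T>0$ large enough that $k e^{-\mu T}\le 1/2$ (adjusting for the sup-norm over $[-\delta,0]$ as needed). Define $V(\phi)=\int_0^{T}\|z^\phi(s)\|^2\,ds$ or, more conveniently, $V(\phi)=\sup_{s\ge 0}e^{\mu s/2}\|z^\phi_s\|_c$, where $z^\phi$ is the zero-input solution starting from $\phi$. The GES estimate gives $\|\phi\|_c\le V(\phi)\le k' \|\phi\|_c$. Gronwall's inequality applied to the difference of two zero-input trajectories shows they separate at most at rate $e^{L_g s}$. Combining this with exponential decay, $V$ is globally Lipschitz in $\phi$ with a constant $M$.

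The second step differentiates $V$ along (\ref{gv}) in the upper Dini sense. Along zero-input trajectories, $D^+V\le -\frac{\mu}{2}V$ by construction. With input present, over a short interval $h$ the true solution differs from the zero-input one by at most $L_g\int\|v\|\approx L_g h\|v(t)\|$, again by Gronwall. The Lipschitz bound on $V$ then yields
\[
D^+V(x_t)\le -\tfrac{\mu}{2}V(x_t)+M L_g\|v(t)\|.
\]
This step, establishing the Lipschitz property of $V$ uniformly and handling the Dini derivative with a merely piecewise continuous $v$, is the main technical obstacle. I would first try the supremum-type functional, since it makes the Lipschitz estimate direct.

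The final step integrates this differential inequality by a comparison argument:
\[
V(x_t)\le e^{-\mu t/2}V(\psi)+\tfrac{2ML_g}{\mu}\sup_{0\le s\le t}\|v(s)\|.
\]
Using the sandwich bounds on $V$, this gives
\[
\|x(t)\|\le k' e^{-\mu t/2}\|\psi\|_c+\tfrac{2ML_g}{\mu}\sup_{s\le t}\|v(s)\|,
\]
which is (\ref{iss}) with $\beta(r,t)=k're^{-\mu t/2}$ and linear gain $\gamma$. This proves that (\ref{gv}) is ISS, as asserted in Lemma \ref{lm2}, and in fact with exponential decay and a linear gain.
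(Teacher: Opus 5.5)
Your proof is correct, but it does not follow the paper's route. The paper gives no argument for Lemma~\ref{lm2}; it simply invokes Theorem~3.2 of \cite{pepe08}. You instead construct the Krasovskii functional $V(\phi)=\sup_{s\ge 0}e^{\mu s/2}\|z^\phi_s\|_c$ explicitly and run a Lipschitz, then Dini-derivative, then comparison argument. That makes the proof self-contained and gives more than the lemma asks for: exponential ISS with a linear gain and explicit constants. The citation only buys brevity.

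The one step you should write out is the global Lipschitz bound on $V$. It does not follow from ``Gronwall plus decay'' of $z^\phi-z^\varphi$. GES bounds that difference only in terms of $\|\phi\|_c+\|\varphi\|_c$, not $\|\phi-\varphi\|_c$, because GES provides no incremental (contraction-type) decay.

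What makes the bound work is truncation of the supremum.
\begin{itemize}
\item Set $\bar k=\max\{1,k\}e^{\mu\delta}$. Then $\|z^\phi_s\|_c\le \bar k e^{-\mu s}\|\phi\|_c$.
\item For $s\ge S_0:=\frac{2}{\mu}\ln\bar k$, the term $e^{\mu s/2}\|z^\phi_s\|_c$ is at most $\|\phi\|_c$, i.e.\ it is dominated by the $s=0$ term.
\item Hence $V(\phi)=\max_{s\in[0,S_0]}e^{\mu s/2}\|z^\phi_s\|_c$ for every $\phi$, on a window independent of $\phi$.
\item Gronwall on this fixed window gives $|V(\phi)-V(\varphi)|\le e^{(\mu/2+L_g)S_0}\|\phi-\varphi\|_c$.
\end{itemize}
This step uses that the GES bound is linear in $\|\psi\|_c$, which is exactly the hypothesis you have. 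A general $\mathcal{KL}$ bound would not give a $\phi$-independent window.

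For the derivative step, an increment form is cleaner than a pointwise Dini derivative. Since $x_{t+h}$ and $z^{x_t}_h$ coincide on $[-\delta,-h]$, you get
$V(x_{t+h})\le e^{-\mu h/2}V(x_t)+Me^{L_g h}L_g\int_t^{t+h}\|v(s)\|\,ds$.
Iterating this with $h\to 0$ yields your final estimate. It also handles piecewise continuous (indeed merely essentially bounded) inputs without any care about the value of $v(t)$ at discontinuities.
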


The notion of ISS can be extended to a more general class of retarded systems described by  
\bea\label{gvt}
\dot{x}(t)=g(x_t,v_t), \quad x_0=\psi \in C([-\delta,0], \R^n),
\eea
with $v_t\in C([-d,0], \R^p) $ and $d>0$ being a constant,
  while the extension of  ISS to 
 infinite-dimensional systems can be found, 
  for instance,  in  \cite{mw17,mw18,Ant23b,Ant24}.

The following definition is a variant form of uniformly globally ISS introduced in \cite{teel}.

\begin{definition}
\rm The nonlinear retarded 
system (\ref{gvt}) is said to be input-to-state stable (ISS)  
if there
exist a  function $\beta$ of class $\mathcal{KL}$  and a  function $\gamma$ of class $\mathcal{K}$, 
such that the solution $x(t)$ exists $\forall t\geq0$ and 
satisfies
\bea\label{iss}
\|x(t)\|\leq \beta(\|\psi\|_c,t)+\gamma(\sup_{0\leq s\leq t}\|v_s\|_c).
\eea
for any initial state $\psi$  and any input $v_t\in C([-d,0], \R^p) $.
\end{definition}

\begin{lemma}
\label{lm3} \rm
Consider the retarded system (\ref{gvt}). 
If there exist a $C^1$ Lyapunov function $V:\R^n\to \R_+$, and 
functions $\alpha_1(\cdot),\alpha_2(\cdot)$ of class of $\mathcal{K}_\infty$, such that
\bea\label{alp}
\alpha_1(\|x\|) \leq V(x) \leq \alpha_2(\|x\|), \qquad \forall x \in \R^n,
\eea
and along the solution of (\ref{gvt}),
\bea\label{dotv}
&&\hskip -.3in \dot V(x(t)) \leq -a V(x(t))  \nn\\
&&\hskip .35in + b \max_{\theta \in [-\delta, 0]} V(x(t+\theta)) + \gamma(\|v_t\|_c),
\eea
where $a>b > 0$ are real constants  and  $\gamma(\cdot)$ is a function of class $\mathcal{K}$,
then the retarded system (\ref{gvt}) is ISS.
\end{lemma}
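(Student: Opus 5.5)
The plan is to reduce Lemma \ref{lm3} to a Halanay-type comparison, in the spirit of Lemma \ref{lm1}, by treating the input term as a bounded forcing. Set $w(t)=V(x(t))$ for $t\ge -\delta$. Then (\ref{dotv}) gives
\begin{equation*}
\dot w(t)\le -a w(t)+b\max_{\theta\in[-\delta,0]}w(t+\theta)+\gamma(\|v_t\|_c).
\end{equation*}
Fix $t>0$ and let $\bar\gamma_t=\gamma(\sup_{0\le s\le t}\|v_s\|_c)$, which bounds the forcing on $[0,t]$. We compare $w$ on $[0,t]$ with the function
\begin{equation*}
m(s)=\max_{\theta\in[-\delta,0]}w(\theta)\,e^{-\lambda s}+\frac{\bar\gamma_t}{a-b},
\end{equation*}
where $\lambda>0$ solves $\lambda+be^{\lambda\delta}=a$, exactly as in Lemma \ref{lm1}. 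Existence of solutions on $[0,\infty)$ will follow from the same a priori bound: no finite escape can occur because $\|x\|$ stays bounded on every compact interval.

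The key step is the comparison $w(s)\le m(s)$ on $[0,t]$. By construction this holds, with $\le$, on $[-\delta,0]$, where we take $m$ extended by its formula. First perturb $m$ to $m_\epsilon=m+\epsilon$, and argue by contradiction. Let $s^*$ be the first time with $w(s^*)=m_\epsilon(s^*)$. Then $\max_\theta w(s^*+\theta)\le \max_\theta m_\epsilon(s^*+\theta)\le Me^{-\lambda s^*}e^{\lambda\delta}+\bar\gamma_t/(a-b)+\epsilon$, where $M=\max_{[-\delta,0]}w$. Substituting into the differential inequality gives
\begin{equation*}
\dot w(s^*)\le -\lambda Me^{-\lambda s^*}-(a-b)\epsilon<\dot m_\epsilon(s^*),
\end{equation*}
using $-a+be^{\lambda\delta}=-\lambda$ and $-a\bar\gamma_t/(a-b)+b\bar\gamma_t/(a-b)+\bar\gamma_t=0$. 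This contradicts the fact that $s^*$ is a first crossing. Letting $\epsilon\to0$ yields $w(s)\le m(s)$.

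It then remains to translate back using (\ref{alp}). Since $M\le\alpha_2(\|\psi\|_c)$, we get $\alpha_1(\|x(t)\|)\le \alpha_2(\|\psi\|_c)e^{-\lambda t}+\bar\gamma_t/(a-b)$. Using $\alpha_1^{-1}(r_1+r_2)\le\alpha_1^{-1}(2r_1)+\alpha_1^{-1}(2r_2)$, this gives
\begin{equation*}
\beta(r,t)=\alpha_1^{-1}\big(2\alpha_2(r)e^{-\lambda t}\big)\in\mathcal{KL},\qquad \tilde\gamma(r)=\alpha_1^{-1}\big(2\gamma(r)/(a-b)\big)\in\mathcal{K},
\end{equation*}
which is (\ref{iss}).

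I expect the main obstacle to be the comparison step. It requires care with the one-sided derivative and the first-crossing argument, since $w$ is only $C^1$ along solutions, and with the fact that $\bar\gamma_t$ depends on the horizon $t$. That dependence is handled by fixing $t$ first and noting that $\bar\gamma_t$ is nondecreasing in $t$.
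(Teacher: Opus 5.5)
Your proof is correct, but it takes a different route from the paper.

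\textbf{The paper's route.} The paper never integrates the inequality directly. It observes that whenever $V(x(t))\ge\max\{\frac{2b}{a+b}\max_{\theta}V(x(t+\theta)),\,\frac{4}{a-b}\gamma(\|v_t\|_c)\}$, inequality (\ref{dotv}) reduces to $\dot V\le-\frac{a-b}{4}V$. Since $\frac{2b}{a+b}<1$, this is exactly a Razumikhin-type ISS condition, and ISS follows by invoking Theorem 1 of Teel's Razumikhin/small-gain paper.

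\textbf{Your route.} You instead prove a forced Halanay inequality from scratch, by a first-crossing comparison against $Me^{-\lambda s}+\bar\gamma_t/(a-b)+\epsilon$. The computation at the crossing is sound: it uses $\lambda+be^{\lambda\delta}=a$, monotonicity of $\gamma$ to bound $\gamma(\|v_{s^*}\|_c)\le\bar\gamma_t$, and $s^*>0$ so that $w$ is differentiable there. The continuation argument for existence on $[0,\infty)$ is standard.

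\textbf{What each approach buys.} The paper's route is short. The cost is that it delegates the construction of the $\mathcal{KL}$ function, the gain, and forward completeness to an external theorem formulated for pointwise inputs, which must be read in the ``variant'' sense for $v_t$. It also yields no explicit decay rate.

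Your route is self-contained and reuses the mechanism of Lemma~\ref{lm1}, in keeping with the paper's ``Lyapunov--Halanay'' theme. It produces explicit data, $\beta(r,t)=\alpha_1^{-1}(2\alpha_2(r)e^{-\lambda t})$ and $\tilde\gamma(r)=\alpha_1^{-1}(2\gamma(r)/(a-b))$, with the same rate $\lambda$ as the Halanay lemma. For the quadratic $V_i$ used in Theorems~\ref{th1} and~\ref{th3}, this shows the error subsystems have exponentially decaying initial-condition terms, which is stronger than what the lemma claims.
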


{\it Proof:}\  When
\bea
V(x(t))\geq \max\Big\{ \frac{2b}{a+b}\max_{\theta \in [-\delta, 0]} V(x(t+\theta)), \frac{4}{a-b}\gamma(\|v_t\|_c) \Big\} \nn
\eea
it deduces from (\ref{dotv}) that 
\bea
\dot V(x(t)) \leq -\frac{a-b}{4} V(x(t))\leq -\alpha_3(\|x(t)\|), 
\eea
where $\alpha_3(\cdot)=\frac{a-b}{4} \alpha_2(\cdot)$.  

Observe  that $\frac{2b}{a+b}<1$. By Theorem 1 of \cite{teel}, the retarded system (\ref{gvt}) is ISS.  
This completes the proof. 
\hfill\qed

\begin{lemma} \label{lm4} \rm
 Consider a retarded system composed by 
 \bea
&&\hskip -.3in \dot{x}(t)=g_1(x_t,e_t),  \quad x_0=\psi  \in  C([-\delta_1,0],\R^n) \nn\\
&&\hskip -.28in \dot{e}(t)=g_2(e_t, x_t), \quad e_0=\phi \in  C([-\delta_2,0],\R^p), \label{g2}
 \eea
where $g_1:C([-\delta_2,0], \R^n) \times C([-\delta_1,0], \R^p)  \to \R^n$ and $g_2:C([-\delta_1,0], \R^p) \times C([-\delta_2,0], \R^n)  \to \R^p$
are  Lipschitz continuous on bounded sets 
with $g_1(0,0)=0$, $g_2(0,0)=0$.

 If there exist
functions $\beta_1(\cdot,\cdot)$, $\beta_2(\cdot,\cdot)$ of class $\mathcal{KL}$  
and a function $\gamma_1(\cdot)$ of class $\mathcal{K}$,  such that
\bea
&& \|x(t)\|\leq \beta_1(\|\psi\|_c,t)+\gamma_1(\sup_{0\leq s\leq t}\|e_s\|_c), \label{b1} \\
&& \|e(t)\|\leq \beta_2(\|\phi\|_c,t), \label{b2}
\eea
then there exists a $\mathcal{KL}$ function $\beta(\cdot,\cdot)$ such that
\bea
 \|(x(t), e(t))\|\leq \beta(\|(\psi,\phi)\|_c,t).
\eea
\end{lemma}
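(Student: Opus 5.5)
The argument is the usual cascade (ISS plus GAS) construction, run by hand with the two estimates (\ref{b1})-(\ref{b2}), so no Lyapunov function is needed. First I bound the input term in (\ref{b1}). Put $\bar\delta=\delta_2$, the length of the history window of $e$, and let $\hat\beta_2(r,s):=\beta_2(r,\max\{s,0\})$. Then for $s\ge 0$ and $\theta\in[-\bar\delta,0]$ we have two cases. If $s+\theta\ge0$, (\ref{b2}) gives $\|e(s+\theta)\|\le\beta_2(\|\phi\|_c,s+\theta)$. If $s+\theta<0$, then $\|e(s+\theta)\|\le\|\phi\|_c$. Hence
\bea
\|e_s\|_c\le \tilde\beta_2(\|\phi\|_c,s):=\max\{\beta_2(\|\phi\|_c,\max\{s-\bar\delta,0\}),\,\|\phi\|_c\,\mathbf 1_{s<\bar\delta}\}. \nn
\eea
To keep the bound continuous in $s$, I replace the indicator by $\|\phi\|_c\max\{1-s/\bar\delta,0\}$ together with $\beta_2(\|\phi\|_c,0)\ge\|\phi\|_c$; the latter inequality follows from (\ref{b2}) at $t=0$ after possibly enlarging $\beta_2$. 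The result is a class-$\mathcal{KL}$ bound $\|e_s\|_c\le\tilde\beta_2(\|\phi\|_c,s)$. In particular $\sup_{s\ge0}\|e_s\|_c\le\tilde\beta_2(\|\phi\|_c,0)$.

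Second, I use the standard ``restart at half time'' argument to get decay of $x$. Boundedness comes first: by (\ref{b1}), $\|x(t)\|\le\beta_1(\|\psi\|_c,0)+\gamma_1(\tilde\beta_2(\|\phi\|_c,0))=:\sigma(\|(\psi,\phi)\|_c)$ for all $t\ge0$, and this also bounds $\|x_{t}\|_c$ once $\sigma\ge$ identity. For decay, the system is time invariant, so estimate (\ref{b1}) may be applied on $[t/2,t]$ with initial data $(x_{t/2},e_{t/2})$. This step implicitly uses that (\ref{b1})-(\ref{b2}) hold for all initial conditions, and the lemma is meant that way. It gives
\bea
\|x(t)\|\le\beta_1(\|x_{t/2}\|_c,t/2)+\gamma_1\big(\sup_{s\ge t/2}\|e_s\|_c\big)
\le\beta_1(\sigma(r),t/2)+\gamma_1(\tilde\beta_2(r,t/2)), \nn
\eea
where $r=\|(\psi,\phi)\|_c$ and $x_{t/2}$ requires $t/2\ge\delta_1$ (for small $t$ I just use the boundedness bound). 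Both terms are class $\mathcal{KL}$ in $(r,t)$. Combining with $\|e(t)\|\le\beta_2(r,t)$, and taking the maximum with the bounded-region estimate for $t\le 2\delta_1$ (multiplied by a factor like $e^{2\delta_1-t}$ to keep the $\mathcal{KL}$ form), yields $\beta$.

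The main obstacle is the restart step. It needs either that (\ref{b1}) is a uniform ISS estimate valid from arbitrary initial times and data, which time invariance provides, or an equivalent causality argument. Without invoking time invariance, I would instead run the Sontag-style trick directly: split the supremum in (\ref{b1}) over $[0,t/2]$ and $[t/2,t]$, and use the semigroup property of solutions of (\ref{g2}). The remaining continuity/monotonicity issues in constructing $\tilde\beta_2$ and the final $\beta$ are routine.
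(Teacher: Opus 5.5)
Your proposal is correct and is essentially the paper's Appendix argument: a $\mathcal{KL}$ bound on the segments $e_s$, time invariance to restart the ISS estimate (\ref{b1}) at $t/2$ with data $(x_{t/2},e_{t/2})$, and a uniform bound on $\|x_{t/2}\|_c$. The differences are cosmetic. The paper first upgrades (\ref{b1}) to a segment estimate on $\|x_t\|_c$ via a cited result (Proposition 2 of the ISS survey), whereas you restart the pointwise estimate directly and use $\|x_{t/2}\|_c\le\max\{r,\sigma(r)\}$. That bound holds for every $t\ge 0$, since the part of the window before time $0$ is bounded by $\|\psi\|_c\le r$, so your separate small-$t$ patch is not needed.
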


{\it Proof:}\   See Appendix.

\section{State Feedback via Sequential Predictors
}

In this section,  we study the case of state feedback under the hypothesis that 
 the 
 state of (\ref{s1}) is measurable.
The following result constitutes one of the main contributions of this paper.

\begin{theorem}
\label{th1} \rm  
Suppose that the retarded system (\ref{s1}) with  input delay 
satisfies Assumption \ref{ap1}. Then,  the 
state feedback controller 
\bea \label{u}
u(t) = \alpha (z_m(t))
\eea
with the sequential predictors
\bea \label{sq}
&&\hskip -.33in \dot{z}_1(t) = f \big(z_{t}^1, u(t-d + \frac{d}{m} ) \big) \hskip -.02in - \hskip -.02in (L_f+1)\big(z_1 (t \hskip -.02in - \hskip -.02in \frac{d}{m} ) \hskip -.02in - \hskip -.02in x(t) \big) \nn\\
&&\hskip -.33in \dot{z}_2(t) = f \big(z_{t}^2, u(t-d + \frac{2d}{m} ) \big) \hskip -.02in - \hskip -.02in
(L_f+1) \big(z_2  (t \hskip -.02in - \hskip -.02in \frac{d}{m}  ) \hskip -.02in - \hskip -.02in z_1(t) \big) \nn\\
&&\vdots \nn\\
&&\hskip -.33in \dot{z}_m(t) = f (z_{t}^m, u(t) ) \hskip -.02in - \hskip -.02in
(L_f+1) \big(z_m  (t \hskip -.02in - \hskip -.02in \frac{d}{m}  ) \hskip -.02in - \hskip -.02in z_{m-1}(t) \big) 
\eea
renders  
the 
retarded 
system (\ref{s1}) globally asymptotically stable (GAS), 
where $z_{t}^i\in C([-\delta, 0], \R^n)$
 means  $z_{t}^i(\theta):=z_{i}(t+\theta)$ 
$\forall \theta\in [-\delta, 0]$, $i=1,\cdots,m,$ and the integer $m$
satisfies
\bea \label{m}
m >(L_f+1)^2d. 
\eea
\end{theorem}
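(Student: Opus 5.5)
Each predictor $z_i$ is meant to estimate the future state $x(t+id/m)$. I would therefore introduce the prediction errors
\[
e_1(t)=z_1(t-\tfrac{d}{m})-x(t),\qquad e_i(t)=z_i(t-\tfrac{d}{m})-z_{i-1}(t),\quad i=2,\dots,m .
\]
For $t$ large enough that all the delayed inputs are generated by the controller, I would differentiate these errors. Using (\ref{s1}) and (\ref{sq}) one gets, for example,
\[
\dot e_1(t)=f\big(z^1_{t-d/m},u(t-d)\big)-f\big(x_t,u(t-d)\big)-(L_f+1)\,e_1(t-\tfrac{d}{m}).
\]
The same input $u(t-d)$ appears in both $f$ terms, so it cancels through (\ref{Lf}). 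The difference $z^1_{t-d/m}-x_t$ is exactly the segment $e^1_t$. In general,
\[
\dot e_i(t)=f\big(z^i_{t-d/m},u(t-d+\tfrac{(i-1)d}{m})\big)-f\big(z^{i-1}_t,u(t-d+\tfrac{(i-1)d}{m})\big)-(L_f+1)\big(e_i(t-\tfrac{d}{m})-e_{i-1}(t)\big),
\]
and $z^i_{t-d/m}-z^{i-1}_t=e^i_t$. The error system is thus a cascade, and each $e_i$ obeys the scalar-gain delayed equation
\[
\dot e_i(t)=-(L_f+1)e_i(t)+(L_f+1)\big(e_i(t)-e_i(t-\tfrac dm)\big)+\Delta_i(t)+(L_f+1)e_{i-1}(t),\qquad \|\Delta_i\|\le L_f\|e^i_t\|_c .
\]

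Step 1 is the stability of each error subsystem, and I expect it to be the main obstacle. The term $e_i(t)-e_i(t-d/m)$ equals $\int_{t-d/m}^t\dot e_i$. Substituting the equation for $\dot e_i$ inside this integral produces terms bounded by $(L_f+1)\frac dm$ times a maximum of $\|e_i\|$ and $\|e_{i-1}\|$ over a window of length $d/m+\delta+d/m$. I would take $w=\|e_i\|^2$ (or $\|e_i\|$) and obtain
\[
\dot w\le-2(L_f+1)w+2\big(L_f+(L_f+1)^2\tfrac dm\cdot c\big)\max w+\text{input from }e_{i-1}.
\]
The Halanay condition $a>b$ is then exactly where $m>(L_f+1)^2d$ enters: it gives $(L_f+1)^2 d/m<1$, so $2L_f+2(L_f+1)^2d/m<2(L_f+1)$. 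I would have to track the constant $c$ carefully, possibly by working with $\|e_i\|$ and upper Dini derivatives instead of squares. With $e_{i-1}$ present, Lemma \ref{lm3} gives ISS of $e_i$ with respect to $e_{i-1}$, and $e_1$ is GES by Lemma \ref{lm1}. Going through the cascade with Lemma \ref{lm4}, or directly with Halanay since all bounds are linear, gives exponential convergence of $(e_1,\dots,e_m)$.

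Step 2 is the plant. Summing the errors telescopically gives
\[
z_m(t-d)=x(t)+\sum_i e_i(t+\text{shifts}),
\]
so $u(t-d)=\alpha(z_m(t-d))=\alpha(x(t)+\epsilon(t))$ with $\epsilon\to0$ exponentially. The closed loop becomes $\dot x=f(x_t,\alpha(x+\epsilon))=f(x_t,\alpha(x))+\rho$ with $\|\rho\|\le L_fL_\alpha\|\epsilon\|$. By Assumption \ref{ap1} and Lemma \ref{lm2}, $x$ is ISS with respect to $\epsilon$, and Lemma \ref{lm4} combines this with the error bounds to give GAS of $(x,e)$.

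The predictors $z_i$ are then recovered as $z_i=x+$ sums of errors, so they are also bounded and convergent. On the initial interval before the error dynamics hold, forward completeness from GLC (Gronwall) gives finite bounds that depend on the initial data. Stability of the full state $(x,z_1,\dots,z_m)$ follows.
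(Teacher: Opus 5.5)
Your route is the same as the paper's. You use the same errors up to a time shift of $d/m$, and your causal version is, if anything, cleaner. You use the same splitting $-(L_f+1)e_i(t-\tfrac dm)=-(L_f+1)e_i(t)+(L_f+1)\int_{t-d/m}^t\dot e_i(s)\,ds$, then Halanay and Lemma~\ref{lm3} for each $e_i$, Lemma~\ref{lm4} for the cascade, and Lemma~\ref{lm2} for the plant. Your Gronwall treatment of the start-up interval also fills in a point the paper glosses over.

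The genuine gap is the constant $c$ you deferred. It is not bookkeeping: it is exactly where the stated threshold fails. From your own error equation, for $s\in[t-\frac dm,t]$,
\[
\|\dot e_i(s)\|\le L_f\|e^i_s\|_c+(L_f+1)\|e_i(s-\tfrac dm)\|+(L_f+1)\|e_{i-1}(s)\| .
\]
So the coefficient of $\max\|e_i\|$ in $\int_{t-d/m}^t\|\dot e_i\|$ is $(2L_f+1)\frac dm$, not $(L_f+1)\frac dm$. In your notation, $c=\frac{2L_f+1}{L_f+1}$. The Halanay requirement $a>b$ then becomes $L_f+(L_f+1)(2L_f+1)\frac dm<L_f+1$, i.e. $m>(L_f+1)(2L_f+1)d$. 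Since $L_f>0$, this is strictly stronger than (\ref{m}). Switching to $\|e_i\|$ with upper Dini derivatives produces exactly the same condition, so that alternative does not rescue the bound.

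Following the paper will not close this either. Its estimate (\ref{e1cs}), and likewise (\ref{eics}), bounds $\|\dot e_1(s)\|$ by $(L_f+1)\max\|e_1\|$. That drops the $L_f\|e^1_s\|_c$ contribution coming from the $f$-difference in (\ref{e1}). So your argument, like the paper's, establishes GAS only under $m>(L_f+1)(2L_f+1)d$. Reaching (\ref{m}) as stated would need an additional idea: a sharper treatment of the delay-difference term than the crude integral bound, not just more careful tracking.

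The rest of your plan is sound:
\begin{itemize}
\item the cross term $2(L_f+1)e_i^Te_{i-1}$ costs only an arbitrarily small margin via Young's inequality, which the strict Halanay inequality absorbs;
\item the cascade step via Lemma~\ref{lm4} goes through as you describe;
\item the ISS of $\dot x=f(x_t,\alpha(x+\epsilon))$ on $t\ge d$ via Lemma~\ref{lm2}, combined with forward-completeness bounds on the initial interval, also goes through.
\end{itemize}
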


\begin{remark}\label{rk1} 
To better understand the construction of the sequential predictor (\ref{sq}), 
we examine the case  when the retarded system (\ref{s1}) has only
 point-wise time-delay, i.e.,
$\dot x(t)=f(x(t),x(t-\delta),u(t-d))$. In this case, the vector field $f$ in  (\ref{sq}) reduces to
\bea 
 f \big(z_{t}^i, u(t-d + \frac{id}{m} ) \big)= f \big(z_i(t),z_i(t-\delta), u(t-d + \frac{id}{m} ) \big) 
\eea
for $i=1,\cdots,m$, and the sequential predictor (\ref{sq}) becomes 
\bea \label{sq''}
&&\hskip -.2in \dot{z}_1(t) = f \big(z_1(t),z_1(t-\delta), u(t-d + \frac{d}{m} ) \big) \nn\\
 &&\hskip 1.in -(L_f+1)\big(z_1 (t \hskip -.02in - \hskip -.02in \frac{d}{m} ) \hskip -.02in - \hskip -.02in x(t) \big) \nn\\
&&\hskip -.2in \dot{z}_2(t) = f \big(z_2(t),z_2(t-\delta), u(t-d + \frac{2d}{m} ) \big) \nn\\
 &&\hskip 1.in -
(L_f+1) \big(z_2  (t \hskip -.02in - \hskip -.02in \frac{d}{m}  ) \hskip -.02in - \hskip -.02in z_1(t) \big) \nn\\
&&\hskip .2in \vdots \nn\\
&&\hskip -.2in \dot{z}_m(t) = f (z_m(t),z_m(t-\delta), u(t) ) \nn\\
 &&\hskip .7in -
(L_f+1) \big(z_m  (t \hskip -.02in - \hskip -.02in \frac{d}{m}  ) \hskip -.02in - \hskip -.02in z_{m-1}(t) \big).
\eea
\end{remark}

{\bf Proof of Theorem \ref{th1}:}\ \
Denote
\bea\label{z0t}
z_0(t) = x(t), \qquad z_{t}^0 = x_t
\eea 
and  let $z_i(t)$ be the estimate of  $z_{i-1}  (t + \frac{d}{m})$ for $i=1,\cdots,m$.
Define the 
estimation errors
\bea \label{er}
e_i(t) = z_i(t) - z_{i-1}  (t + \frac{d}{m}  ), \qquad i = 1, \cdots, m.
\eea
From (\ref{s1}), (\ref{u})-(\ref{sq}) and (\ref{z0t}),
  it follows that 
\bea \label{e1}
&& \hskip -.3in \dot{e}_1(t) = 
f\big(z_{t}^1, \alpha (z_m (t-d + \frac{d}{m} ) )\big) \nn\\
&& \hskip .2in - f\big(z_{t+\frac{d}{m}}^0, \alpha (z_m (t-d + 
\frac{d}{m} ) )\big) \nn\\
&& \hskip .2in - (L_f+1)e_1 (t - \frac{d}{m} ). 
\eea
In other words, $z_1(t)$ is the estimate of $z_0(t + \frac{d}{m})=x(t + \frac{d}{m})$.

Inductively,  it is deduced from (\ref{er}), (\ref{s1}), and (\ref{u})-(\ref{sq})  that  
\bea \label{ei}
&&\hskip -.3in \dot{e}_i(t) 
= f\big(z_{t}^i, \alpha (z_m (t-d + \frac{id}{m} ) )\big)
\hskip -.03in - \hskip -.03in
 f\big(z_{t+\frac{d}{m}}^{i-1}, \alpha (z_m (t-d+ \frac{id}{m} ) )\big) 
\nn\\
&&\hskip .in - (L_f+1)e_i (t - \frac{d}{m} ) 
\hskip -.02in + \hskip -.02in
 (L_f+1)e_{i-1}(t), \ \ i=2, \dots, m
\eea
where $e_0(t)=z_0(t) - x(t)= 0$.

For the error dynamics (\ref{e1}), consider the Lyapunov function 
\bea \label{v1}
V_1(e_1(t)) =  e_1^T(t) e_1(t).
\eea 
Using  the error signals 
\bea \label{e1t}
&&\hskip -.35in  e_1(t) = 
z_1(t) - z_0(t+\frac{d}{m})\nn\\
&&  
\hskip -.2in e_{t}^1:=z_{t}^1-z^0_{t+\frac{d}{m}}=z_{t}^1-x_{t+\frac{d}{m}}
\eea
with $x_{t+\frac{d}{m}}(\theta)=x(t+\frac{d}{m}+\theta)$, $\forall \theta\in [-\delta,0]$,
we  arrive at 
\bea \label{dv1}
&&\hskip -.3in \dot{V}_1(e_1(t)) 
\leq 2L_f \| e_1(t) \|  \| e_{t}^1 \|_c - 2(L_f+1)e_1^T(t)e_1 (t-\frac{d}{m}) \nn\\  
&&\hskip .28in \leq  2L_f  \| e_{t}^1 \|_c^2  - 2(L_f+1)\|e_1(t)\|^2 \nn\\
&&\hskip .43in + 2(L_f+1)\|e_1(t)\|\|e_1(t)-e_1 (t-\frac{d}{m})\| \nn\\
&&\hskip .28in \leq - 2(L_f+1)\|e_1(t)\|^2 +2L\| e_{t}^1 \|_c ^2  \nn\\
&&\hskip .43in + 2(L_f+1)\|e_1(t)\| \int_{t-\frac{d}{m}}^t \|\dot e_1 (s)\|\mathrm{d}s.
\eea 
By the GLC property of $f$ and (\ref{e1}), the following estimation is obtained via Cauchy-Schwarz inequality.
\bea \label{e1cs}
&&\hskip -.3in  \int_{t-\frac{d}{m}}^t \|\dot e_1 (s)\|\mathrm{d}s \leq  \int_{t-\frac{d}{m}}^t (L_f+1) \max_{\theta\in[-\delta-\frac{2d}{m},0]}\|e_1 (t+\theta)\|\mathrm{d}s \nn\\
&&\hskip .65in = \frac{d}{m}(L_f+1) \max_{\theta\in[-\delta-\frac{2d}{m},0]}\|e_1 (t+\theta)\|.
\eea
This, together with $\|e_1(t)\|\leq\max_{\theta\in[-\delta-\frac{2d}{m},0]}\|e_1 (t+\theta)\|$ and (\ref{dv1}), leads to
\bea \label{dv1'}
&&\hskip -.33in \dot{V}_1(e_1(t)) 
 \leq - 2(L_f+1)\|e_1(t)\|^2 +2L_f\| e_{t}^1 \|_c ^2  \nn\\
&&\hskip .29in + \frac{2 d}{m}(L_f+1)^2\big(\max_{\theta\in[-\delta-\frac{2d}{m},0]}\|e_1 (t+\theta)\| \big)^2  \nn\\
&&\hskip .13in = - 2(L_f+1)\|e_1(t)\|^2 +2L_f\| e_{t}^1 \|_c ^2  \nn\\
&&\hskip .29in + \frac{2 d}{m}(L_f+1)^2\max_{\theta\in[-\delta-\frac{2d}{m},0]}\|e_1 (t+\theta)\|^2  \nn\\
&&\hskip .0in \leq - 2(L_f+1)\|e_1(t)\|^2 +2L_f\max_{\theta\in[-\delta-\frac{2d}{m},0]}\|e_1 (t+\theta)\|^2  \nn\\
&&\hskip .29in + \frac{2 d}{m}(L_f+1)^2\max_{\theta\in[-\delta-\frac{2d}{m},0]}\|e_1 (t+\theta)\|^2  \nn\\
&&\hskip .0in = - 2(L_f+1)V_1(e_1(t))
\nn\\
&& \hskip .15in 
+2\big[L_f 
\hskip -.03in+  \hskip -.03in
(L_f+1)^2\frac{d}{m}\big] \hskip -.03in
 \max_{\theta\in[-\delta-\frac{2d}{m},0]}V_1(e_1(t+\theta )) 
\eea
Note that  (\ref{m}) implies $2(L_f+  (L_f+1)^2\frac{d }{m})<2(L_f+1)$. 
Thus, it follows from (\ref{dv1'}) and Lemma \ref{lm1} (Halanay inequality)
that the error dynamics (\ref{e1}) is 
GES, i.e., there exist real constants $a,\lambda>0$ such that
\bea\label{inee1}
\|e_1(t)\|\leq a \|\phi_1\|_c e^{-\lambda t}:=\beta_1(\|\phi_1\|_c, t),
\eea
where $\beta_1(s, t)=a s e^{-\lambda t}$, and $\phi_1 \in C([-\delta-\frac{2d}{m},0], \R^n)$ is the initial state of 
the retarded system (\ref{e1}).

For the $i$-th error dynamics
(\ref{ei}), $i=2, \dots, m$, 
consider the Lyapunov function
\bea \label{vi}
V_i(e_i(t)) =  e_i^T(t) e_i(t).
\eea 
Using the error signals 
\bea\label{eit}
e_i(t) = z_i(t) - z_{i-1}  (t + \frac{d}{m} ) \quad \mbox{and}\quad   e_{t}^i:=z_{t}^i-z_{t+\frac{d}{m}}^{i-1},
\eea 
together with (\ref{ei}), (\ref{vi}) 
and  Young inequality,  one has  
\bea \label{dvi}
&&\hskip -.22in  \dot V_i(e_i(t))\leq 2L_f \| e_i(t) \| \| e_{t}^i \|_c - 2(L_f+1)e_i^T(t)e_i (t-\frac{d}{m}) \nn\\
&&\hskip .5in+ 2(L_f+1)e_i^T(t)e_{i-1}(t)\nn\\  
&&\hskip .33in \leq 2L_f \| e_{t}^i \|_c^2  - 2(L_f+1)\|e_i(t)\|^2  \nn\\
&&\hskip .5in +2(L_f+1)\|e_i(t)\| \|e_i(t)-e_i (t-\frac{d}{m})\| \nn\\
&&\hskip .5in
+\epsilon_0\|e_i(t)\|^2+ \frac{(L_f+1)^2}{\epsilon_0}\|e_{i-1}(t)\|^2\nn\\
&&\hskip .33in \leq - 2(L_f+1)\|e_i(t)\|^2 +2L_f \| e_{t}^i \|_c^2  \nn\\
&&\hskip .5in +2(L_f+1)\|e_i(t)\| \int_{t-\frac{d}{m}}^t \|\dot e_i (s)\|\mathrm{d}s \nn\\
&&\hskip .5in
+\epsilon_0\|e_i(t)\|^2+ \frac{(L_f+1)^2}{\epsilon_0}\|e_{i-1}(t)\|^2,
\eea
for $i=2,\cdots,m$,  with
 $\epsilon_0= \frac{1}{2}(1- (L_f+1)^2\frac{d}{m})>0$. 

Akin to the derivation of (\ref{e1cs}), we have the following estimation:
\bea \label{eics}
&&\hskip -.2in \int_{t-\frac{d}{m}}^t \|\dot e_i (s)\|\mathrm{d}s \leq 
 \int_{t-\frac{d}{m}}^t  \hskip -.1in
(L_f+1) \max_{\theta\in[-\delta-\frac{2d}{m},0]}\|e_i (t+\theta)\|\mathrm{d}s \nn\\ 
&& \hskip .7in + \int_{t-\frac{d}{m}}^t (L_f+1) \max_{\theta\in[-\frac{d}{m},0]}\|e_{i-1} (t+\theta)\|\mathrm{d}s \nn\\
&& \hskip -.1in =  \frac{d}{m}(L_f+1)\max_{\theta\in[-\delta-\frac{2d}{m},0]}\|e_i (t+\theta)\| \nn\\
&& \hskip .2in + \frac{d}{m}(L_f+1)\max_{\theta\in[-\frac{d}{m},0]}\|e_{i-1} (t+\theta)\|.
\eea
Using (\ref{dvi}), (\ref{eics}) and Young inequality,  we obtain
\bea \label{dvi'}
&&\hskip -.3in \dot{V}_i(t) \leq - 2(L_f+1)\|e_i(t)\|^2 +2L_f \max_{\theta\in[-\delta-\frac{2d}{m},0]}\|e_i (t+\theta)\|^2  \nn\\
&&\hskip .2in +\frac{2 d}{m} (L_f+1)^2\max_{\theta\in[-\delta-\frac{2d}{m},0]}\|e_i (t+\theta)\|^2 \nn\\
&&\hskip .2in +\epsilon_0\|e_i(t)\|^2
+ \frac{d^2(L_f+1)^4}{m^2\epsilon_0}\max_{\theta\in[-\frac{d}{m},0]}\|e_{i-1} (t+\theta)\|^2 \nn\\
&&\hskip .2in
+\epsilon_0\|e_i(t)\|^2+ \frac{(L_f+1)^2}{\epsilon_0}\|e_{i-1}(t)\|^2  \nn\\
&&\hskip .0in \leq  - 2(L_f+1- \epsilon_0)V_i(e_i(t))
\nn\\
&&\hskip .2in 
+2\big(L_f+  (L_f+1)^2\frac{d}{m}\big) \max_{\theta\in[-\delta-\frac{2d}{m},0]}V_1(e_i(t+\theta )) \nn\\
&&\hskip .2in + c_1 \max_{\theta\in[-\frac{d}{m},0]}\|e_{i-1} (t+\theta)\|^2, 
\eea
where $c_1=\frac{d^2(L_f+1)^4}{m^2\epsilon_0}+\frac{(L_f+1)^2}{\epsilon_0}$.

In view of (\ref{m}) and $\epsilon_0= \frac{1}{2}(1- (L_f+1)^2\frac{d}{m})$, it is clear that $2(L_f+1- \epsilon_0)>2(L_f+  (L_f+1)\frac{d}{m})$. 
This, together with
(\ref{dvi'}) and Lemma \ref{lm3},  leads to the conclusion
 that the error dynamics (\ref{ei}) is  ISS with respect
to $e_{i-1,t}$.
 In other words,  there exist  functions $\beta_i(\cdot,\cdot)$ of class $\mathcal{KL}$ 
  and $\gamma_i(\cdot)$ of class $\mathcal{K}$,
such that
\bea\label{eiss}
\|e_{i}(t)\|\leq \beta_i(\|\phi_i\|_c,t)
\hskip -.03in + \hskip -.03in 
\gamma_i(\sup_{0\leq s\leq t}\|e_{i-1,s}\|_c), \  i=2,\cdots,m
\eea
where $\phi_i$ is the initial state of the retarded system (\ref{ei}).

By (\ref{inee1}), (\ref{eiss}) with $i=2$ and Lemma \ref{lm4}, 
there exists a $\mathcal{KL}$ function $\bar{\beta}_2(\cdot,\cdot)$ such that
\bea\label{inee2}
\|(e_{1}(t), e_2(t))\|\leq \bar{\beta}_2(\|(\phi_1,\phi_2)\|_c,t).
\eea

Recursively, by (\ref{inee2}), (\ref{eiss}) and  Lemma \ref{lm4}, there exists a $\mathcal{KL}$ function $\bar{\beta}_m(\cdot,\cdot)
$ such that
\bea \label{bet1}
\|e(t)\|\leq \bar{\beta}_m(\|\phi\|_c,t), 
\eea
where $e(t)=[e_1^T(t)  \cdots  e_m^T(t)]^T\in \R^{mn}$, and  $\phi=[\phi_1^T  \cdots  \phi_m^T]^T\in C([-\delta-d,0],\R^{mn})$ is the initial state
of the error dynamics (\ref{e1})-(\ref{ei}).

In view of  
 (\ref{er}), we have 
\bea
z_m(t) &=& e_m(t) + z_{m-1} (t + \frac{d}{m} ) \nn\\
z_{m-1}(t) &=& e_{m-1}(t) + z_{m-2} (t + \frac{d}{m} ) \nn\\
&\vdots& \nn\\
z_1(t) &=&  e_1(t) + x (t + \frac{d}{m} ).
\eea
As a consequence,  
\bea \label{zm}
z_m(t) =x(t + d)+ \sum_{j=0}^{m-1} e_{m-j} (t +  \frac{j}{m}d ),
\eea 
Putting (\ref{zm}) and (\ref{u}) together,  we can rewrite the retarded system (\ref{s1}) 
as 
\bea  \label{dx1}
\hskip -.0in \dot{x}(t) =  f (x_t, 
\alpha (x(t)+ v(t)  ) ),
\eea 
where $v(t)=\sum_{j=0}^{m-1} e_{m-j}
 (t -  \frac{m-j}{m}d )$.

From Assumption \ref{ap1}  and 
Lemma \ref{lm2}, it is concluded that the retarded system (\ref{dx1}) is 
ISS with respect to $v(t)$, 
i.e., there exist a function $\beta(\cdot,\cdot)$ of class $\mathcal{KL}$ 
 and a function $\gamma(\cdot)$ of class $\mathcal{K}$, such that
\bea \label{bet2}
\|x(t)\|\leq \beta(\|\psi\|_c,t)+\gamma\big(\sup_{0\leq s\leq t}\|v(s)\|\big), \quad \forall t\geq0,
\eea
where $\psi\in C([-\delta,0], \R^n)$ is the initial state of the retarded system (\ref{s1}).

By (\ref{bet1}), (\ref{bet2}), $v(t)=\sum_{j=0}^{m-1} e_{m-j}
 (t - \frac{m-j}{m}d )$ and Lemma \ref{lm4}, 
 there is a $\mathcal{KL}$ function $\bar{\beta}(\cdot,\cdot)$ such that
\bea\label{bet}
\|(x(t),e(t))\|\leq \bar{\beta}(\|(\psi,\phi)\|_c,t).
\eea 
This, in turn, implies that 
the closed-loop retarded system composed of (\ref{s1}) and  (\ref{u})-(\ref{sq})
 is GAS.
The proof of Theorem \ref{th1} is complete.
\hfill\qed

\begin{remark} \rm
It should be pointed out that the 
 bound $m$ in (\ref{m}) can be further reduced by selecting the gains  of 
sequential predictors (\ref{sq}) appropriately. In fact, 
by setting the gains of sequential predictors smaller, 
we can obtain a smaller  
$m$, as summarized in the conclusion below.
\end{remark}

\begin{theorem}
\label{th2} \rm
Under Assumption  \ref{ap1}, the 
state feedback controller
\bea \label{u'}
u(t) = \alpha (z_m(t))
\eea
with the sequential predictors
\bea \label{sq'}
&&\hskip -.33in \dot{z}_1(t) = f \big(z_{t}^1, u(t-d + \frac{d}{m} ) \big) \hskip -.02in - \hskip -.02in (L_f+\varepsilon)\big(z_1 (t \hskip -.02in - \hskip -.02in \frac{d}{m} ) \hskip -.02in - \hskip -.02in x(t) \big) \nn\\
&&\hskip -.33in \dot{z}_2(t) = f \big(z_{t}^2, u(t-d + \frac{2d}{m} ) \big) \hskip -.02in - \hskip -.02in
(L_f+\varepsilon) \big(z_2  (t \hskip -.02in - \hskip -.02in \frac{d}{m}  ) \hskip -.02in - \hskip -.02in z_1(t) \big) \nn\\
&&\vdots \nn\\
&&\hskip -.33in \dot{z}_m(t) = f (z_{t}^m, u(t) ) \hskip -.02in - \hskip -.02in
(L_f+\varepsilon) \big(z_m  (t \hskip -.02in - \hskip -.02in \frac{d}{m}  ) \hskip -.02in - \hskip -.02in z_{m-1}(t) \big)
\eea
globally asymptotically stabilizes  the retarded 
system (\ref{s1}), where 
$m\in  \mathbb{N}$ satisfies
\bea \label{m'}
m > (L_f+\varepsilon)^2d,
\eea
for any  
 $0<\varepsilon<1$.  
 \hfill \qed
\end{theorem}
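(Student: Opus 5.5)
The plan is to rerun the proof of Theorem \ref{th1} with the predictor gain $k:=L_f+\varepsilon$ in place of $L_f+1$. The steps, in order, would be as follows.
\begin{enumerate}
\item Define the errors $e_i(t)=z_i(t)-z_{i-1}(t+\frac{d}{m})$ exactly as in (\ref{er}). The dynamics (\ref{e1})--(\ref{ei}) are then unchanged, except that $L_f+1$ is replaced by $k$.
\item Take $V_i=e_i^Te_i$ and split $e_i(t-\frac{d}{m})=e_i(t)-\int_{t-d/m}^t\dot e_i(s)\,\mathrm{d}s$, bounding the integral by the Lipschitz constant of the error dynamics. This should give a Halanay-type inequality
\[
\dot V_1\le -2kV_1+2\big(L_f+c\,k^2\tfrac{d}{m}\big)\max_{\theta}V_1(e_1(t+\theta))
\]
for a modest constant $c$ (the paper's bookkeeping has $c=1$).
\item Lemma \ref{lm1} then gives exponential stability of $e_1$, and Lemma \ref{lm3} gives ISS of $e_i$ with respect to $e_{i-1}$.
\item The cascade is closed by Lemma \ref{lm4}. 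The expansion (\ref{zm}) expresses $z_m(t)$ as $x(t+d)$ plus a combination of delayed errors. Lemma \ref{lm2} then gives ISS of $\dot x=f(x_t,\alpha(x+v))$ with respect to $v$, and a final application of Lemma \ref{lm4} yields GAS.
\end{enumerate}

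The hard step is step 2: obtaining $a>b$ in Lemma \ref{lm1}. This requires $L_f+c\,k^2\frac{d}{m}<k$, i.e.\ $c\,k^2 d/m<\varepsilon$. So $m>c\,(L_f+\varepsilon)^2d/\varepsilon$ is needed, not the stated $m>(L_f+\varepsilon)^2d$. Equivalently, the stated bound yields only $k^2d/m<1$, which suffices only when $\varepsilon\ge 1$. I first tried to close this gap with sharper estimates, for instance bounding the cross term by $L_f(\|e\|^2+\|e_t\|_c^2)$. Every such estimate still leaves a requirement of the form $k(L_f+k)d/m<\varepsilon$. I therefore tested whether the gap is genuine.

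It is genuine: the statement as worded fails. Take $n=p=1$, no state delay, and $\dot x(t)=L\,u(t-d)$ with small $L>0$, so $L_f=L$. The feedback $\alpha(x)=-x$ satisfies Assumption \ref{ap1}. Because $f$ depends only on $u$ and both terms share the same input, the first error equation (\ref{e1}) reduces exactly to
\[
\dot e_1(t)=-k\,e_1(t-h),\qquad h=\frac{d}{m}.
\]
This delay equation is asymptotically stable if and only if $kh<\pi/2$. Now choose $\varepsilon=0.2$, $L=10^{-3}$, $m=1$ and $d=24$.
\begin{itemize}
\item The hypothesis holds: $(L_f+\varepsilon)^2d\approx0.97<1=m$.
\item The predictor error is unstable: $kh\approx4.8>\pi/2$, so $e_1$ grows for generic initial data.
\item The loop is therefore not GAS. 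Since $u(t)=\alpha(z_1(t))=-x(t+d)-e_1(t)$, we get $\dot x(t)=-L\,x(t)-L\,e_1(t-d)$, so $x$ is driven by a growing input.
\end{itemize}

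My proposal is thus to prove the theorem under the corrected hypothesis $m>(L_f+\varepsilon)^2d/\varepsilon$ along steps 1--4, and to record the example above showing that the hypothesis $m>(L_f+\varepsilon)^2d$, as stated, is insufficient.
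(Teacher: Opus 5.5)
Your conclusion is correct: the statement as worded is false, and your counterexample checks out. The functional $f(\phi,u)=Lu$ is admissible for any $\delta>0$: it is GLC with $L_f=L$ and satisfies $f(0,0)=0$. The feedback $\alpha(x)=-x$ gives $\dot x=-Lx$, so Assumption~\ref{ap1} holds with $a_1=1$, $a_2=L$.

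For $m=1$ the two $f$-terms in (\ref{e1}) cancel exactly, so $\dot e_1(t)=-k\,e_1(t-d)$ for $t\ge 0$. With the prescribed zero initial input, $x\equiv x(0)$ on $[0,d]$. Hence the initial segment $e_1(\theta)=z_1(\theta)-x(0)$, $\theta\in[-d,0]$, can be made arbitrary through the free predictor data. Since $kd\approx 4.82>\pi/2$, the equation $\lambda+ke^{-\lambda d}=0$ has a conjugate pair of roots in the open right half-plane, while $k^2d\approx 0.970<1=m$. Because $e_1=z_1-x(\cdot+d)$ is built from closed-loop states, its unboundedness alone rules out GAS.

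The same construction refutes the claim whenever $L_f+\varepsilon<2/\pi$: pick $\pi/(2k)<d/m<1/k^2$. For larger $m$, take $d=24m$ in your example; the first error equation is unchanged.

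The paper's only justification is that the proof is ``almost identical'' to that of Theorem~\ref{th1}, which is exactly your steps 1--4. It fails where you locate the failure. With gain $k$ in place of $L_f+1$, (\ref{dv1'}) gives Halanay coefficients $a=2k$ and $b=2(L_f+k^2d/m)$, so $a>b$ requires $k^2d/m<\varepsilon$. For $i\ge 2$ the slack $\epsilon_0$ would have to be $\frac12(\varepsilon-k^2d/m)$, which is positive only under the same condition. By contrast, (\ref{m'}) only gives $k^2d/m<1$.

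Your hedge on $c$ is also justified. On $[t-\frac dm,t]$, (\ref{e1}) yields
$\|\dot e_1(s)\|\le L_f\|e^1_s\|_c+k\|e_1(s-\frac dm)\|\le (L_f+k)\max_{\theta\in[-\delta-2d/m,0]}\|e_1(t+\theta)\|$,
not $k\max$ as written in (\ref{e1cs}). So the condition the argument actually delivers is $k(L_f+k)d/m<\varepsilon$, i.e.\ $m>(L_f+\varepsilon)(2L_f+\varepsilon)d/\varepsilon$. The same slip means that the proof of Theorem~\ref{th1}, once this estimate is corrected, establishes it only under $m>(L_f+1)(2L_f+1)d$ rather than (\ref{m}). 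With the corrected hypothesis, your steps 1--4 go through along the paper's lines.
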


The proof  is almost identical to 
that of Theorem \ref{th1} and thus is omitted.

When the retarded system (\ref{s1}) 
contains only  
a point-wise delay in the state, i.e.,
\bea
\dot x(t)=f(x(t),x(t-\delta), u(t-d)), \label{s4}
\eea
the following result is a direct consequence of Theorem \ref{th1} (respectively,  \ref{th2}) and Remark \ref{rk1}.

\begin{corollary} \label{cor1}
If the time-delay nonlinear system (\ref{s4}) satisfies Assumption  \ref{ap1},  the 
state feedback controller
\bea 
u(t) = \alpha (z_m(t))
\eea
with the sequential predictors
\bea \label{sq3}
&&\hskip -.2in \dot{z}_1(t) = f \big(z_1(t),z_1(t-\delta), u(t-d + \frac{d}{m} ) \big) \nn\\
 &&\hskip .3in -(L_f+\varepsilon)\big(z_1 (t \hskip -.02in - \hskip -.02in \frac{d}{m} ) \hskip -.02in - \hskip -.02in x(t) \big) \nn\\
&&\hskip -.2in \dot{z}_2(t) = f \big(z_2(t),z_2(t-\delta), u(t-d + \frac{2d}{m} ) \big) \nn\\
 &&\hskip .3in -
(L_f+\varepsilon) \big(z_2  (t \hskip -.02in - \hskip -.02in \frac{d}{m}  ) \hskip -.02in - \hskip -.02in z_1(t) \big) \nn\\
&&\vdots \nn\\
&&\hskip -.2in \dot{z}_m(t) = f (z_m(t),z_m(t-\delta), u(t) ) \nn\\
 &&\hskip .3in -
(L_f+\varepsilon) \big(z_m  (t \hskip -.02in - \hskip -.02in \frac{d}{m}  ) \hskip -.02in - \hskip -.02in z_{m-1}(t) \big)
\eea
globally asymptotically stabilizes  the time-delay system (\ref{s4}), where 
$m\in \mathbb{N}$ satisfies
\bea \label{m''}
m > (L_f+\varepsilon)^2d, \qquad 
\mbox{for any} \quad 
\varepsilon \in (0,1). 
\eea
\end{corollary}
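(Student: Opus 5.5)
The plan is to specialize Theorem \ref{th2} (equivalently, Theorem \ref{th1} when $\varepsilon=1$). The first step is to recast (\ref{s4}) as an instance of the functional system (\ref{s1}). Define $\tilde f: C([-\delta,0],\R^n)\times\R^p\to\R^n$ by $\tilde f(\phi,u):=f(\phi(0),\phi(-\delta),u)$. If $f$ is GLC in its three arguments, say $\|f(x_1,y_1,u)-f(x_2,y_2,v)\|\le L(\|x_1-x_2\|+\|y_1-y_2\|+\|u-v\|)$, then
\[
\|\tilde f(\phi,u)-\tilde f(\varphi,v)\|\le L\big(\|\phi(0)-\varphi(0)\|+\|\phi(-\delta)-\varphi(-\delta)\|+\|u-v\|\big)\le 2L\big(\|\phi-\varphi\|_c+\|u-v\|\big).
\]
Hence (\ref{Lf}) holds for $\tilde f$ with some constant, and we take $L_f$ to be the constant appearing in (\ref{Lf}) for this $\tilde f$ (the convention used in Corollary \ref{cor1}). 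Clearly $\tilde f(0,0)=f(0,0,0)=0$.

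The second step is to check the hypotheses. Assumption \ref{ap1} for (\ref{s4}) is, verbatim, Assumption \ref{ap1} for $\dot x(t)=\tilde f(x_t,u(t-d))$: the closed loop $\dot x=\tilde f(x_t,\alpha(x(t)))$ coincides with $\dot x=f(x(t),x(t-\delta),\alpha(x(t)))$, so it is GES with the same constants $a_1,a_2$. Evaluating $\tilde f$ at the segments $z^i_t$ gives $\tilde f(z^i_t,u(t-d+\tfrac{id}{m}))=f(z_i(t),z_i(t-\delta),u(t-d+\tfrac{id}{m}))$, exactly as noted in Remark \ref{rk1}. Therefore the predictor chain (\ref{sq'}) built from $\tilde f$ is literally (\ref{sq3}), and the controller $u=\alpha(z_m)$ is unchanged.

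The third step invokes the theorems. With $m>(L_f+\varepsilon)^2 d$ and $\varepsilon\in(0,1)$, Theorem \ref{th2} yields GAS of the closed loop. For $\varepsilon=1$ the same conclusion is Theorem \ref{th1}.

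The only point requiring care is the Lipschitz constant, namely making sure the $L_f$ in (\ref{m''}) is one for which (\ref{Lf}) holds for $\tilde f$ in the sup-norm. If $L_f$ were instead read as the Lipschitz constant of $f$ in its pointwise arguments, one must replace it by $2L_f$ or use the sharper bound above. Beyond that, the proof of Theorem \ref{th2} runs unchanged: the Halanay estimate (\ref{dv1'}), the ISS cascade through Lemmas \ref{lm3} and \ref{lm4}, and the ISS of (\ref{dx1}) via Lemma \ref{lm2} all use only (\ref{Lf}) and Assumption \ref{ap1}.
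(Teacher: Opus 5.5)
Your reduction is the same as the paper's. The paper derives the corollary in one line from Theorem \ref{th2} via Remark \ref{rk1}: with $\tilde f(\phi,u)=f(\phi(0),\phi(-\delta),u)$, Assumption \ref{ap1} and the chain (\ref{sq3}) carry over verbatim. Your care about which Lipschitz constant enters (\ref{m''}) is appropriate. The gap is your assertion that the proof of Theorem \ref{th2} ``runs unchanged'' from that of Theorem \ref{th1}. Put $k=L_f+\varepsilon$. Even granting the estimate (\ref{e1cs}) with $L_f+1$ replaced by $k$, the analogue of (\ref{dv1'}) is
$\dot V_1\le -2kV_1+2\big(L_f+k^2\tfrac{d}{m}\big)\max_{\theta\in[-\delta-\frac{2d}{m},0]}V_1(e_1(t+\theta))$.
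Lemma \ref{lm1} then needs $L_f+k^2\frac{d}{m}<k$, i.e.\ $m>(L_f+\varepsilon)^2d/\varepsilon$. For $\varepsilon=1$ this is (\ref{m}). For $\varepsilon\in(0,1)$, however, the Halanay margin shrinks from $1$ to $\varepsilon$, and (\ref{m''}) does not supply it. Moreover, (\ref{e1cs}) itself undercounts: by (\ref{e1}), $\|\dot e_1(s)\|\le L_f\|e^1_s\|_c+k\|e_1(s-\frac{d}{m})\|$. So the factor is $L_f+k$, and the requirement becomes $m>(L_f+\varepsilon)(2L_f+\varepsilon)d/\varepsilon$.

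This is not merely a weakness of the argument: the corollary is false as stated. Take $n=p=1$, any $\delta>0$, and $f(x,x_\delta,u)=0.1x+0.1u$, so (\ref{Lf}) holds with $L_f=0.1$. Take $\alpha(x)=-2x$, which gives the GES loop $\dot x=-0.1x$. Let $\varepsilon=0.01$, $d=50$ and $m=2>(L_f+\varepsilon)^2d=0.605$. The shifted state $w(t)=x(t+25)$ obeys $\dot w(t)=0.1w(t)+0.1u(t-25)$. Hence the first predictor error $e_1(t)=z_1(t)-x(t+25)$ satisfies exactly $\dot e_1(t)=0.1e_1(t)-0.11e_1(t-25)$ for $t\ge0$. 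Its characteristic function $p(s)=s-0.1+0.11e^{-25s}$ has $p(0)=0.01>0$ and $p(0.01)\approx-0.0043<0$, so it has a real root $s^*\in(0,0.01)$. Choose $\psi=0$; then $x\equiv0$ on $[0,25]$, since $u\equiv0$ on $[-d,0]$. Choose $z_1(\theta)=e^{s^*\theta}$ on $[-25,0]$. Then $e_1(t)=e^{s^*t}\to\infty$, so the closed loop is not GAS. Your reduction itself is fine, but it can only deliver the corollary under a strengthened bound such as $m>(L_f+\varepsilon)(2L_f+\varepsilon)d/\varepsilon$. The constant $\epsilon_0$ in (\ref{dvi}) must then be redefined as $\frac12\big(\varepsilon-(L_f+\varepsilon)(2L_f+\varepsilon)\frac{d}{m}\big)$.
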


 \section{Output Feedback via Sequential Predictors}

When the state $x(t)$ is unmeasurable and the delayed output
$y(t)=h(x(t-\tau))$ is the only measurable signal,  it is necessary to investigate
 the stabilization problem by measurement feedback.  In this section, we present an output feedback control scheme
 based on sequential predictors  to compensate large input/output delays in the retarded system (\ref{s1})-(\ref{y}).
 The following result 
  constitutes another contribution of this article

\begin{theorem}
\label{th3} \rm  
If  the retarded system (\ref{s1})-(\ref{y}) with input/output delays satisfies Assumptions \ref{ap1} and \ref{ap2},
there is an output feedback controller
\bea \label{uo}
u(t) = \alpha (\hat{z}_m(t))
\eea
with the sequential predictors
\bea 
&&\hskip -.45in \dot{\hat{z}}_0(t) = F\big(\hat{z}_{t}^0, u(t-d-\tau), y(t)\big) \label{z0}\nn\\
&&\hskip -.45in \dot{\hat{z}}_1(t) = F \big(\hat{z}_{t}^1, u(t-d-\tau + \frac{d+\tau}{m}), h(\hat{z}_1(t))\big) \nn\\
&&\hskip .0in  - (L_{F}+L_{F}L_{h}+1)\big(\hat{z}_1  (t-\frac{d+\tau}{m}  ) - \hat{z}_0(t) \big) \nn\\
&&\hskip -.45in \dot{\hat{z}}_2(t) = F \big(\hat{z}_{t}^2, u(t-d-\tau + \frac{2(d+\tau)}{m} ), h(\hat{z}_2(t)) \big) \nn\\
&&\hskip .0in  - 
(L_{F}+L_{F}L_{h}+1) \big(\hat{z}_2  (t-\frac{d+\tau}{m} ) - \hat{z}_1(t) \big) \nn\\
&&\hskip -.05in \vdots \nn\\
&&\hskip -.45in \dot{\hat{z}}_m(t) = F (\hat{z}_{t}^m, u(t), h(\hat{z}_m(t)) ) \nn\\
&&\hskip .0in  - 
(L_{F}+L_{F}L_{h}+1) \big(\hat{z}_m  (t-\frac{d+\tau}{m}  ) - \hat{z}_{m-1}(t) \big)  \label{sqo}
\eea
such that the time-delay closed-loop system composed of  (\ref{s1})-(\ref{y}) and 
(\ref{uo})-(\ref{sqo})
is  globally asymptotically stable, where 
$\hat{z}_{t}^i\in C([-\delta, 0], \R^n)$, $i=1,\cdots,m$ with  $\hat{z}_{t}^i(\theta):=\hat{z}_{i}(t+\theta)$,
$\forall \theta\in [-\delta, 0]$, and
$m\in \mathbb{N}$ satisfies
\bea \label{mo}
m >(L_{F}+L_{F}L_{h}+1)^2(d+\tau).
\eea
\end{theorem}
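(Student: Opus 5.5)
We follow the proof of Theorem \ref{th1}, with the observer $\hat z_0$ playing the role of the measured state. Set $D:=d+\tau$, $\bar L:=L_F+L_FL_h+1$, and $\bar x(t):=x(t-\tau)$. Then $\dot{\bar x}(t)=f(\bar x_t,u(t-d-\tau))$ and $y(t)=h(\bar x(t))$, so $\bar x$ is a trajectory of (\ref{s1}) driven by $u(t-D)$ with an undelayed output. By Assumption \ref{ap2}, $\hat z_0$ is a global exponential observer of $\bar x$, i.e.
\[
\|e_0(t)\|\le\lambda_1\|e_0^0\|_c e^{-\lambda_2 t},\qquad e_0(t):=\hat z_0(t)-\bar x(t).
\]

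The first step is to define the prediction errors
\[
e_i(t):=\hat z_i(t)-\hat z_{i-1}\big(t+\tfrac{D}{m}\big),\qquad i=1,\dots,m .
\]
The key observation is that $\hat z_{i-1}$ itself satisfies an observer equation, but one driven by its own predicted output $h(\hat z_{i-1})$ rather than by $y$. Time-shifting the equation for $\hat z_{i-1}$ by $D/m$ makes its input argument $u(t-D+iD/m)$, which matches the input in the equation for $\hat z_i$. Subtracting the two equations then gives
\[
\dot e_i(t)=F\big(\hat z^i_t,u_i,h(\hat z_i(t))\big)-F\big(\hat z^{i-1}_{t+D/m},u_i,h(\hat z_{i-1}(t+\tfrac{D}{m}))\big)-\bar L\,e_i\big(t-\tfrac{D}{m}\big)+\bar L\,e_{i-1}(t),
\]
where $u_i:=u(t-D+iD/m)$. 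For $i=1$ the forcing term $\bar L e_0(t)$ is not zero, since $\hat z_0$ is driven by $y$ and not by $h(\hat z_0)$; the difference is exactly the observer output-injection mismatch. I would handle this by treating the $\hat z_0$ equation separately: for $i=1$, the last term becomes $\bar L e_0(t)$ plus a term $L_F\|y(t+D/m)-h(\hat z_0(t+D/m))\|\le L_FL_h\|e_0(t+D/m)\|$. Both are exponentially decaying forcing terms.

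In either case, GLC of $F$ and $h$ bounds the $F$-difference by $(L_F+L_FL_h)\|e^i_t\|_c=(\bar L-1)\|e^i_t\|_c$. This plays exactly the role of $L_f$ in (\ref{dv1}), with $\bar L$ playing the role of $L_f+1$. Repeating the computation (\ref{dv1})--(\ref{dvi'}) with $V_i=e_i^Te_i$ gives three ingredients:
\begin{itemize}
\item the splitting $e_i(t)-e_i(t-D/m)=\int_{t-D/m}^t\dot e_i(s)\,ds$;
\item the bound $\|\dot e_i\|\le\bar L\max\|e_i\|+\bar L\max\|e_{i-1}\|$ on a window of length $\delta+2D/m$;
\item Young's inequality with $\epsilon_0=\tfrac12(1-\bar L^2D/m)$.
\end{itemize}
Together these give a Halanay/ISS inequality of the form
\[
\dot V_i\le-2(\bar L-\epsilon_0)V_i+2\big(\bar L-1+\bar L^2\tfrac{D}{m}\big)\max V_i+c\max\|e_{i-1}\|^2 .
\]
Condition (\ref{mo}) guarantees $\bar L^2D/m<1$, so the decay rate dominates the delayed term. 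Lemma \ref{lm3} then gives ISS of $e_i$ with respect to $e_{i-1,t}$ (and, for $i=1$, with respect to the shifted $e_0$). Cascading with the GES bound on $e_0$ via Lemma \ref{lm4} repeatedly yields a $\mathcal{KL}$ bound on $(e_0,\dots,e_m)$.

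Next, telescoping the definition of $e_i$ as in (\ref{zm}) gives
\[
\hat z_m(t)=\hat z_0(t+D)+\sum_{j=0}^{m-1}e_{m-j}\big(t+\tfrac{j}{m}D\big)=x(t+d)+e_0(t+D)+\sum_{j=0}^{m-1} e_{m-j}\big(t+\tfrac{j}{m}D\big).
\]
Substituting this into $u(t-d)=\alpha(\hat z_m(t-d))$, the plant becomes
\[
\dot x=f\big(x_t,\alpha(x(t)+v(t))\big),
\]
where $v$ is a finite sum of delayed error signals. Assumption \ref{ap1} and Lemma \ref{lm2} give ISS of the plant with respect to $v$, and a final application of Lemma \ref{lm4} yields GAS of $(x,e_0,\dots,e_m)$. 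This transfers to the original coordinates $(x,\hat z_0,\dots,\hat z_m)$ because each $\hat z_i$ is $x$ shifted plus a sum of errors.

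The main obstacle I expect is the error equations for $i\ge1$. The shift by $D/m$ makes the $\hat z_{i-1}$ term depend on future values over a bounded window, so the "initial state" of each error system must be taken on an extended interval $[-\delta-2D/m,0]$, as in the paper. The $i=1$ mismatch between $y$ and $h(\hat z_0)$ also has to be absorbed. I would attempt to absorb it as an additional ISS input in Lemma \ref{lm3} rather than through the coupling constant. If that fails, I would fall back on bounding it directly by the exponentially decaying $e_0$.
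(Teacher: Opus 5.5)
Your architecture matches the paper's own proof: GES of $e_0$ from Assumption \ref{ap2}, quadratic $V_i$ with Lemma \ref{lm3} for the prediction errors, cascading via Lemma \ref{lm4}, the telescoped identity for $\hat z_m$, and ISS of the plant from Assumption \ref{ap1} and Lemma \ref{lm2}. The step that fails is your second ingredient, $\|\dot e_i\|\le\bar L\max\|e_i\|+\bar L\max\|e_{i-1}\|$. Your own error equation gives $\|\dot e_i(s)\|\le(\bar L-1)\|e^i_s\|_c+\bar L\|e_i(s-\frac{D}{m})\|+\bar L\|e_{i-1}(s)\|$. The first two terms add in general, so the coefficient of $\max\|e_i\|$ is $2\bar L-1$, not $\bar L$; the Lipschitz part of the $F$-difference has been dropped. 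Inserting $2\bar L-1$ into $2\bar L\|e_i(t)\|\int_{t-D/m}^t\|\dot e_i(s)\|\,ds$ makes the delayed coefficient $2\big(\bar L-1+\bar L(2\bar L-1)\frac{D}{m}\big)$. Then $a>b$ in Lemma \ref{lm1} or \ref{lm3} requires $1-\epsilon_0>\bar L(2\bar L-1)\frac{D}{m}$, hence at least $m>\bar L(2\bar L-1)(d+\tau)$. Since $\bar L(2\bar L-1)=\bar L^2+\bar L(\bar L-1)>\bar L^2$, condition (\ref{mo}) does not provide this. For $L_F=L_h=1$, i.e.\ $\bar L=3$, (\ref{mo}) asks $m>9(d+\tau)$ while the argument needs $m>15(d+\tau)$. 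So your sentence ``(\ref{mo}) guarantees $\bar L^2D/m<1$, so the decay rate dominates the delayed term'' is exactly where the proof breaks.

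The paper's estimates (\ref{ine1}) and (\ref{ine2}), and (\ref{e1cs}) in the proof of Theorem \ref{th1} that you modeled on, make the same omission, so copying them does not close the gap. With the correct constant and, e.g., $\epsilon_0=\frac12\big(1-\bar L(2\bar L-1)\frac{D}{m}\big)$, the rest of your argument goes through essentially unchanged. What it then proves is GAS for every $m>\bar L(2\bar L-1)(d+\tau)$. That suffices for the existence of a stabilizing controller of the stated form, but not for every $m$ allowed by (\ref{mo}). A smaller point: for $i=1$ there is no $+\bar L e_0(t)$ term. After the $\frac{D}{m}$-shift, the term $\bar L e_{i-1}(t)$ comes from the correction $-\bar L\big(\hat z_{i-1}(t-\frac{D}{m})-\hat z_{i-2}(t)\big)$ in the $\hat z_{i-1}$ equation, and the $\hat z_0$ equation has no such correction. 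The only $e_0$-forcing in $\dot e_1$ is the output mismatch, bounded by $L_FL_h\|e_0(t+\frac{D}{m})\|$, which the paper treats as the ISS input in (\ref{DV12}). Carrying the spurious term only enlarges your upper bounds, so it is harmless, but your explanation of where it comes from is wrong.
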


{\it Proof:}\   Following 
the proof of Theorem \ref{th1}, we define 
 \bea
z_0(t)=x(t-\tau), \qquad z^0_t=x_{t-\tau}
\eea 
 and the estimate errors 
\bea
&&\hskip -.3in \tilde{e}_0(t)=\hat{z}_0(t)- z_0(t)
\label{e0}\\
&&\hskip -.3in  \tilde{e}_i(t)=\hat{z}_i(t)-\hat{z}_{i-1}(t+\frac{d+\tau}{m}), \quad i=1,\cdots,m. \label{ei'}
\eea
In view of  (\ref{s1}),  (\ref{uo}), and the first equation of (\ref{sqo}), it is not difficult to verify that
\bea
&& \hskip -.53in \dot{\tilde{e}}_0 (t) = F\big(\hat{z}_{t}^0, u(t-d-\tau), y(t)\big) \hskip -.03in - \hskip -.03in f(z_{t}^0, u(t-d-\tau))
\label{de0}
\eea
where 
$y(t)=h(x(t-\tau))=h(z_0(t))$.

From Assumption \ref{ap2}, it is concluded that 
$\hat{z}_0(t)$ generated by the first predictor of (\ref{sqo})
is a global 
exponential estimation of $z_0(t)=x(t-\tau)$. In other words, 
the error dynamics 
(\ref{de0}) is  
GES, i.e., there exist constants $a_0,\lambda_0>0$ such that
\bea\label{inee0}
\|\tilde{e}_0(t) \|\leq a_0 \|\tilde{\phi}_0\|_c e^{-\lambda_0 t},
\eea
where  
$\tilde{\phi}_0 \in C([-\delta-\tau, 0],\ \R^n)$ is the initial state of the  
nonlinear retarded system (\ref{de0}) with input delay.

From (\ref{inee0}), it is clear that 
\bea\label{inee0'}
\hskip -.03in \|
 \tilde{e}_0(t+\frac{d+\tau}{m})\| \hskip -.02in \leq \hskip -.02in  a_0 e^{-\lambda_0 \frac{d+\tau}{m}} \|\tilde{\phi}_0\|_c e^{-\lambda_0 t} \hskip -.03in := \hskip -.03in \tilde{\beta}_0(\|\tilde{\phi}_0\|_c, t),
\eea
where $\tilde{\beta}_0(s, t)=a_0e^{-\lambda_0 \frac{d+\tau}{m}} se^{-\lambda_0 t}$.

Similarly, it is deduced from the second predictor of (\ref{sqo}) and (\ref{ei'}) that 
\bea\label{de1'}
&&\hskip -.25in \dot{\tilde{e}}_1(t)=
 F \big(\hat{z}_{t}^1, \alpha  (\hat{z}_m (t-d-\tau + \frac{d+\tau}{m} ) ), h(\hat{z}_1(t))\big) \nn\\
&& \hskip .0in - F\big(\hat{z}_{t+\frac{d+\tau}{m}}^0, \alpha (\hat{z}_m (t-d-\tau+\frac{d+\tau}{m})), y(t+\frac{d+\tau}{m})\big) \nn\\
&& \hskip .0in  - (L_{F}+L_{F}L_{h}+1) \tilde{e}_1 (t-\frac{d+\tau}{m} ).
\eea
Using the error signals 
 \bea
\tilde{e}_1(t)=\hat{z}_1(t)-\hat{z}_{0}(t+\frac{d+\tau}{m}), \quad \tilde{e}_{t}^1=\hat{z}_{t}^1-\hat{z}_{t+\frac{d+\tau}{m}}^0,
\eea 
and the GLC property of $F,h$,  (\ref{y}), and (\ref{e0})-(\ref{ei'}), we have
\bea \label{f1j}
&&\hskip -.1in \|F \big(\hat{z}_{t}^1, \alpha  (\hat{z}_m (t-d-\tau + \frac{d+\tau}{m} ) ), h(\hat{z}_1(t))\big) \nn\\
&& \hskip .0in - F\big(\hat{z}_{t+\frac{d+\tau}{m}}^0, \alpha (\hat{z}_m (t-d-\tau+\frac{d+\tau}{m})), y(t+\frac{d+\tau}{m})\big)\| \nn\\
&&\hskip -.1in \leq L_{F}\big(\|\tilde{e}_{t}^1\|_c+L_{h}\|\hat{z}_1(t)-x(t-\tau+\frac{d+\tau}{m})\|\big) \nn\\
&&\hskip -.1in \leq  L_{F} \|\tilde{e}_{t}^1\|_c+L_{F} L_{h}\|\hat{z}_1(t)-\hat{z}_0(t+\frac{d+\tau}{m})\| \nn\\
&&\hskip .05in + L_{F} L_{h}\|\hat{z}_0(t+\frac{d+\tau}{m})-x(t-\tau+\frac{d+\tau}{m})\| \nn\\
&&\hskip -.1in =  L_{F} \|\tilde{e}_{t}^1\|_c+L_{F} L_{h}\|\tilde{e}_1(t)\| + L_{F} L_{h}\|
 \tilde{e}_0(t+\frac{d+\tau}{m})\|.
\eea

For the error dynamics (\ref{de1'}), let us consider the quadratic Lyapunov function
\bea
\tilde{V}_1(\tilde{e}_1(t))=\tilde{e}^T_1(t)\tilde{e}_1(t)
\eea
whose derivative along the solution of (\ref{de1'}) satisfies (by (\ref{f1j}) and Young inequality)
\bea \label{DV1}
&&\hskip -.4in \dot{\tilde{V}}_1(\tilde{e}_1(t))\leq 2L_{F}\|\tilde{e}_1(t)\|\|\tilde{e}_{t}^1\|_c+2L_{F}L_{h}\|\tilde{e}_1(t)\|^2 \nn\\
&&\hskip .2in +2L_{F}L_{h}\|\tilde{e}_1(t)\|\|\tilde{e}_0(t +\frac{d+\tau}{m})\| \nn\\
&&\hskip .2in -2(L_{F}+L_{F}L_{h}+1) \tilde{e}^T_1(t)\tilde{e}_1 (t-\frac{d+\tau}{m}) \nn\\
&&\hskip -.2in \leq -2(L_{F}+L_{F}L_{h}+1) \|\tilde{e}_1(t)\|^2 \nn\\
&& \hskip -.1in +2(L_{F}+L_{F}L_{h}+1) \|\tilde{e}_1(t)\|\|\tilde{e}_1(t)-\tilde{e}_1 (t-\frac{d+\tau}{m})\| \nn\\
&&+2(L_{F}+L_{F}L_{h}) \max_{\theta\in[-\delta-2\frac{d+\tau}{m},0]}  \|\tilde{e}_1(t+\theta)\|^2 \nn\\
&&+\epsilon_1\|\tilde{e}_1(t)\|^2+\frac{(L_{F}L_{h})^2}{\epsilon_1} \|
 \tilde{e}_0(t+\frac{d+\tau}{m})\|^2, 
\eea
where  
$\epsilon_1=\frac{1}{2}(1-\frac{(L_{F}+L_{F}L_{h}+1)^2(d+\tau)}{m})>0$.

By  (\ref{de1'})-(\ref{f1j}), the following estimation holds:
\bea\label{ine1}
&&\hskip -.45in \|\tilde{e}_1(t)-\tilde{e}_1 (t-\frac{d+\tau}{m})\|\leq \int_{t-\frac{d+\tau}{m}}^t \|\dot{\tilde{e}}_1(s)\|\mathrm{d}s \nn\\
&& \hskip -.45in \leq (L_{F}+L_{F}L_{h}+1)\frac{d+\tau}{m}\max_{\theta\in[-\delta-2\frac{d+\tau}{m},0]}\|\tilde{e}_1(t+\theta)\| \nn\\
&& \hskip -.25in +L_{F}L_{h} \frac{d+\tau}{m} \max_{\theta\in[-\frac{d+\tau}{m},0]}\|
\tilde{e}_0(t+\frac{d+\tau}{m}+\theta)\|.
\eea
This, together with  Young inequality, results in
\bea\label{E1}
&&\hskip -.1in 2(L_{F}+L_{F}L_{h}+1)  \|\tilde{e}_1(t)
\|\tilde{e}_1(t)-\tilde{e}_1 (t-\frac{d+\tau}{m})\| \nn\\
&&\hskip -.1in \leq 2(L_{F}+L_{F}L_{h}+1)^2\frac{d+\tau}{m} \max_{\theta\in[-\delta-2\frac{d+\tau}{m},0]}\|\tilde{e}_1(t+\theta)\|^2 \nn\\
&&\hskip .1in +2(L_{F}+L_{F}L_{h}+1)L_{F}L_{h}  \nn\\
&&\hskip .5in \cdot  \frac{d+\tau}{m} \|\tilde{e}_1(t)\| \max_{\theta\in[-\frac{d+\tau}{m},0]}\|
 \tilde{e}_0(t+\frac{d+\tau}{m}+\theta)\| \nn\\
&&\hskip -.1in \leq 2(L_{F}+L_{F}L_{h}+1)^2\frac{d+\tau}{m} \max_{\theta\in[-\delta-2\frac{d+\tau}{m},0]}\|\tilde{e}_1(t+\theta)\|^2 \nn\\
&&\hskip .0in + c_2
\max_{\theta\in[-\frac{d+\tau}{m},0]}\|
 \tilde{e}_0(t+\frac{d+\tau}{m}+\theta)\|^2
+ \epsilon_1\|\tilde{e}_1(t)\|^2, 
\eea
where $c_2=\frac{(L_{F}+L_{F}L_{h}+1)^2(L_{F}L_{h})^2(d+\tau)^2}{\epsilon_1m^2}$.

Substituting (\ref{E1}) into  (\ref{DV1}) yields 
\bea \label{DV12}
&&\hskip -.25in \dot{\tilde{V}}_1(\tilde{e}_1(t))\leq
-2(L_{F}+L_{F}L_{h}+1-\epsilon_1) \|\tilde{e}_1(t)\|^2 \nn\\
&&\hskip .45in  +2\Big[L_{F}+L_{F}L_{h}
 \nn\\ && 
+(L_{F}+L_{F}L_{h}+1)^2\frac{d+\tau}{m} \Big] \max_{\theta\in[-\delta-2\frac{d+\tau}{m},0]}  \|\tilde{e}_1(t+\theta)\|^2 \nn\\
&&
+ c_3\max_{\theta\in[-\frac{d+\tau}{m},0]}\|
 \tilde{e}_0(t+\frac{d+\tau}{m}+\theta)\|^2 \nn\\
&&\hskip -.15in = -2(L_{F}+L_{F}L_{h}+1-\epsilon_1) \tilde{V}_1(\tilde{e}_1(t))+2\Big[L_{F}+L_{F}L_{h}  \nn\\
&&\hskip -.1in +(L_{F}+L_{F}L_{h}+1)^2 \frac{d+\tau}{m} \Big]\max_{\theta\in[-\delta-2\frac{d+\tau}{m},0]}  \tilde{V}_1(\tilde{e}_1(t+\theta)) 
\nn\\
&&\hskip -.in
+ c_3\max_{\theta\in[-\frac{d+\tau}{m},0]}\|
 \tilde{e}_0(t+\frac{d+\tau}{m}+\theta)\|^2,
\eea
where $c_3=\frac{ (L_{F}L_{h})^2}{\epsilon_1}+ c_2.$

By (\ref{mo}) and  $\epsilon_1=\frac{1}{2}(1-\frac{(L_{F}+L_{F}L_{h}+1)^2(d+\tau)}{m})$, one has
\bea \label{ine3}
&&\hskip -.2in  2(L_{F}+L_{F}L_{h}+1-\epsilon_1) \nn\\
&&\hskip -.2in  >2\big(L_{F}+L_{F}L_{h}+(L_{F}+L_{F}L_{h}+1)^2\frac{d+\tau}{m} \big). 
\eea
This, together with (\ref{DV12}) and Lemma \ref{lm3}, leads to the conclusion that
 the error dynamics   (\ref{de1'}) 
is ISS, i.e., there exist  functions $\tilde{\beta}_1(\cdot,\cdot)$ of class $\mathcal{KL}$ and $\tilde{\gamma}_1(\cdot)$
of class $\mathcal{K}$, such that
\bea\label{e1iss}
\|\tilde{e}_{1}(t)\|\leq \tilde{\beta}_1(\|\tilde{\phi}_1\|_c,t)+\tilde{\gamma}_1(\sup_{0\leq s\leq t} \|
\tilde{e}_{0,s+\frac{d+\tau}{m}}\|_c),
\eea
where $\tilde{\phi}_1$ is the initial state of the error dynamic system (\ref{de1'}).

By (\ref{e1iss}), (\ref{inee0'}) and Lemma \ref{lm4}, there exists a $\mathcal{KL}$ function $\beta_{1e}(\cdot,\cdot)$
such that  
\bea\label{bet1e}
\|(\tilde{e}_0(t),
\tilde{e}_1(t))\|\leq \beta_{1e}(\|(\tilde{\phi}_0,\tilde{\phi}_1)\|_c,t).
\eea

Following a similar argument as done for the error dynamics (\ref{de1'}), 
we deduce from (\ref{sqo}) and (\ref{ei'}) that 
\bea\label{dei'}
&&\hskip -.26in \dot{\tilde{e}}_i(t)=F \big(\hat{z}_{t}^i, \alpha (\hat{z}_m (t-d-\tau + \frac{i(d+\tau)}{m} ) ), h(\hat{z}_i(t)) \big) \nn\\
&& \hskip .2in  -F \big(\hat{z}_{t+\frac{d+\tau}{m}}^{i-1}, \alpha (\hat{z}_m (t-d-\tau + \frac{i(d+\tau)}{m} ) ), \nn\\
&& \hskip 2.0in  h(\hat{z}_{i-1}(t+\frac{d+\tau}{m})) \big) \nn\\
&& \hskip .2in  - 
(L_{F}+L_{F}L_{h}+1) \tilde{e}_i (t-\frac{d+\tau}{m} )  \nn\\
&& \hskip .2in  + (L_{F}+L_{F}L_{h}+1) \tilde{e}_{i-1} (t), \qquad\  i=2,\cdots,m.
\eea
By the use of
\bea
 \tilde{e}_i(t)=\hat{z}_i(t)-\hat{z}_{i-1}(t+\frac{d+\tau}{m}), \quad \ \tilde{e}_{t}^i=\hat{z}_{t}^i-\hat{z}_{t+\frac{d+\tau}{m}}^{i-1},
\eea 
the GLC property of $F,h$  and  (\ref{ei'}), one has
\bea\label{fij}
&&\hskip -.1in  \|F \big(\hat{z}_{t}^i, \alpha  (\hat{z}_m \big(t-d-\tau + \frac{i(d+\tau)}{m}  ) ), h(\hat{z}_i(t)) \big) \nn\\
&& \hskip .0in  -F \big(\hat{z}_{t+\frac{d+\tau}{m}}^{i-1}, \alpha  (\hat{z}_m  (t-d-\tau + \frac{i(d+\tau)}{m} )  ), 
\nn\\
&& \hskip 1.9in 
 h(\hat{z}_{i-1}(t+\frac{d+\tau}{m}))  \big)\| \nn\\
&&\hskip -.1in \leq L_{F}\big(\|\tilde{e}_{t}^i\|_c+L_{h}\|\hat{z}_i(t)-\hat{z}_{i-1}(t+\frac{d+\tau}{m})\|\big) \nn\\
&&\hskip -.1in = L_{F} \|\tilde{e}_{t}^i\|_c+L_{F} L_{h}\|\tilde{e}_i(t)\|, \qquad\quad  i=2,\cdots,m.
\eea

For the error dynamics (\ref{dei'}), 
consider the quadratic Lyapunov functions 
\bea
\tilde{V}_i(\tilde{e}_i)=\tilde{e}_i^T\tilde{e}_i, \qquad i=2,\cdots,m.
\eea
A direct calculation gives (by (\ref{fij}) and Young inequality)
\bea\label{DVi1}
&&\hskip -.4in \dot{\tilde{V}}_i(\tilde{e}_i(t))\leq -2(L_{F}+L_{F}L_{h}+1) \tilde{e}^T_i(t)\tilde{e}_i (t-\frac{d+\tau}{m})\nn\\
&&\hskip .2in +2L_{F}\|\tilde{e}_i(t)\|\|\tilde{e}_{t}^i\|_c+2L_{F}L_{h}\|\tilde{e}_i(t)\|^2 \nn\\
&& \hskip .2in +2(L_{F}+L_{F}L_{h}+1) \|\tilde{e}_i(t)\| \|\tilde{e}_{i-1}(t)\| \nn\\
&&\hskip -.25in \leq -2(L_{F}+L_{F}L_{h}+1) \|\tilde{e}_i(t)\|^2 \nn\\
&&+2(L_{F}+L_{F}L_{h}+1) \|\tilde{e}_i(t)\|\|\tilde{e}_i(t)-\tilde{e}_i (t-\frac{d+\tau}{m})\| \nn\\
&&+2(L_{F}+L_{F}L_{h}) \max_{\theta\in[-\delta-2\frac{d+\tau}{m},0]}  \|\tilde{e}_i(t+\theta)\|^2 \nn\\
&&+\epsilon_1\|\tilde{e}_i(t)\|^2+\frac{(L_{F}+L_{F}L_{h}+1)^2}{\epsilon_1} \|\tilde{e}_{i-1}(t)\|^2.
\eea

Similar to the estimation of (\ref{ine1}), we deduce from (\ref{dei'})-(\ref{fij}) that 
\bea\label{ine2}
&& \hskip -.45in \|\tilde{e}_i(t)-\tilde{e}_i (t-\frac{d+\tau}{m})\| \leq \int_{t-\frac{d+\tau}{m}}^t \|\dot{\tilde{e}}_i(s)\|\mathrm{d}s \nn\\
&& \hskip -.45in \leq (L_{F}+L_{F}L_{h}+1)\frac{d+\tau}{m}\max_{\theta\in[-\delta-2\frac{d+\tau}{m},0]}\|\tilde{e}_i(t+\theta)\| \nn\\
&& \hskip -.3in +(L_{F}+L_{F}L_{h}+1) \frac{d+\tau}{m} \max_{\theta\in[-\frac{d+\tau}{m},0]}\|\tilde{e}_{i-1}(t+\theta)\|.
\eea
Akin to the derivation of  inequality 
(\ref{DV12}), it follows from 
 (\ref{DVi1})-(\ref{ine2}) and Young inequality that 
\bea \label{DVi2}
&&\hskip -.5in \dot{\tilde{V}}_i(\tilde{e}_i(t))\leq -2(L_{F}+L_{F}L_{h}+1-\epsilon_1) \|\tilde{e}_i(t)\|^2 \nn\\
&&\hskip .1in +2\big(L_{F}+L_{F}L_{h}+(L_{F}+L_{F}L_{h}+1)^2\frac{d+\tau}{m} \big) \nn\\ 
&& \hskip .1in \cdot \max_{\theta\in[-\delta-2\frac{d+\tau}{m},0]}  \|\tilde{e}_i(t+\theta)\|^2 \nn\\
&& \hskip .1in
+ c_4\max_{\theta\in[-\frac{d+\tau}{m},0]}\| \tilde{e}_{i-1}(t+\theta)\|^2 \nn\\
&&\hskip -.4in = -2(L_{F}+L_{F}L_{h}+1-\epsilon_1) \tilde{V}_i(\tilde{e}_i(t)) \nn\\
&&\hskip -.2in +2\big(L_{F}+L_{F}L_{h}+(L_{F}+L_{F}L_{h}+1)^2\frac{d+\tau}{m} \big) \nn\\
&&\hskip -.2in  \cdot \max_{\theta\in[-\delta-2\frac{d+\tau}{m},0]}  \tilde{V}_i(\tilde{e}_i(t+\theta)) 
\nn\\
&&\hskip -.2in
+ c_4\max_{\theta\in[-\frac{d+\tau}{m},0]}\| \tilde{e}_{i-1}(t+\theta)\|^2,
\eea
where $c_4=\frac{(L_{F}+L_{F}L_{h}+1)^2}{\epsilon_1}+ 
\frac{(L_{F}+L_{F}L_{h}+1)^4(d+\tau)^2}{\epsilon_1m^2}$.

According to (\ref{DVi2}) and Lemma \ref{lm3}, we conclude that the error dynamics (\ref{dei'}) 
is ISS, i.e., there exist functions $\tilde{\beta}_i(\cdot,\cdot)$ of class $\mathcal{KL}$  
and  
$\tilde{\gamma}_i(\cdot)$ of class $\mathcal{K}$,
 such that 
\bea\label{eiss'}
\|\tilde{e}_{i}(t)\|\leq \tilde{\beta}_i(\|\tilde{\phi}_i\|_c,t)
\hskip -.03in + \hskip -.03in 
\tilde{\gamma}_i(\sup_{0\leq s\leq t}\|\tilde{e}_{i-1,s}\|_c), i=2,\cdots,m
\eea
where $\tilde{\phi}_i$ is the initial state of the error dynamic system (\ref{dei'}).

Based on  (\ref{bet1e}) and (\ref{eiss'}), 
one can apply
Lemma \ref{lm4} repeatedly and conclude that the  error dynamic systems (\ref{de0}), (\ref{de1'}) and (\ref{dei'}) with $i=2,\cdots,m$ 
are  GAS. That is,  there is a $\mathcal{KL}$ function $\beta_{e}(\cdot,\cdot)$ such that
\bea \label{bet3}
\|\tilde{e}(t)\|\leq \beta_{e}(\|\tilde{\phi}\|_c,t), \quad \forall t\geq0.
\eea
where $\tilde{e}(t)=[\tilde{e}_0^T(t) \ \tilde{e}_1^T(t) \ \cdots \ \tilde{e}_m^T(t)]^T\in \R^{(m+1)n}$ and  $\tilde{\phi}\in C([-\delta-\tau,0], \R^{(m+1)n})$ is the initial state
of the error dynamic systems (\ref{de0}), (\ref{de1'}) and (\ref{dei'}). 

In view of  
$\tilde{e}_i(t)$ defined by (\ref{e0})-(\ref{ei'}), we obtain
\bea
&&\hskip -.4in \hat{z}_m(t) = \tilde{e}_m(t) + \hat{z}_{m-1} (t + \frac{d+\tau}{m} ) \nn\\
&&\hskip -.4in \hat{z}_{m-1}(t) = \tilde{e}_{m-1}(t) + \hat{z}_{m-2} (t + \frac{d+\tau}{m} ) \nn\\
&&\hskip -.2in \vdots \nn\\
&&\hskip -.4in \hat{z}_1(t) =  \tilde{e}_1(t) + \hat{z}_{0}(t + \frac{d+\tau}{m} ) \nn\\
&&\hskip -.4in \hat{z}_0(t) =  \tilde{e}_0(t) + x (t -\tau),
\eea
which, in turn, result in  
\bea \label{zm2}
\hat{z}_m(t) =x(t + d)+ \sum_{j=0}^{m} \tilde{e}_{m-j} (t +  \frac{j}{m}(d+\tau) ).
\eea 
Substituting (\ref{zm2}) into 
(\ref{uo}), we see that  the 
retarded system (\ref{s1}) under the output feedback controller (\ref{uo}) can be expressed as
\bea  \label{dx}
\hskip -.0in \dot{x}(t) =  f (x_t, 
\alpha (x(t)+ \tilde{v}(t)  ) ):= g(x_t, \tilde{v}(t)),
\eea 
where $\tilde{v}(t)=\sum_{j=0}^{m} \tilde{e}_{m-j}
 (t -d +  \frac{j}{m}(d+\tau))$,

By Assumption \ref{ap1}  and Lemma \ref{lm2}, 
the retarded system (\ref{dx}) is 
ISS.  Equivalently,  
 there exist 
 functions $\hat{\beta}(\cdot,\cdot)$ of class $\mathcal{KL}$  and  
 $\hat{\gamma}(\cdot)$ of class $\mathcal{K}$, such that
\bea \label{bet4}
\|x(t)\|\leq \hat{\beta}(\|\psi\|_c,t)+\hat{\gamma}\big(\sup_{0\leq s\leq t}\|\tilde{v}(s)\|\big), \quad \forall t\geq0,
\eea
where $\psi\in C([-\delta,0], \R^n)$ is the initial state of the retarded system (\ref{s1}) with input delay.

Finally,  it is concluded from (\ref{bet3}), (\ref{bet4}), $\tilde{v}(t)=\sum_{j=0}^{m} \tilde{e}_{m-j}
 (t -d +  \frac{j}{m}(d+\tau))$ and Lemma \ref{lm4} that 
the time-delay closed-loop system composed by (\ref{s1})-(\ref{y}) and (\ref{uo})-(\ref{sqo}) 
  is  GAS.   This completes the proof of  Theorem \ref{th3}.
\hfill\qed

Similar to the case of state feedback,  
 the lower bound of $m$ in (\ref{mo}) can also be reduced 
by choosing the gains of sequential predictors (\ref{sqo})
 suitably.
 The following result whose proof is a strong reminiscent of  
  that of Theorem \ref{th3} and thus is  omitted.

\begin{theorem}
\label{th4} \rm 
Under Assumptions  \ref{ap1}-\ref{ap2}, the 
output feedback controller
\bea \label{uo'}
u(t) = \alpha (\hat{z}_m(t))
\eea
with the sequential predictors
\bea 
&&\hskip -.45in \dot{\hat{z}}_0(t) = F\big(\hat{z}_{t}^0, u(t-d-\tau), y(t)\big), 
\nn\\
&&\hskip -.45in \dot{\hat{z}}_1(t) = F \big(\hat{z}_{t}^1, u(t-d-\tau + \frac{d+\tau}{m} ), h(\hat{z}_1(t))\big) \nn\\
&&\hskip .0in  - (L_{F}+L_{F}L_{h}+\varepsilon)\big(\hat{z}_1  (t-\frac{d+\tau}{m}  ) - \hat{z}_0(t) \big) \nn\\
&&\hskip -.45in \dot{\hat{z}}_2(t) = F \big(\hat{z}_{t}^2, u(t-d-\tau + \frac{2(d+\tau)}{m} ), h(\hat{z}_2(t)) \big) \nn\\
&&\hskip .0in  - 
(L_{F}+L_{F}L_{h}+\varepsilon) \big(\hat{z}_2  (t-\frac{d+\tau}{m} ) - \hat{z}_1(t) \big) \nn\\
&&\hskip -.05in \vdots \nn\\
&&\hskip -.45in \dot{\hat{z}}_m(t) = F (\hat{z}_{t}^m, u(t), h(\hat{z}_m(t)) ) \nn\\
&&\hskip .0in  - 
(L_{F}+L_{F}L_{h}+\varepsilon) \big(\hat{z}_m  (t-\frac{d+\tau}{m}  ) - \hat{z}_{m-1}(t) \big)  \label{sqo'}
\eea
globally asymptotically stabilizes  the retarded  
system (\ref{s1})-(\ref{y}) with input/output delays, as long as  
$m\in \mathbb{N}$ satisfies
\bea \label{mo'}
m >(L_{F}+L_{F}L_{h}+\varepsilon)^2(d+\tau), 
\eea
for any 
$\varepsilon\in (0,1)$. 
\hfill \qed
\end{theorem}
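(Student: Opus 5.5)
The plan is to repeat the proof of Theorem \ref{th3} with the gain $1$ replaced by $\varepsilon$ throughout. Write $K:=L_F+L_FL_h+\varepsilon$ and $h:=(d+\tau)/m$, and keep the error variables (\ref{e0})--(\ref{ei'}) unchanged. The chain of conclusions is exactly as before:
\begin{itemize}
\item $\tilde e_0$ is GES by Assumption \ref{ap2}, which gives (\ref{inee0}) and (\ref{inee0'}); this step does not involve the predictor gain, so it carries over verbatim.
\item $\tilde e_1$ is ISS with respect to $\tilde e_{0,\cdot+h}$.
\item Each $\tilde e_i$, $i\ge 2$, is ISS with respect to $\tilde e_{i-1}$.
\item Repeated use of Lemma \ref{lm4} gives the $\mathcal{KL}$ bound (\ref{bet3}).
\item The identity (\ref{zm2}) turns the closed loop into (\ref{dx}); this is ISS in $\tilde v$ by Assumption \ref{ap1} and Lemma \ref{lm2}, and a last application of Lemma \ref{lm4} yields GAS.
\end{itemize}
All of this is gain-independent bookkeeping. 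Only the Halanay-type inequalities (\ref{DV12}) and (\ref{DVi2}) must be redone with the new gain.

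For $\tilde V_i=\tilde e_i^T\tilde e_i$ I split $-2K\tilde e_i^T(t)\tilde e_i(t-h)$ into $-2K\|\tilde e_i(t)\|^2+2K\|\tilde e_i(t)\|\,\|\tilde e_i(t)-\tilde e_i(t-h)\|$, as in (\ref{DVi1}). The Lipschitz part of the vector field is bounded by (\ref{fij}), and the coupling term is handled by Young's inequality with a small weight $\epsilon_1$. The difference term is bounded through $\int_{t-h}^t\|\dot{\tilde e}_i\|$, as in (\ref{ine2}). The result has the form
\[
\dot{\tilde V}_i\le -2(K-\epsilon_1)\tilde V_i+2\big(L_F+L_FL_h+c\,K^2h\big)\max_{\theta}\tilde V_i(t+\theta)+c_4\max_\theta\|\tilde e_{i-1}(t+\theta)\|^2 ,
\]
and Lemma \ref{lm3} then delivers ISS. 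What must be checked is the strict Halanay gap
\[
K-\epsilon_1>L_F+L_FL_h+c\,K^2h,
\]
that is, $\varepsilon-\epsilon_1>c\,K^2h$.

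This gap is the main obstacle. With the crude estimate $\|\dot{\tilde e}_i\|\le K\max\|\tilde e_i\|$ used in the paper, one gets $c=1$, and the gap requires $K^2h<\varepsilon$. That is stronger than (\ref{mo'}), which only gives $K^2h<1$. To reach (\ref{mo'}) exactly, I would first try to sharpen the difference estimate. I would split $\dot{\tilde e}_i$ into its Lipschitz part, of size at most $(L_F+L_FL_h)\|\tilde e^i_t\|_c$, and its delayed-feedback part $-K\tilde e_i(s-h)$. Only the second contributes order $K^2h$. I would then absorb that contribution using the negative term $-2K\|\tilde e_i\|^2$, via a Young splitting with weight proportional to $\varepsilon$. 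The weight $\epsilon_1$ would be chosen as
\[
\epsilon_1=\tfrac12\big(\varepsilon-\text{(residual)}\big)\quad\text{rather than}\quad \tfrac12\big(1-K^2h\big).
\]

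If that sharper splitting does not close the gap under (\ref{mo'}) alone, the fallback is a time rescaling $s=\varepsilon t$. Under this rescaling the predictor gain becomes $K/\varepsilon$ and the delay step becomes $\varepsilon h$. I would then check whether the unit-margin computation of Theorem \ref{th1} applies verbatim in the new time variable. I expect the whole difficulty of the theorem to sit in this one inequality; every other step is a transcription of the proof of Theorem \ref{th3}.
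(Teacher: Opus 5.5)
Your diagnosis is correct. It is also sharper than the paper, whose proof of this theorem is only the transcription of Theorem~\ref{th3} you describe (omitted as a ``strong reminiscent''). But the proposal stops at the diagnosis: the one inequality that carries the theorem is never proved, and neither repair you suggest can prove it.

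Write $\eta:=(d+\tau)/m$ (your $h$, renamed to avoid a clash with the output map $h$) and $L:=L_F+L_FL_h$. A faithful transcription needs the Halanay gap $\varepsilon-\epsilon_1>K(K+L)\eta$, not $\varepsilon-\epsilon_1>K^2\eta$. The reason is that the bound $\|\dot{\tilde e}_i\|\le K\max\|\tilde e_i\|$ you take from (\ref{ine2}) drops the Lipschitz part of the vector field in (\ref{fij}), so the coefficient is $L+K$, not $K$. Either way, the transcription yields GAS only for $m>K(K+L)(d+\tau)/\varepsilon$, a strictly stronger requirement than (\ref{mo'}).

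Your rescaling fallback reproduces the same factor $1/\varepsilon$. With $\bar t=\varepsilon t$, the Lipschitz constants become $L_F/\varepsilon$ and the delays become $\varepsilon d,\varepsilon\tau$. The gain $K/\varepsilon=L/\varepsilon+1$ is then exactly the gain of Theorem~\ref{th3}, whose condition reads $m>(K/\varepsilon)^2\varepsilon(d+\tau)=K^2(d+\tau)/\varepsilon$. The rescaling also shows what has gone wrong: GAS does not depend on the time unit, but (\ref{mo'}) does.

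No sharper splitting can close the gap, because the statement is false under (\ref{mo'}). Take scalar data $f(\phi,u)=a\phi(0)+bu$, $h(x)=x$, $\alpha(x)=-2x$ and $F(\phi,u,y)=(a-\ell)\phi(0)+bu+\ell y$, with $a=b=10^{-3}$ and $\ell=2\cdot10^{-3}$. The delay-free closed loop is $\dot x=-10^{-3}x$ and the observer error obeys $\dot e=-10^{-3}e$. So Assumptions~\ref{ap1}--\ref{ap2} hold with $L_F=2\cdot10^{-3}$ and $L_h=1$.

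With $\varepsilon=10^{-3}$ we get $K=5\cdot10^{-3}$. Take $d=\tau=5000$ and $m=2$; then $K^2(d+\tau)=0.25<m$, so (\ref{mo'}) holds, and $\eta=5000$. Take zero initial data for $x,u,\hat z_0,\hat z_2$. Then $\tilde e_0\equiv0$ and $\hat z_0\equiv0$ on $[0,\eta]$. A direct computation from (\ref{sqo'}), in which all $u$-terms cancel, gives $\dot{\tilde e}_1(t)=a\tilde e_1(t)-K\tilde e_1(t-\eta)$.

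Now choose $\hat z_1(\theta)=e^{\lambda^*\theta}$ on $[-\eta,0]$, where $\lambda^*=s^*/\eta$ and $s^*>0$ is a root of $s-5+25e^{-s}=0$. Such a root exists because the left side equals $20$ at $s=0$ and $\ln 25-4<0$ at $s=\ln 25$. This gives $\tilde e_1(t)=e^{\lambda^*t}$ for all $t\ge0$, so $\hat z_1(t)-\hat z_0(t+\eta)$ is unbounded and the closed loop is not GAS.

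What the argument can deliver is GAS under the $\varepsilon$-divided bound above. For $L\ge1$ that bound is larger, not smaller, than the one with $\varepsilon=1$, so the motivation for the theorem (that smaller gains allow smaller $m$) also fails in that regime.
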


Consider a special 
 case when the retarded system (\ref{s1}) 
 has only point-wise delay in the state, i.e., 
\bea
&&\hskip -.2in \dot x(t)=f(x(t),x(t-\delta), u(t-d)) \label{s5} \\
&&\hskip -.2in 
y(t)=h(x(t-\tau)), \label{y'}
\eea
In this case, Assumption \ref{ap2} reduces to 
the  condition below.

\begin{assumption} \label{ap2'} \rm
For the input/output retarded 
system (\ref{s5})-(\ref{y'}) without output delay,
 i.e., $\tau=0$,  there  exists a GLC function
$F: \R^n \times \R^n \times \R^p \times \R^q\to \R^n$ with a Lipschitz constant $L_{F}>0$ and $F(0,0,0,0)=0$, such that
\bea 
\dot{\hat{x}}(t) = F(\hat{x}(t), \hat{x}(t-\delta), u(t-d), y(t)) 
\eea
is a global  
exponential 
observer 
of the time-delay
 system (\ref{s5}), i.e., 
$\|x(t)-\hat x(t)\|\leq \lambda_1\|x_0-\hat{x}_0\|_ce^{-\lambda_2t}$, $\forall x_0,\hat{x}_0\in C([-\delta,0], \R^n)$, where 
$\lambda_1$ and $\lambda_2$ are positive 
constants. 
\end{assumption}

Then, the following result can be deduced from 
Theorems \ref{th3} and \ref{th4}.

\begin{corollary} \label{cor2}
If the time-delay system (\ref{s5})-(\ref{y'}) with input/output delays satisfies Assumptions \ref{ap1} and \ref{ap2'},
there is an output feedback controller
\bea \label{uo''}
u(t) = \alpha (\hat{z}_m(t))
\eea
with the sequential predictors
\bea 
&&\hskip -.3in \dot{\hat{z}}_0(t) = F\big(\hat{z}_{0}(t),\hat{z}_{0}(t-\delta), u(t-d-\tau), y(t)\big) \label{z0}\nn\\
&&\hskip -.3in \dot{\hat{z}}_1(t) = F \big(\hat{z}_{1}(t),\hat{z}_{1}(t-\delta), u(t-d-\tau + \frac{d+\tau}{m}), h(\hat{z}_1(t))\big) \nn\\
&&\hskip .2in  - (L_{F}+L_{F}L_{h}+\varepsilon)\big(\hat{z}_1  (t-\frac{d+\tau}{m}  ) - \hat{z}_0(t) \big) \nn\\
&&\hskip -.3in \dot{\hat{z}}_2(t) = F \big(\hat{z}_{2}(t),\hat{z}_{2}(t-\delta), u(t-d-\tau + \frac{2(d+\tau)}{m} ), h(\hat{z}_2(t)) \big) \nn\\
&&\hskip .2in  - 
(L_{F}+L_{F}L_{h}+\varepsilon) \big(\hat{z}_2  (t-\frac{d+\tau}{m} ) - \hat{z}_1(t) \big) \nn\\
&&\hskip -.05in \vdots \nn\\
&&\hskip -.3in \dot{\hat{z}}_m(t) = F (\hat{z}_{m}(t),\hat{z}_{m}(t-\delta), u(t), h(\hat{z}_m(t)) ) \nn\\
&&\hskip .2in  - 
(L_{F}+L_{F}L_{h}+\varepsilon) \big(\hat{z}_m  (t-\frac{d+\tau}{m}  ) - \hat{z}_{m-1}(t) \big)  \label{sqo''}
\eea
such that the time-delay closed-loop system composed of  (\ref{s5})-(\ref{y'}) and 
(\ref{uo''})-(\ref{sqo''})
is  
GAS, where 
$m\in \mathbb{N}$ satisfies
\bea \label{mo''}
m >(L_{F}+L_{F}L_{h}+\varepsilon)^2(d+\tau),
\eea
for any  
 $0<\varepsilon\leq 1$. 
\end{corollary}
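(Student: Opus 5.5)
Corollary~\ref{cor2} is obtained by specializing Theorem~\ref{th4} (equivalently, the proof of Theorem~\ref{th3} with the gain $L_F+L_FL_h+1$ replaced by $L_F+L_FL_h+\varepsilon$) to the point-wise delayed system (\ref{s5})-(\ref{y'}). The plan is to embed (\ref{s5}) into the functional framework (\ref{s1}). Define $\tilde f(\phi,u):=f(\phi(0),\phi(-\delta),u)$ for $\phi\in C([-\delta,0],\R^n)$, so that (\ref{s5}) becomes $\dot x(t)=\tilde f(x_t,u(t-d))$. Likewise define the observer functional $\tilde F(\phi,u,y):=F(\phi(0),\phi(-\delta),u,y)$. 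Since $\|\phi(0)-\varphi(0)\|\le\|\phi-\varphi\|_c$ and $\|\phi(-\delta)-\varphi(-\delta)\|\le\|\phi-\varphi\|_c$, the functionals $\tilde f$ and $\tilde F$ inherit global Lipschitz continuity and vanish at the origin. Hence Assumption~\ref{ap1} holds for the functional form, and Assumption~\ref{ap2'} is precisely Assumption~\ref{ap2} for $\tilde F$.

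The one point needing care is the Lipschitz constant. The proof of Theorem~\ref{th3} uses $L_F$ only through the bound $\|F(\phi,u,y)-F(\varphi,u,y')\|\le L_F(\|\phi-\varphi\|_c+\|y-y'\|)$, as in (\ref{f1j}) and (\ref{fij}). I will interpret $L_F$ in Assumption~\ref{ap2'} as the constant satisfying
\[
\|F(x,\bar x,u,y)-F(z,\bar z,v,w)\|\le L_F\big(\max\{\|x-z\|,\|\bar x-\bar z\|\}+\|u-v\|+\|y-w\|\big).
\]
With this reading, $\tilde F$ has exactly the constant $L_F$. If instead a sum-of-norms convention is used, the constant at most doubles; one then either absorbs this into the definition of $L_F$ or notes that (\ref{mo''}) is stated in terms of the constant of $\tilde F$.

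With this identification, the predictors (\ref{sqo''}) are literally (\ref{sqo'}) written for $\tilde F$, because $\tilde F(\hat z^i_t,\cdot,\cdot)=F(\hat z_i(t),\hat z_i(t-\delta),\cdot,\cdot)$, mirroring Remark~\ref{rk1}. Likewise, (\ref{mo''}) coincides with (\ref{mo'}). Theorem~\ref{th4} then gives GAS for $\varepsilon\in(0,1)$, and Theorem~\ref{th3} covers the endpoint $\varepsilon=1$. For completeness I would recheck the single place where $\varepsilon$ enters, namely the Halanay/ISS margin (\ref{ine3}). Setting $\epsilon_1=\frac12\big(\varepsilon-(L_F+L_Fh+\varepsilon)^2(d+\tau)/m\big)$, condition (\ref{mo''}) must make $\epsilon_1>0$. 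This reduces to $m>(L_F+L_FL_h+\varepsilon)^2(d+\tau)/\varepsilon$, which coincides with (\ref{mo''}) only when $\varepsilon=1$.

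This last check is the main obstacle: matching the $\varepsilon$-dependence of the bound on $m$. I would first try rescaling the Young inequality constants so that the margin required is $\varepsilon$ times smaller. Failing that, I would rely on the stated Theorem~\ref{th4}, taking its conclusion as given, which the corollary is permitted to do. After that, the remaining steps are the cascade argument via Lemmas~\ref{lm3}, \ref{lm4} and \ref{lm2}, exactly as in Theorem~\ref{th3}.
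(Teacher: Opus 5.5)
Your proposal is correct and follows the paper's own route. The paper simply states that Corollary~\ref{cor2} is deduced from Theorems~\ref{th3} and~\ref{th4}, and that is exactly your argument: identify (\ref{s5}) with the functional form via $\tilde f(\phi,u)=f(\phi(0),\phi(-\delta),u)$ and $\tilde F(\phi,u,y)=F(\phi(0),\phi(-\delta),u,y)$ (as in Remark~\ref{rk1}), then apply Theorem~\ref{th4} for $\varepsilon\in(0,1)$ and Theorem~\ref{th3} for $\varepsilon=1$.

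Your side check is also accurate. Re-running the Halanay margin with gain $L_F+L_FL_h+\varepsilon$ only gives $m>(L_F+L_FL_h+\varepsilon)^2(d+\tau)/\varepsilon$, and re-weighting the Young splits cannot improve this because the symmetric split is already optimal. So the stated bound (\ref{mo''}) rests on Theorem~\ref{th4} as asserted, which is precisely what the paper's one-line deduction does.
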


\begin{remark}\label{rk2}
As a consequence of Theorems \ref{th1} and \ref{th3},   
the following results follows immediately for a class of 
retarded systems in a strict feedback form, i.e.,  
\bea
&&\dot{x}_i(t)=x_{i+1}(t)+\phi_{i}(x^{1}_t,\cdots,x^{i}_t), \quad i=1,\cdots, n-1 \nn\\
&&\dot{x}_n(t)=u(t-d)+\phi_n
(x^{1}_t,\cdots,x^{n}_t) \label{lts} \\
&&\hskip .1in y(t)=x_1(t-\tau), \label{lto}
\eea
where the mappings $\phi_i: C([-\delta,0], \R^i)\to \R$, $i=1,\cdots,n$ are GLC with $\phi_{i}(0,\cdots,0)=0$.

When $d=0$, using the backstepping design method, one can design a linear feedback $u=Kx$ which renders the 
retarded system  (\ref{lts}) with $d=0$ GES. In other words,   Assumption \ref{ap1} is satisfied. 
 By Theorem \ref{th1}, the  state feedback controller $u(t)=Kz_m(t)$ 
 with the sequential predictors (\ref{sq}) makes
 the retraded system (\ref{lts}) with  $d>0$ GAS.

In the case of output feedback,  one can design a set of gains $l_1,l_2, \cdots, l_n$ such that 
\bea
&&\hskip -.5in \dot{\hat{x}}_i(t)=\hat{x}_{i+1}(t)+\phi_{i}(\hat{x}^{1}_t,\cdots,\hat{x}^{i}_t)-l_i(y(t)-\hat{x}_1(t)), \nn\\
&& \hskip 1.8in i=1,\cdots, n-1 \nn\\
&&\hskip -.5in \dot{\hat{x}}_n(t)=u(t-d)+\phi_n
(\hat{x}^{1}_t,\cdots,\hat{x}^{n}_t)-l_n(y(t)-\hat{x}_1(t))
\eea
is a GES observer for the input/output retarded system (\ref{lts})-(\ref{lto}) with $\tau=0$, i.e.,
Assumption \ref{ap2} holds. According to Theorem \ref{th3}, the sequential predictor based output feedback controller 
$u(t)=K\hat{z}_m(t)$ with (\ref{sqo}) 
GAS the retarded system (\ref{lts})-(\ref{lto}) with input/output delays $d,\tau>0$.
\end{remark}

\section{Extensions to  ISS retarded systems with input/output delays 
}

We now discuss how the previous results can be extended to a  broader  
class of retarded systems 
under the following conditions which own a motivation from the 
papers \cite{mm17,bm21}, the survey \cite{Ant23a},
 and the works \cite{Ant23b,Ant24,mw17,mw18} 
on the ISS of time-delay nonlinear systems 
as well as infinite-dimensional systems.

\begin{assumption}\label{ap3}
For the vector field $f$ in  (\ref{s1}), there exists a constant $L_{f}>0$ such that
\bea \label{Lff}
\| f(\phi, u) - f(\varphi, u)\|\leq L_f  \| \phi- \varphi \|_c, 
\eea
 $\forall \phi,  \varphi\in C([-\delta,0], \R^n)$ and $\forall u\in \R^p$. 
\end{assumption}

\begin{assumption} \label{ap4} \rm
For the retarded system (\ref{s1}) without input delay, i.e., $d=0$,
 there exists a state feedback law $u = \alpha(x)$ with $\alpha(0)=0$, which is 
 locally Lipschitz continuous (LLC),  such that
\bea
\dot{x}(t) = f(x_t, \alpha(x(t)+\mu(t))) \label{s3}
\eea
is ISS with respect to the perturbation $\mu(t)$.
\end{assumption}

Under the condition (\ref{Lff}),  it is not difficult to verify  that 
the retarded nonlinear control system (\ref{s1}) is forward complete, 
permitting one to compensate the large input delay by predictor based state and/or output feedback.
 With the aid of Assumptions \ref{ap3} and \ref{ap4}, the following result that refines Theorem \ref{th1} or \ref{th2} can be 
obtained.

\begin{theorem}
\label{th5} \rm  
Suppose that the retarded system (\ref{s1}) with  input delay 
satisfies Assumptions \ref{ap3} and \ref{ap4}. Then,  the 
state feedback controller 
\bea 
u(t) = \alpha (z_m(t))
\eea
with the sequential predictors (\ref{sq})-(\ref{m}) or  (\ref{sq'})-(\ref{m'})
renders  
the 
retarded 
system (\ref{s1}) with input delay GAS. 
\end{theorem}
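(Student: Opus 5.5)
The plan is to reuse the proof of Theorem \ref{th1} almost unchanged and to replace the two places where the global Lipschitz property (\ref{Lf}) and Assumption \ref{ap1} entered by Assumptions \ref{ap3} and \ref{ap4}. I would define $z_0(t)=x(t)$ and the errors $e_i(t)=z_i(t)-z_{i-1}(t+\frac{d}{m})$ exactly as in (\ref{er}). The error dynamics (\ref{e1}) and (\ref{ei}) are then obtained verbatim.

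The first key observation is that in each of (\ref{e1}) and (\ref{ei}) the two copies of $f$ are evaluated at the \emph{same} input argument, namely $\alpha(z_m(t-d+\frac{id}{m}))$. Hence only the Lipschitz bound in the functional argument is ever used, and (\ref{Lff}) gives
$$\|f(z_t^i,\cdot)-f(z^{i-1}_{t+\frac{d}{m}},\cdot)\|\le L_f\|e^i_t\|_c$$
irrespective of the input. The whole chain of estimates (\ref{dv1})--(\ref{dvi'}) then goes through with the same constants, in particular the integral bounds (\ref{e1cs}) and (\ref{eics}) on $\int\|\dot e_i\|$. Consequently:
\begin{itemize}
\item the Halanay inequality (Lemma \ref{lm1}) yields GES of $e_1$ under (\ref{m});
\item Lemma \ref{lm3} yields ISS of $e_i$ with respect to $e_{i-1,t}$;
\item repeated use of Lemma \ref{lm4} gives (\ref{bet1}), i.e. $\|e(t)\|\le\bar\beta_m(\|\phi\|_c,t)$.
\end{itemize}
For the gains $L_f+\varepsilon$ with (\ref{m'}), the same computation applies with $1$ replaced by $\varepsilon$, choosing $\epsilon_0=\frac{1}{2}(\varepsilon-(L_f+\varepsilon)^2\frac{d}{m})>0$. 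This is positive because $m>(L_f+\varepsilon)^2d$ gives $(L_f+\varepsilon)^2\frac{d}{m}<1$, so $\varepsilon-(L_f+\varepsilon)^2\frac{d}{m}>\varepsilon-1$. I would double-check this inequality when writing the argument out.

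Next, the telescoping identity (\ref{zm}) again gives $z_m(t)=x(t+d)+\sum_j e_{m-j}(t+\frac{jd}{m})$. The plant therefore takes the form (\ref{dx1}), $\dot x(t)=f(x_t,\alpha(x(t)+v(t)))$, with $v(t)=\sum_{j=0}^{m-1}e_{m-j}(t-\frac{m-j}{m}d)$. Here Lemma \ref{lm2} is no longer available, but it is not needed: Assumption \ref{ap4} states directly that this system is ISS with respect to $\mu=v$, which gives an estimate of the form (\ref{bet2}). Since $\sup_{s\le t}\|v(s)\|$ is dominated by a $\mathcal{KL}$ bound in the initial data of the error system, Lemma \ref{lm4} combines (\ref{bet1}) and (\ref{bet2}) into $\|(x(t),e(t))\|\le\bar\beta(\|(\psi,\phi)\|_c,t)$. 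This is GAS of the closed loop.

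The step I expect to be the main obstacle is well-posedness: global existence of closed-loop solutions when $\alpha$ is only locally Lipschitz and $f$ is no longer globally Lipschitz in $u$. The Lyapunov estimates above hold only on the maximal interval of existence $[0,T)$. I would argue by contradiction using the method of steps on intervals of length $d/m$.
\begin{itemize}
\item The error bounds hold on $[0,T)$ with a priori constants, so every $e_i$ stays bounded there.
\item On each step the delayed inputs $u(t-d+\frac{id}{m})$ with $i<m$ are already determined. By Assumption \ref{ap3}, $\|f(\phi,u)\|\le L_f\|\phi\|_c+\|f(0,u)\|$, and Gronwall then bounds $x$ and $z_1,\dots,z_{m-1}$.
\item The only genuinely implicit component is $z_m$, through $u(t)=\alpha(z_m(t))$. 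For it I would use the ISS estimate of Assumption \ref{ap4} for $x$ together with the bounded errors and the identity $z_m(t)=e_m(t)+z_{m-1}(t+\frac{d}{m})$. This should show that $z_m$ cannot escape in finite time, contradicting maximality of $T<\infty$.
\end{itemize}
The circularity in the last step (future values of $z_{m-1}$ depend on past $u$) is the delicate point. If it does not close cleanly, I would fall back on assuming forward completeness of the closed loop, as is implicit in Assumption \ref{ap4}.
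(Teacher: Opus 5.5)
Your main line is the paper's proof. The bounds (\ref{diff1})--(\ref{diff2}) hold under Assumption~\ref{ap3} precisely because both copies of $f$ in (\ref{e1}) and (\ref{ei}) share the input argument. The error analysis of Theorem~\ref{th1} is then reused, Assumption~\ref{ap4} replaces Lemma~\ref{lm2} for (\ref{dx1}), and Lemma~\ref{lm4} combines (\ref{bet1}) and (\ref{bet2}).

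\textbf{The gap: the branch (\ref{sq'})--(\ref{m'}).} From $(L_f+\varepsilon)^2\frac{d}{m}<1$ you only get $\varepsilon-(L_f+\varepsilon)^2\frac{d}{m}>\varepsilon-1$. Since $\varepsilon-1<0$, the sign of $\epsilon_0$ is not determined.

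The defect is not confined to $\epsilon_0$. With gain $L_f+\varepsilon$ the first error already gives $\dot V_1\le -2(L_f+\varepsilon)V_1+2\big(L_f+(L_f+\varepsilon)^2\frac{d}{m}\big)\max_{\theta\in[-\delta-\frac{2d}{m},0]}V_1(e_1(t+\theta))$. Lemma~\ref{lm1} then needs $(L_f+\varepsilon)^2\frac{d}{m}<\varepsilon$, i.e.\ $m>(L_f+\varepsilon)^2d/\varepsilon$. Condition (\ref{m'}) does not imply this: for $L_f=1$, $\varepsilon=0.1$, $d=1$, $m=2$ you get $0.605>0.1$.

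The paper gives no separate argument for this branch either. It refers back to Theorem~\ref{th1}, and the proof of Theorem~\ref{th2} is omitted. So ``replace $1$ by $\varepsilon$'' works only under the stronger bound on $m$.

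Relatedly, your claim of ``the same constants'' inherits a slip in (\ref{e1cs}) and (\ref{eics}). The bound $\|\dot e_1(s)\|\le L_f\|e^1_s\|_c+(L_f+1)\|e_1(s-\frac{d}{m})\|$ produces the factor $2L_f+1$, not $L_f+1$. The computation therefore requires $m>(L_f+1)(2L_f+1)d$, respectively $m>(L_f+\varepsilon)(2L_f+\varepsilon)d/\varepsilon$.

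\textbf{Well-posedness.} Here you go beyond the paper, which takes global existence for granted. Your reduction is right: on a finite maximal interval $[0,T)$ the components $x,z_1,\dots,z_{m-1}$ stay bounded, so only $z_m$ can escape. The circularity can be avoided with Assumptions~\ref{ap3}--\ref{ap4} alone:
\begin{itemize}
\item On $[0,T)$ write $\dot z_m(t)=f(z^m_t,\alpha(z_m(t)))+w(t)$, with $w(t)=-k\big(z_m(t-\frac{d}{m})-z_{m-1}(t)\big)$ bounded there ($k$ the predictor gain).
\item Let $\mu$ solve $\dot\mu(t)=f(z^m_t,\alpha(z_m(t)))-f(z^m_t-\mu_t,\alpha(z_m(t)))+w(t)$ with $\mu_0=0$. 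By (\ref{Lff}) this is globally Lipschitz in $\mu_t$, so Gronwall bounds $\mu$ on $[0,T)$.
\item Then $\xi:=z_m-\mu$ satisfies $\dot\xi(t)=f(\xi_t,\alpha(\xi(t)+\mu(t)))$. By Assumption~\ref{ap4}, $\xi$ is bounded on $[0,T)$, and hence so is $z_m=\xi+\mu$.
\end{itemize}
This contradicts $T<\infty$, so your fallback forward-completeness hypothesis is unnecessary.
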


{\it Proof:} \ 
The proof  is a strong reminiscent of the argument used for the proof 
of Theorem \ref{th1}.  In what follows, we  shall highlight only the major difference.
 
First of all, by Assumption \ref{ap3},
the estimations of $f\big(z_{t}^1, \alpha (z_m (t-d + \frac{d}{m} ) )\big)- f\big(z_{t+\frac{d}{m}}^0, \alpha (z_m (t-d+ \frac{d}{m} ) )\big)$ in (\ref{e1}) and $f\big(z_{t}^i, \alpha (z_m (t-d + \frac{id}{m} ) )\big)- f\big(z_{t+\frac{d}{m}}^{i-1}, \alpha (z_m (t-d+ \frac{id}{m} ) )\big)$ in (\ref{ei}) are given by  
\bea\label{diff1}
&&\hskip -.3in \big\|f\big(z_{t}^1, \alpha (z_m (t-d + \frac{d}{m} ) )\big) \nn\\
&&\hskip -.15in - f\big(z_{t+\frac{d}{m}}^0, \alpha (z_m (t-d+ \frac{d}{m} ) )\big) \big\| \leq L_f\|e_{t}^1\|_c
\eea
and
\bea\label{diff2}
&&\hskip -.3in \big\|f\big(z_{t}^i, \alpha (z_m (t-d + \frac{id}{m} ) )\big) \nn\\
&&\hskip -.15in - f\big(z_{t+\frac{d}{m}}^{i-1}, \alpha (z_m (t-d+ \frac{id}{m} ) )\big) \big\| \leq L_f\|e_{t}^i\|_c.
\eea
Based on (\ref{diff1}) and (\ref{diff2}),  the arguments from (\ref{v1}) to (\ref{dx1}) remain unchanged.

Secondly, by Assumption \ref{ap4},  the retarded system (\ref{dx1}) is 
ISS with respect to $v(t)$, and hence the  inequality (\ref{bet2}) also fulfilled.  

Finally, 
in view of  (\ref{bet1}), (\ref{bet2}) and Lemma \ref{lm4},  there is a  function $\bar{\beta}(\cdot,\cdot)$ of class
$\mathcal{KL}$ such that
(\ref{bet}) holds. This, in turn, leads to Theorem \ref{th5}.
 \hfill\qed

 \begin{remark} It is worth pointing out that even when the retarded system (\ref{s1}) 
degenerates to the case without state delay, 
the proof of (\ref{bet}) is substantially different from
those given in \cite{mm17} and \cite{bm21}, which rely on the construction of
  Lyapunov-Krasovskii (LK) functionals and the LK  functional method.  
For the nonlinear retarded system (\ref{s1}), it is, however, very difficult 
to construct a Lyapunov-Krasovskii functional in an explicit form. 
In addition, the construction of the $\mathcal{KL}$ function $\bar{\beta}(\cdot,\cdot)$ in (\ref{bet}) is also 
different from the ones presented in \cite{mm17} and \cite{bm21}. 
Finally and most notably,  it  should 
 be  emphasized 
 that the ISS argument adopted in eq. (29) of \cite{mm17}, in particular, 
 the term $\gamma_b(2m|e|_{[t-h,t-h/m]})$ in eq. (29) 
 should be $\gamma_b(2m|e|_{[t_0-h,t-h/m]})$, 
 which makes  
 the construction of $\beta_d (q,r)$ in eq. (32)  of \cite{mm17}
potentially invalid. To circumvent the obstacle just described, 
 we construct  in this paper 
the $\mathcal{KL}$ function $\bar{\beta}(\cdot,\cdot)$ in (\ref{bet}) 
based on the proof of 
Lemma \ref{lm4} given in Appendix.
 \end{remark}

 \begin{remark}
Compared with  the existing Lyapunov-Krasovskii functional method,
 the advantage of the Lyapunov-Halanay method presented in this paper is that:
It needs not to construct  Lyapunov-Krasovskii functionals which may be
 a daunting task, if not completely impossible, 
for the nonlinear retarded systems with input/output delays.
Instead, we develop a  Lyapunov-Halanay method in this paper based on
the quadratic Lyapunov functions for the error dynamics, to conduct  
the stability analysis  of the closed-loop system that is of the retarded form mixed with 
 point-wise delays induced by  delayed input/output. 
 \end{remark}

Similar to the case of state feedback,   
Assumption \ref{ap2} can be relaxed 
 by the following  slightly weaker condition 
when studying the problem of compensation of large input/output delays for the nonlinear retarded system  (\ref{s1})-(\ref{y}) 
by predictor based output feedback.

\begin{assumption} \label{ap5} \rm
For the input/output  
system (\ref{s1})-(\ref{y}) without output delay,
 i.e., $\tau=0$, 
 assume that  there  exists a 
 continuous functional
$F: C([-\delta,0], \R^n) \times \R^p \times \R^q\to \R^n$ with $F(0,0,0)=0$ and 
\bea \label{LF'}
\| F(\phi, u, y) - F(\varphi, u, \bar{y})\|\leq L_F (\| \phi- \varphi \|_c+\|y-\bar{y}\|), 
\eea
$\forall y,\bar{y}\in \R^q$, $\forall u\in \R^p$, and $\forall \phi,  \varphi\in C([-\delta,0], \R^n)$  such that
\bea 
\dot{\hat{x}}(t) = F(\hat{x}_t, u(t-d), y(t)) 
\eea
is a global  asymptotic 
observer 
of the retarded 
 system (\ref{s1}), i.e., 
$\|x(t)-\hat x(t)\|\leq \beta(\|x_0-\hat{x}_0\|_c,t)$, $\forall x_0,\hat{x}_0\in C([-\delta,0], \R^n)$, where 
$\beta(\cdot,\cdot)$ is a function of class $\mathcal{KL}$ and  
$L_{F}>0$ is a Lipschitz constant.
\end{assumption}

\begin{theorem}
\label{th6} \rm 
If the retarded system (\ref{s1})-(\ref{y}) with input/output delays satisfies Assumptions \ref{ap3},
\ref{ap4} and \ref{ap5}, 
the dynamic output feedback controller
\bea  
u(t) = \alpha (\hat{z}_m(t))
\eea
with the sequential predictors  
(\ref{sqo})-(\ref{mo}) or 
(\ref{sqo'})-(\ref{mo'})
globally asymptotically stabilizes  the nonlinear  
retarded system (\ref{s1})-(\ref{y}).
\end{theorem}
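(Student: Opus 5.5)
The proof follows that of Theorem \ref{th3} and changes only the places where Assumptions \ref{ap1}, \ref{ap2} and the full GLC property of $f$ were used. As there, we set $z_0(t)=x(t-\tau)$ and define the errors $\tilde e_0,\dots,\tilde e_m$ by (\ref{e0})-(\ref{ei'}).

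\emph{Step 1: the observer error.} By Assumption \ref{ap5}, applied to the system shifted by $\tau$ (the input there is $u(t-d-\tau)$ and the output is $y(t)=h(z_0(t))$), the error dynamics (\ref{de0}) satisfy
\[
\|\tilde e_0(t)\|\le \beta(\|\tilde\phi_0\|_c,t).
\]
Hence $\|\tilde e_0(t+\frac{d+\tau}{m})\|\le \beta(\|\tilde\phi_0\|_c,t)$ as well. This $\mathcal{KL}$ bound replaces the exponential estimate (\ref{inee0'}).

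\emph{Step 2: the predictor errors.} In the differences appearing in (\ref{de1'}) and (\ref{dei'}), the two $F$ terms share the same $u$-argument $\alpha(\hat z_m(\cdot))$. The Lipschitz bound (\ref{LF'}) concerns only the $\phi$ and $y$ arguments, uniformly in $u$, so the estimates (\ref{f1j}) and (\ref{fij}) hold verbatim with $L_F$ and $L_h$. No Lipschitz property of $\alpha$ or of $F$ in $u$ is needed. Consequently the computations (\ref{DV1})-(\ref{DV12}) and (\ref{DVi1})-(\ref{DVi2}) are unchanged, including the bound on $\int\|\dot{\tilde e}_i\|$, which again uses only (\ref{LF'}). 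With $m$ chosen as in (\ref{mo}) or (\ref{mo'}), Lemma \ref{lm3} yields ISS of $\tilde e_1$ with respect to $\tilde e_{0,\cdot+\frac{d+\tau}{m}}$, and ISS of $\tilde e_i$ with respect to $\tilde e_{i-1}$. Repeated use of Lemma \ref{lm4} then gives the $\mathcal{KL}$ bound (\ref{bet3}) for $\tilde e=(\tilde e_0,\dots,\tilde e_m)$.

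\emph{Step 3: closing the loop.} The identity (\ref{zm2}) is purely algebraic, so it still holds. The plant therefore becomes $\dot x(t)=f(x_t,\alpha(x(t)+\tilde v(t)))$ with $\tilde v(t)=\sum_{j=0}^m \tilde e_{m-j}(t-d+\frac{j}{m}(d+\tau))$. This is exactly (\ref{s3}) with $\mu=\tilde v$. Assumption \ref{ap4} gives ISS directly, replacing the appeal to Lemma \ref{lm2}. We have
\[
\sup_{0\le s\le t}\|\tilde v(s)\|\le (m+1)\sup_{-d\le s\le t}\|\tilde e(s)\|,
\]
which is bounded in terms of $\|\tilde\phi\|_c$ and decays along the trajectory. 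A final application of Lemma \ref{lm4} to the cascade of $x$ and $\tilde e$ yields a $\mathcal{KL}$ estimate on $\|(x(t),\tilde e(t))\|$, that is, GAS of the closed loop.

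\emph{Main obstacle.} The delicate point is well-posedness together with the time-shift bookkeeping, now that $f$ and $F$ are only Lipschitz in the state argument and $\alpha$ is only LLC. The trajectories must exist for all $t\ge 0$, and the shifted signals $\tilde e_{i-1}(t+\frac{d+\tau}{m})$, which look into the future, must be meaningful.
\begin{itemize}
\item \emph{Existence.} Assumption \ref{ap3} makes the plant forward complete. The predictor chain is a cascade driven by delayed values of $u$; by the method of steps on intervals of length $\frac{d+\tau}{m}$, it is piecewise determined by already-known signals. Together these give existence for all $t\ge 0$.
\item \emph{Future shifts.} I would interpret the shifted terms, as in the proof of Theorem \ref{th3}, as functions of $t$ that are well defined once the whole solution exists. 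Their $\mathcal{KL}$ bounds hold on $[0,\infty)$ after enlarging the initial-history window to $[-\delta-d-\tau,0]$.
\item \emph{Applying Lemma \ref{lm4}.} The lemma requires right-hand sides that are Lipschitz on bounded sets. This holds because $\alpha$ is LLC, and that is the only place the LLC hypothesis enters.
\end{itemize}
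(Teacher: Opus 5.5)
\textbf{Where the gap is.} Your three steps are exactly the paper's proof of Theorem~\ref{th6}:
\begin{itemize}
\item Assumption~\ref{ap5} gives a $\mathcal{KL}$ bound on $\tilde e_0$ in place of (\ref{inee0}).
\item (\ref{LF'}) keeps (\ref{f1j}) and (\ref{fij}) intact, because both $F$-terms carry the same $u$-argument.
\item Assumption~\ref{ap4} replaces Lemma~\ref{lm2} for (\ref{dx}).
\item Lemma~\ref{lm4} closes the cascade.
\end{itemize}
The gap is in the step you explicitly certify: that (\ref{DV1})--(\ref{DVi2}), ``including the bound on $\int\|\dot{\tilde e}_i\|$'', carry over verbatim and give Lemma~\ref{lm3} for every $m$ satisfying (\ref{mo}) or (\ref{mo'}). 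That bound is already wrong in the paper. You inherit the error rather than introduce it.

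\textbf{Why the bound fails.} Put $K=L_F+L_FL_h$ and let $G$ be the injection gain ($G=K+1$ in (\ref{sqo}), $G=K+\varepsilon$ in (\ref{sqo'})). By (\ref{de1'}) and (\ref{f1j}), $\dot{\tilde e}_1(s)$ contains two pieces:
\begin{itemize}
\item the $F$-difference, of size at most $K\max_{\theta\in[-\delta,0]}\|\tilde e_1(s+\theta)\|+L_FL_h\|\tilde e_0(s+\frac{d+\tau}{m})\|$;
\item the injection term $-G\tilde e_1(s-\frac{d+\tau}{m})$.
\end{itemize}
So the coefficient of $\max\|\tilde e_1\|$ in (\ref{ine1}) must be $(K+G)\frac{d+\tau}{m}$, not $G\frac{d+\tau}{m}$. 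The same slip is in (\ref{ine2}), and already in (\ref{e1cs}) for Theorem~\ref{th1}.

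Redoing (\ref{E1})--(\ref{DV12}), the requirement $a>b$ of Lemma~\ref{lm3} becomes $G-K-\epsilon>G(G+K)\frac{d+\tau}{m}$ for some $\epsilon>0$.
\begin{itemize}
\item For (\ref{sqo}) this means $m>(K+1)(2K+1)(d+\tau)$, which (\ref{mo}) does not imply.
\item For (\ref{sqo'}) the margin $G-K$ is only $\varepsilon<1$, so one needs $m>(K+\varepsilon)(2K+\varepsilon)(d+\tau)/\varepsilon$. Here (\ref{mo'}) is insufficient even with the paper's own coefficient: $2(G-\epsilon)>2(K+G^2\frac{d+\tau}{m})$ already forces $m>G^2(d+\tau)/\varepsilon$.
\end{itemize}
Hence your argument proves GAS only for $m$ above these larger thresholds. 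With $m$ so enlarged, the rest of it goes through.

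\textbf{Two smaller points.} First, the $j=m$ term of $\tilde v(s)$ is $\tilde e_0(s+\tau)$, so your supremum must run over $[-d,t+\tau]$. This is harmless, given the independent $\mathcal{KL}$ bound on $\tilde e$. Second, Assumptions~\ref{ap3} and~\ref{ap5} give no Lipschitz property of $f$ or $F$ in $u$ (for $F$, only continuity). So LLC of $\alpha$ alone does not make the right-hand sides Lipschitz on bounded sets. Nor does the method of steps decouple $\hat z_m$, since its equation contains the undelayed $u(t)=\alpha(\hat z_m(t))$ inside $F$.
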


{\it Proof:} \  The proof  is akin to that of Theorem \ref{th3} and hence  only the difference is highlighted below.
 
Firstly,   the retarded system (\ref{s1}) is forward complete under Assumption \ref{ap3}.
 It is concluded from Assumption \ref{ap5} that 
$\hat{z}_0(t)$ generated by the first predictor of (\ref{sqo})
is a global asymptotic 
 estimation of $z_0(t)=x(t-\tau)$, i.e., (\ref{inee0}) is replaced by
  \bea\label{inee0''}
\|\tilde{e}_0(t) \|\leq \beta_0(\|\tilde{\phi}_0\|_c, t),
\eea
for a $\mathcal{KL}$ function $\beta_0(\cdot,\cdot)$. 
This, in turn, implies the existence of  a $\mathcal{KL}$ function $\tilde{\beta}_0(\cdot,\cdot)$ such that
(\ref{inee0'}) holds. 

Using the condition (\ref{LF'}), one can show that the estimations in (\ref{f1j}) and (\ref{fij}) are still true. 
As a result,
the arguments from (\ref{de1'}) to (\ref{dx}) remain effective.

Secondly, it is deduced from Assumption \ref{ap4} that  the retarded system (\ref{dx}) is 
ISS with respect to $\tilde{v}(t)$. Thus,   the  inequality (\ref{bet4}) also holds. 

Lastly,   Theorem \ref{th6}  is established based on  
(\ref{bet3}), (\ref{bet4}) and Lemma \ref{lm4}.
 \hfill\qed

The following example illustrates the application of Theorem \ref{th5}  to 
a class of 
nonlinear systems with large input 
 delay which is ISS but not globally exponentially stabilizable (GES) by  any 
 smooth state feedback .

\begin{example}  
 Consider the  scalar non-affine 
   system with  input delay 
  \bea\label{cex}
  \dot x(t)=\frac{\sin x^5(t)}{1+x^4(t)}+u^3(t-d).  
  \eea
It is easy to verify that $\frac{\sin x^5}{1+x^4}$ is GLC in $x$, and hence 
 the nonlinear system (\ref{cex}) satisfies
 Assumptions \ref{ap3}.    As a result, the input-delayed system (\ref{cex}) is forward complete.

In the absence of input delay, i.e., $d=0$,   the nonlinear system (\ref{cex}) reduces to 
 \bea\label{exu3}
 \dot x=\frac{\sin x^5}{1+x^4}+u^3.
 \eea
 We claim that  there is {\it no any smooth} controller $u=\alpha(x)$ with $\alpha(0)=0$, such that
 the closed-loop system
   \bea \label{ex0}
  \dot x=\frac{\sin x^5}{1+x^4}+\alpha^3(x)
 \eea
 is GES. 
 
 The claim is proved by  
 contradiction.
 If there is 
 a smooth controller $u=\alpha(x)$ with $\alpha(0)=0$,
 making the system (\ref{ex0}) GES.  Then, the closed-loop system (\ref{ex0}) is also LES. 
 Equivalently, 
the  linearized system of (\ref{ex0}) is LES. This 
contradicts to the fact that the linearization of (\ref{ex0}) is
$\dot x=0$  which is clearly not LES.

Next, it is claimed that the state feedback controller $u=\alpha (x)=-x$  ensures the ISS property of the 
nonlinear system  (\ref{exu3})  
with respect to an additive input signal $\mu$ when $u=\alpha (x+\mu)$.  

To this end, 
 we observe that 
(\ref{exu3}) with $u=-(x+\mu)$ is described by 
   \bea\label{ex1} 
  \dot x= 
  \frac{\sin x^5}{1+x^4}-(x+\mu)^3
  \eea 
Choosing the Lyapunov function $V(x)=\frac{1}{2}x^2$, we see that  along the solution of system (\ref{ex1}),
 \bea\label{dVx}
 \dot V(x)= \frac{x\sin x^5}{1+x^4}-x^4-3x^3\mu-3x^2\mu^2-x\mu^3.   
 \eea
Let $\rho(x):=1-\frac{\sin x^5}{x^3(1+x^4)}$. 
Because
 $\frac{\sin x^5}{x^3(1+x^4)}<1$ and 
$\lim_{x\to 0}\rho(x)=1=\lim_{x\to 
\infty}\rho(x)$, 
  the constant $c:=\inf_{x\in\R} \rho(x)>0$ and 
  $$1-\frac{\sin x^5}{x^3(1+x^4)}\ge c, \qquad \forall x\in \R.$$
This, in turn,  leads to 
\bea\label{dVx1}
\frac{x\sin x^5}{1+x^4}-x^4\le -cx^4, \quad \forall x\in \R. 
\eea
Applying (\ref{dVx1}) and Young inequality,  we deduce from (\ref{dVx}) that 
 \bea
&& \dot V(x)\leq -cx^4+3|x|^3|\mu|+3|x|^2|\mu|^2+|x||\mu|^3 \nn\\
 && \hskip .35in \leq -\frac{c}{2}x^4+b|\mu|^4, \quad  \mbox{for some}\ \ b>0.
 \eea
Thus,  the nonlinear system  (\ref{ex1}) is ISS with respect to $\mu$.  

In conclusion,  the  
 nonlinear system (\ref{cex}) with input delay 
 satisfies Assumptions \ref{ap3}-\ref{ap4} 
 but violates Assumption \ref{ap1}.
 Although Theorem \ref{th1} cannot be applied to solve the stabilization problem of 
(\ref{cex})  by predictor based state feedback, 
Theorem \ref{th5} is still applicable and can be used to compensate the large input delay of the
nonlinear system (\ref{cex}), by means of the sequential predictors developed in this paper.
\end{example}

\section{
Control of  the Pendulum System with State and Input/Output Delays 
by Sequential Predictors}

For the purpose of illustration, 
we 
apply in this section the main results of this paper to solve the feedback stabilization problem 
of the pendulum system in the presence of delays in state, input and output.
  
Consider the pendulum system \cite{mm17}. 
When the damping torque of the pendulum system depends on the angular velocity with a delay $\delta$
(for example, an angular velocity sensor has a delay $\delta$, resulting in the use of the angular velocity signal 
from $\delta$ time earlier for damping torque), 
and the control signal has a delay $d$,  the 
pendulum system is modeled by
\bea \label{pe}
J\ddot{\theta}(t) + \zeta\dot{\theta}(t-\delta) - Mgl\sin\theta(t) = u(t-d),
\eea
where $J=Ml^2$ is the moment of inertia,  $\zeta$ is the damping coefficient, $g$ 
is the gravity constant,
$l$ is the pendulum length, 
and $M$ is the pendulum mass. In this case study,  we assume that
the parameters $M=0.1 kg, l=10 m, g=9.8 m/s^2$, and  $\zeta=0.5$. 

Introduce the states $x_1(t)=\theta(t)$, $x_2(t)=\dot\theta(t)$, and the delayed measurement signal $y(t)=x_1(t-\tau)$.
Then, the  state-space model of the pendulum system (\ref{pe}) is represented by
 \bea
&&\hskip -.45in \dot{x}_1(t)=x_2(t) \nn\\
&&\hskip -.45in \dot{x}_2(t)=\hskip -.03in \frac{1}{Ml^2}u(t-d) - \frac{\zeta}{Ml^2} x_2(t-\delta)+\frac{g}{l}\sin x_1(t) \nn \\
&&\hskip -.38in y(t)=x_1(t-\tau). \label{pen} 
\eea
Obviously,  the pendulum system (\ref{pen}) is GLC with a Lipschitz constant $L_f=1$.

 The control objective is to design state and output feedback controllers that ensure global asymptotic stability 
 of the pendulum system (\ref{pen}) at the 
 equilibrium $(0,0)$,  in the presence of  delays in the state, input and output.

\subsection{State Feedback Case}

When the input delay $d=0$ in (\ref{pen}), 
one can design a delay-free, linear state 
feedback controller, for instance, 
$$ u(t)=\alpha(x(t))=
 -25(x_1(t)+x_2(t)),$$
 which renders the pendulum system (\ref{pen}) without input delay 
 GES.  Thus, Assumption \ref{ap1} is fulfilleded.

By 
Corollary \ref{cor1}, one can design a 
state feedback controller 
\bea \label{penu}
&&\hskip -.5in u(t)=\alpha(z_m(t)) =
-25(z_{m,1}(t)+z_{m,2}(t))
\eea
with the sequential predictors 
 \bea\label{sqf}
&&\hskip -.36in 
\begin{bmatrix}
   \dot{z}_{i,1}(t) \\
   \dot{z}_{i,2}(t)
\end{bmatrix}
\hskip -.03in  =\hskip -.03in  \begin{bmatrix}
 z_{i,2}(t) \\
 \frac{g}{l}\sin z_{i,1}(t)-\frac{\zeta}{Ml^2} z_{i,2}(t-\delta)+\frac{1}{Ml^2}u(t-d + \frac{id}{m})
 \end{bmatrix} \nn\\
&&\hskip .35in
-1.3\begin{bmatrix}
z_{i,1} (t-\frac{d}{m} ) - z_{i-1,1}(t) \\
z_{i,2} (t-\frac{d}{m} ) - z_{i-1,2}(t)
 \end{bmatrix} \nn\\
 &&\hskip .0in i=1,2,\cdots,m 
 \eea
where $[z_{i,1}(t) \ z_{i,2}(t)]^T:=z_i(t)\in\R^2$, $i=1,\cdots,m$ and
$[z_{0,1}(t) \ z_{0,2}(t)]^T=[x_1(t) \ x_2(t)]^T=x(t)\in\R^2$.

The simulations shown in Fig. 1 and Fig. 2 were carried out with the input delay $d=2$, state delay $\delta=1$, 
$m=4>1.3^2\times 2$, and the initial condition  $(x_1(s), x_2(s))=(1, 0)$, 
$(z_{i,1}(s), z_{i,2}(s))=(0.2, 0.1)$, $i=1,\cdots,4$,  $\forall s\in[-2,0]$.
The simulation  results of Figs. 1 and 2  have validated  
 the effectiveness of  
 the sequential predictors based controller (\ref{penu})-(\ref{sqf}).
\begin{figure}[htb]
 \vspace{-.1in}
\centering
\hskip -.1in
\rotatebox{360}{\scalebox{0.63}[0.8]{\includegraphics{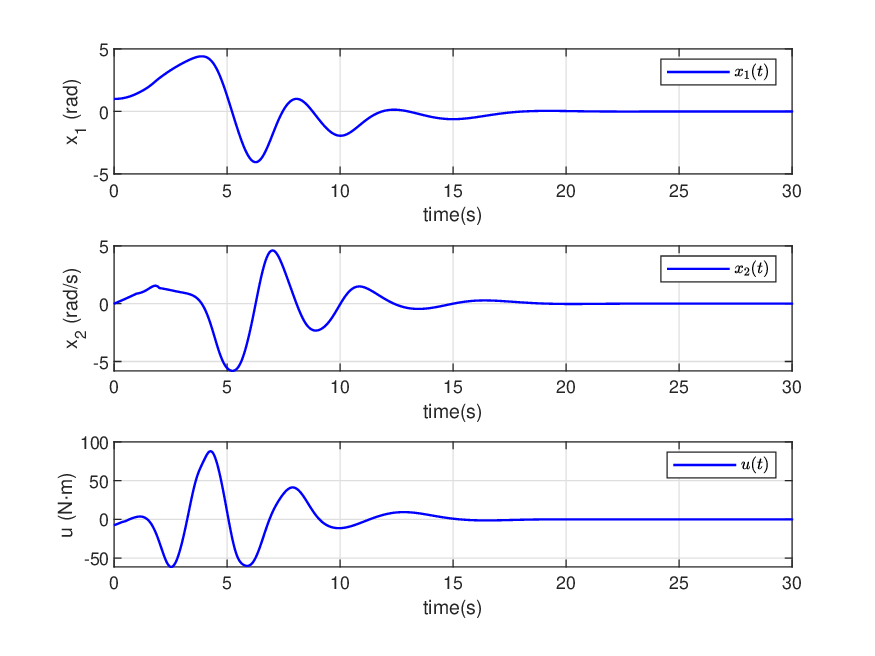}
}}
\vspace{-.1in}
	\center{\small Fig. 1: \  \
		Transient
response of the time-delay closed-loop 
system (\ref{pen})-(\ref{sqf}) 
with $d=2,\delta=1$, and $m=4$. 
}\end{figure}

\begin{figure}[htb]
 \vspace{-.1in}
\centering
\hskip -.1in
\rotatebox{360}{\scalebox{0.63}[0.75]{\includegraphics{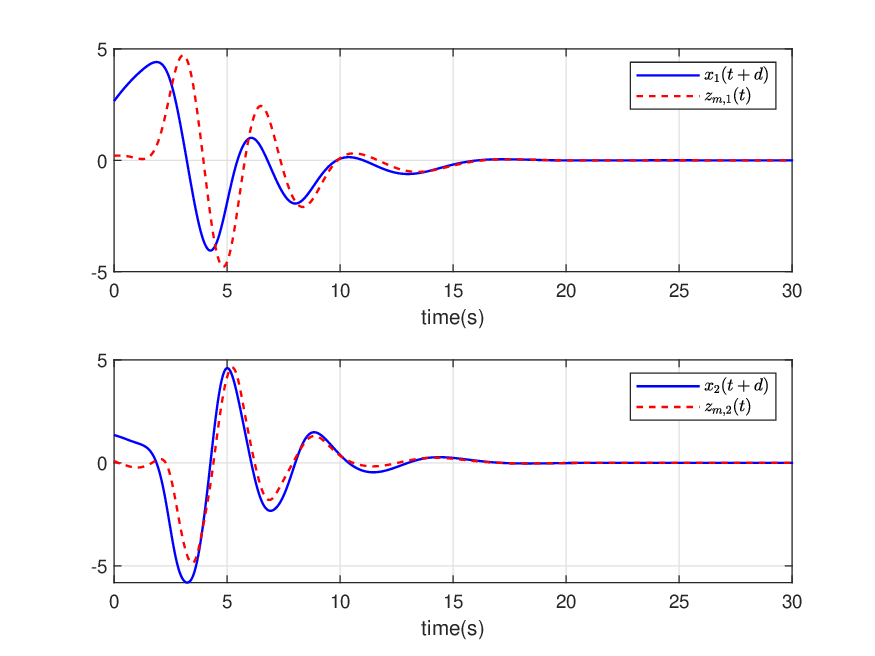}
}}
\vspace{-.1in}
	\center{\small Fig. 2: \  \
	The actual state $x(t+d)$ and its approximate predictor 
$z_m(t)$, where $d=2,m=4$. 
}\end{figure} 

In the case when the delays are increased to $d=4 $ and $\delta=2$, respectively, 
Figs. 3 and 4 give the simulation results that further demonstrate
  the effectiveness of the sequential predictors based controller (\ref{penu})-(\ref{sqf}) 
  with $m=8$ under the same initial conditions.
\begin{figure}[htb]
 \vspace{-.1in}
\centering
\hskip -.1in
\rotatebox{360}{\scalebox{0.63}[0.8]{\includegraphics{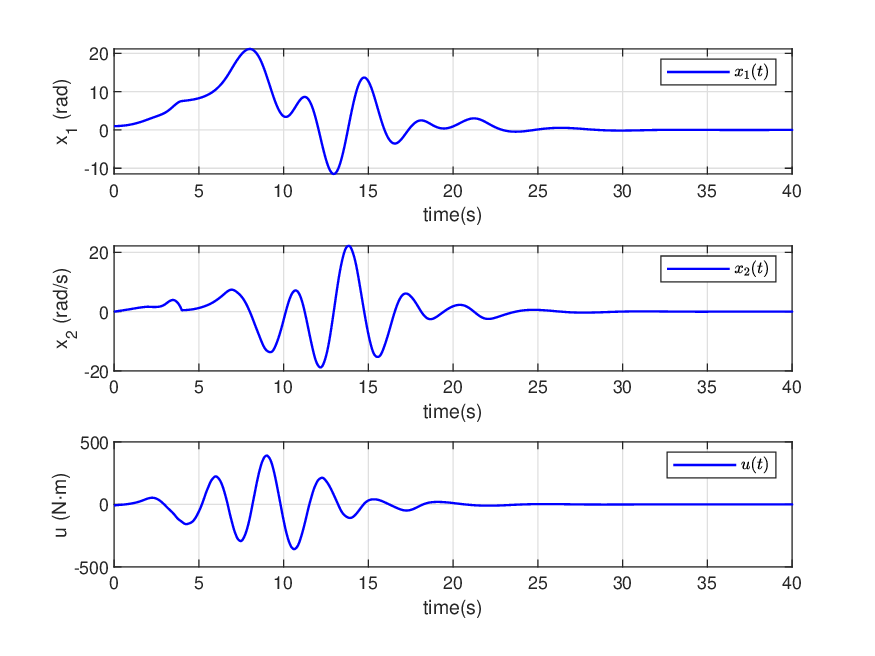}
}}
\vspace{-.1in}
	\center{\small Fig. 3: \  \
		Transient
response of the time-delay closed-loop  system (\ref{pen})-(\ref{sqf}) 
with $d=4,\delta=2$, and $m=8$. 
}\end{figure}

\begin{figure}[htb]
 \vspace{-.1in}
\centering
\hskip -.1in
\rotatebox{360}{\scalebox{0.63}[0.75]{\includegraphics{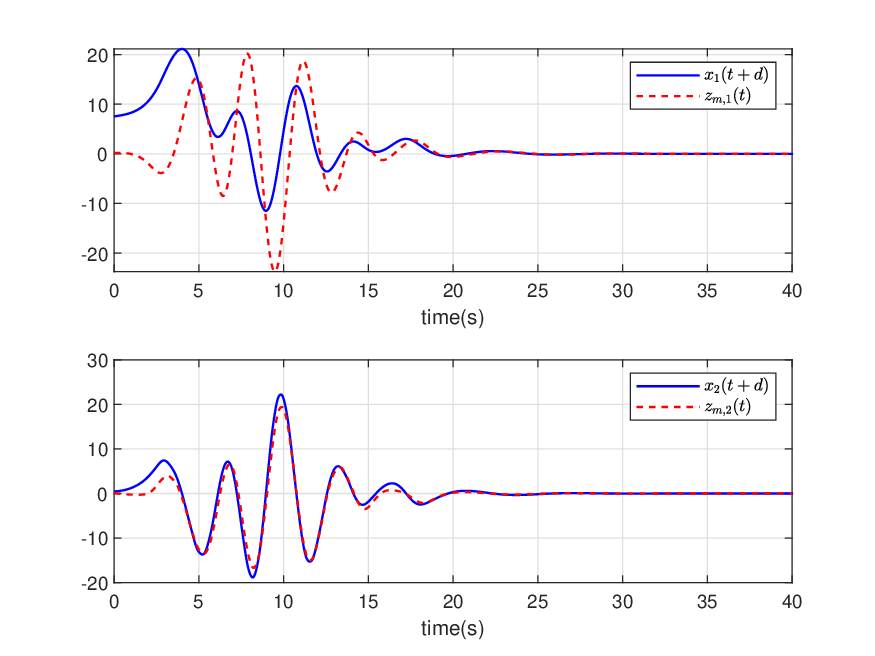}
}}
\vspace{-.1in}
	\center{\small Fig. 4: \  \
	The actual state $x(t+d)$ and its approximate predictor 
$z_m(t)$, where $d=4,m=8$. 
}\end{figure}

\subsection{Output Feedback Case}

When there is no output delay (i.e., $\tau=0$) in the 
pendulum system (\ref{pen}),  we first verify that
Assumption \ref{ap2} holds.  In fact,  
the input/output system (\ref{pen}) with $\tau=0$ 
 can be rewritten as 
 \bea
&&\hskip -.0in \dot{x}(t)=Ax(t)+Bu(t-d)+g_0(x(t),x(t-\delta)) \nn\\
&&\hskip .0in y(t)= 
Cx(t), \label{py} 
\eea
where $x(t)=[x_1(t) \ x_2(t)]^T\in \R^2$, $A=\left[\begin{smallmatrix}0 & 1 \\ 0 & 0\end{smallmatrix}\right]$, 
$B=[0 \ 1/(Ml^2)]^T$, $C=[1 \ 0]$, and the vector field 
\bea
g_0(x(t),x(t-\delta))=\Big[ 0 \ \ \frac{g}{l}\sin x_1(t)-\frac{\zeta}{Ml^2} x_2(t-\delta) \Big]^T. \  
\label{g0}
\eea

For the time-delay system (\ref{py}), 
consider an observer of the form 
\bea
&&\hskip -.35in \dot{\hat{x}}(t)=A\hat{x}(t)+Bu(t-d)+g_0(\hat{x}(t),\hat{x}(t-\delta))  \nn\\
&&\hskip .1in -L(y(t)-\hat{x}_1(t)) \nn\\
&&\hskip -.1in := F(\hat{x}(t),\hat{x}(t-\delta),u(t-d),y(t)), \label{pob}
\eea
where $L=[l_1 \ l_2]^T
\in \R^2$ is the observer gain.

Let $e(t)=[e_1(t) \ e_2(t)]^T=x(t)-\hat x(t)$ be  the estimate error. 
For the error dynamics 
\bea
 \dot{e}(t) = (A-LC)e(t)+ G_0(e(t), e(t-\delta))
\label{err}
\eea
with $G_0(\cdot)=g_0(x(t), x(t-\delta))-g_0(\hat{x}(t),\hat{x}(t-\delta))$  satisfying  
 $||G_0(e(t), e(t-\delta)||\le \frac{g}{l} |e_1(t)| + \frac{\zeta}{Ml^2}|e_2(t-\delta)|$,
one can design 
an observer gain $L$ such that the error dynamic system (\ref{err}) is exponentially stable. In
the simulation study, the observer gain is chosen as $L=[1 \ 1.5]^T$.

Finally, it is easy to check 
that the vector field $F$ in (\ref{pob}) is GLC with a Lipschitz constant $L_{F}=1.5$.
In view of  $y=h(x(t-\tau))=x_1(t-\tau)$ in (\ref{pen}), the mapping $h$ is GLC with a Lipschitz constant $L_h=1$.
 By 
Corollary \ref{cor2}, the output feedback
controller 
\bea \label{penu'}
&&\hskip -.5in u(t)=\alpha(\hat{z}_m(t))= -25(\hat{z}_{m,1}(t)+\hat{z}_{m,2}(t)),
\eea 
with the sequential predictors 
(\ref{z0}), 
 renders the time-delay pendulum system (\ref{pen}) with  input/output delays 
 GAS  if the parameter $m$ is chosen properly.

In the simulations,  we pick the parameter $\varepsilon=0.1$ in (\ref{sqo''}). 
As a result, the sequential predictors for the pendulum system (\ref{pen}) with delays are given by 
\bea 
&&\hskip -.35in \dot{\hat{z}}_0(t) = 
A\hat{z}_{0}(t)+Bu (t-d-\tau)+g_0(\hat{z}_{0}(t),\hat{z}_{0}(t-\delta)) \nn\\
&&\hskip .15in -L(y(t)-\hat{z}_{01}(t)) \nn 
\\
&&\hskip -.35in \dot{\hat{z}}_1(t) = 
A\hat{z}_{1}(t)+Bu(t-d-\tau + \frac{d+\tau}{m}) \nn\\
&&\hskip .15in +g_0(\hat{z}_{1}(t),\hat{z}_{1}(t-\delta)) - 3.1\big(\hat{z}_1  (t-\frac{d+\tau}{m}  ) - \hat{z}_0(t) \big) \nn\\
&&\hskip -.35in \dot{\hat{z}}_2(t) = 
A\hat{z}_{2}(t)+Bu(t-d-\tau + \frac{2(d+\tau)}{m}) \nn\\
&&\hskip .15in +g_0(\hat{z}_{2}(t),\hat{z}_{2}(t-\delta)) - 
3.1 \big(\hat{z}_2  (t-\frac{d+\tau}{m} ) - \hat{z}_1(t) \big) \nn\\
&&\hskip .05in \vdots \nn\\
&&\hskip -.35in \dot{\hat{z}}_m(t) = 
A\hat{z}_{m}(t)+Bu(t) +g_0(\hat{z}_{m}(t),\hat{z}_{m}(t-\delta)) \nn\\
&&\hskip .15in - 
3.1 \big(\hat{z}_m  (t-\frac{d+\tau}{m}  ) - \hat{z}_{m-1}(t) \big),  \label{psq}
\eea
where  $\hat{z}_i(t)=[\hat{z}_{i,1}(t) \ \hat{z}_{i,2}(t)]^T\in \R^2$, $i=0,1,\cdots,m$, and
$g_0(\cdot)$ is defined by (\ref{g0}).

The simulations given in Fig. 5 and Fig. 6  are conducted under the following parameters:
the input/output delays $d=\tau=1$ and  the state delay $\delta=2$,
 $m=20>3.1^2\times 2$ (based on (\ref{mo'})), and  the initial condition  
 $(x_1(s), x_2(s))=(1, 0)$, $(\hat{z}_{i,1}(s), \hat{z}_{i,2}(s))=(0.2, 0.1)$, $i=0,1,\cdots, 20$, $\forall s\in[-2,0]$.
The simulation  results shown in Figs. 5-6 have illustrated  
that the output feedback controller  (\ref{penu'}) with the sequential predictors (\ref{psq}) is capable of compensating
the input/output delays in the pendulum system (\ref{pen}) effectively, and, in turn, 
making the  time-delay closed-loop 
system GAS.

\begin{figure}[htb]
 \vspace{-.1in}
\centering
\hskip -.1in
\rotatebox{360}{\scalebox{0.63}[0.8]{\includegraphics{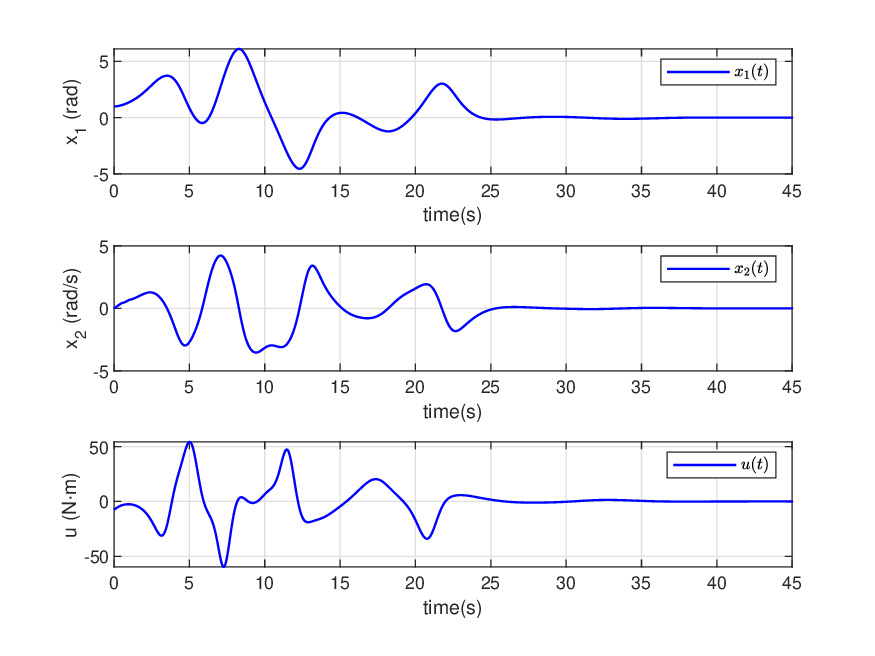}
}}
\vspace{-.1in}
	\center{\small Fig. 5: \  \
		Transient
response of the  time-delay closed-loop system (\ref{pen}), (\ref{penu'})-(\ref{psq})
with $d=\tau=1,\delta=2$, and $m=20$. 
}\end{figure} 

\begin{figure}[htb]
 \vspace{-.1in}
\centering
\hskip -.1in
\rotatebox{360}{\scalebox{0.63}[0.75]{\includegraphics{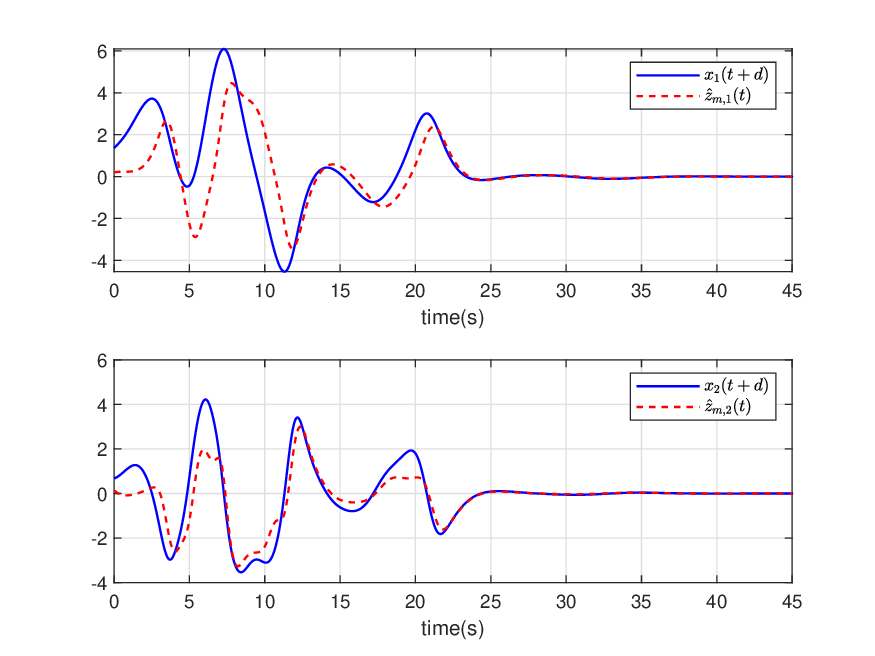}
}}
\vspace{-.1in}
	\center{\small Fig. 6: \  \
	The actual state $x(t+d)$ and its approximate predictor 
 $\hat{z}_m(t)$, where $d=1,m=20$. 
}\end{figure}

\section{Conclusion}

We have  addressed the problems of 
global asymptotic stabilization (GAS) 
by predictor based state and output feedback, respectively, 
for a class of nonlinear retarded systems with long input/output delays.
When the system state is available for control design, 
 a state feedback 
 control strategy based on sequential predictors has been developed, leading to a solution to the 
 GES problem under the  
 global Lipschitz continuity (GLC)   
and global exponential stabilizability (GES) of the nonlinear retarded system {\it without input delay.}
When only the  
delayed output is measurable for the design of controller,   
an output feedback 
 control strategy based on sequential predictors has also been presented, resulting in a solution to the 
 GES problem under the condition that  the nonlinear retarded system {\it without output delay}
  permits  a global exponential observer and global exponential stabilizability via state feedback.
 In addition, both state and output feedback results have also been extended to a broader class of nonlinear retarded systems
 that {\it may not be GES by state feedback} but satisfy  {\it global asymptotic} stabilizability/observability
 and  appropriate  ISS conditions. 
 
 In contrast to the existing LK functional approach in the literature, 
 a distinguished feature of  the proposed control methods in this work lies in the fact that the  
 stability analysis relies on the Lyapunov-Halanay like inequality and  needs not to 
 construct  Lyapunov-Krasovskii functionals 
 -- a daunting task 
 when dealing with nonlinear retarded systems with large input/output delays.
The Lyapunov-Halanay like approach adopted in this research 
employs simple
 quadratic Lyapunov functions for the error dynamics to conduct  
the stability analysis of the closed-loop retarded system mixed with 
 point-wise delays induced by input/output. 
  The proposed state and output feedback control schemes based on sequential predictors
   are capable of compensating large input/output delays and
     ensuring  
     GAS of the retarded closed-loop systems, as demonstrated by 
 the  applications to  a  
   pendulum system with delays in the state,  input and output.

\section*{Appendix: Proof of Lemma \ref{lm4}}

Denote $r = \|(\psi, \phi)\|_c$.  Clearly,  $r \geq \|\psi\|_c$ and $r \geq \|\phi\|_c$. 

By (\ref{b1}) and Proposition 2 of \cite{Ant23a}, there is a $\mathcal{KL}$ function $\bar{\beta}_1(\cdot, \cdot)$ and a $\mathcal{K}$ function $\bar{\gamma}_1(\cdot)$  such that
\bea \label{A'}
\|x_t\|_c \leq \bar{\beta}_1(\|\psi\|_c, t) + \bar{\gamma}_1 ( \sup_{0 \leq s \leq t} \|e_{s}\|_c ).
\eea
This, in turn,  implies that for any $s \in [0,t]$,
\bea \label{A}
\|x_t\|_c \leq \bar{\beta}_1(\|x_{s}\|_c, t-s) + \bar{\gamma}_1 ( \sup_{s \leq \tau \leq t} \|e_{\tau}\|_c ).  
\eea

On the other hand, there exists a $\mathcal{KL}$ function $\bar{\beta}_2(\cdot, \cdot)$ such that (due to (\ref{b2}))
\bea
\|e_t\|_c \leq \bar{\beta}_2(\|\phi\|_c, t). 
\label{B}
\eea

Let us  fix $t > 0$. Applying (\ref{A}) with $s = t/2$ results in
\bea
\|x_t\|_c \leq \bar{\beta}_1\Big( \|x_{t/2}\|_c, \frac{t}{2} \Big) + \bar{\gamma}_1 \Big(\sup_{t/2 \leq \tau \leq t} \|e_\tau\|_c\Big). \label{2}
\eea
From (\ref{b1}), it follows that
\bea
\|x_{t/2}\|_c \leq \beta_1\Big( \|\psi\|_c, \frac{t}{2} \Big) + \gamma_1\Big( \sup_{0 \leq \tau \leq t/2} \|e_\tau\|_c \Big). 
\label{3}
\eea
Using (\ref{B}) and the decreasing property of the $\mathcal{KL}$ function with respect to its second variable,
we have
\bea
&&\hskip -.3in \sup_{0 \leq \tau \leq t/2} \|e_\tau\|_c \leq \sup_{0 \leq \tau \leq t/2} \bar{\beta}_2(\|\phi\|_c, \tau) \leq \bar{\beta}_2(\|\phi\|_c, 0) \label{4} \nn\\
&&\hskip -.3in
\sup_{t/2 \leq \tau \leq t} \|e_\tau\|_c \leq \sup_{t/2 \leq \tau \leq t} \bar{\beta}_2(\|\phi\|_c, \tau) \leq \bar{\beta}_2 (\|\phi\|_c, \frac{t}{2}). 
\ \
\label{5}
\eea

Substituting (\ref{3})--(\ref{5}) into (\ref{2}), and using $\|\psi\|_c \leq r\), \(\|\phi\|_c \leq r$ with monotonicity of $\beta_1(\cdot,\cdot)$,  $\gamma_1(\cdot)$, $\bar{\beta}_1(\cdot,\cdot)$, 
and $\bar{\gamma}_1(\cdot)$, we obtain
\bea
\|x_t\|_c \leq \bar{\beta}_1 \Big(\beta_1 ( r, \frac{t}{2}  ) + \gamma_1 ( \bar{\beta}_2(r, 0) ), \frac{t}{2} \Big) + 
 \bar{\gamma}_1 \big( \bar{\beta}_2 ( r, \frac{t}{2} )\big). \label{6}
\eea
Furthermore, it is deduce from (\ref{B}) that
\bea
\|e_t\|_c \leq \bar{\beta}_2(r, t). \label{7}
\eea
Hence, 
\bea\label{99}
\|(x_t, e_t)\|_c \leq \|x_t\|_c + \|e_t\|_c \leq \beta(r, t),
\eea
where
\bea
&&\hskip -.3in \beta(r, t) = \bar{\beta}_1 \Big( \beta_1 ( r, \frac{t}{2}  ) + \gamma_1 ( \bar{\beta}_2(r, 0)  ), \frac{t}{2}  \Big) \nn\\
&&\hskip .23in +  \bar{\gamma}_1 \big( \bar{\beta}_2 ( r, \frac{t}{2} )\big)  + \bar{\beta}_2(r, t).
\eea
By construction, 
$\beta(\cdot,\cdot)$ is a $\mathcal{KL}$ function. 

As a consequence,  (\ref{99}) leads to
\bea
\|(x(t), e(t))\| \leq \beta ( \|(\psi, \phi)\|_c, t ).
\eea
This completes the proof of Lemma \ref{lm4}. \hfill\qed

\begin{IEEEbiography}[{\includegraphics[width=1in,height=1.25in,clip,keepaspectratio]{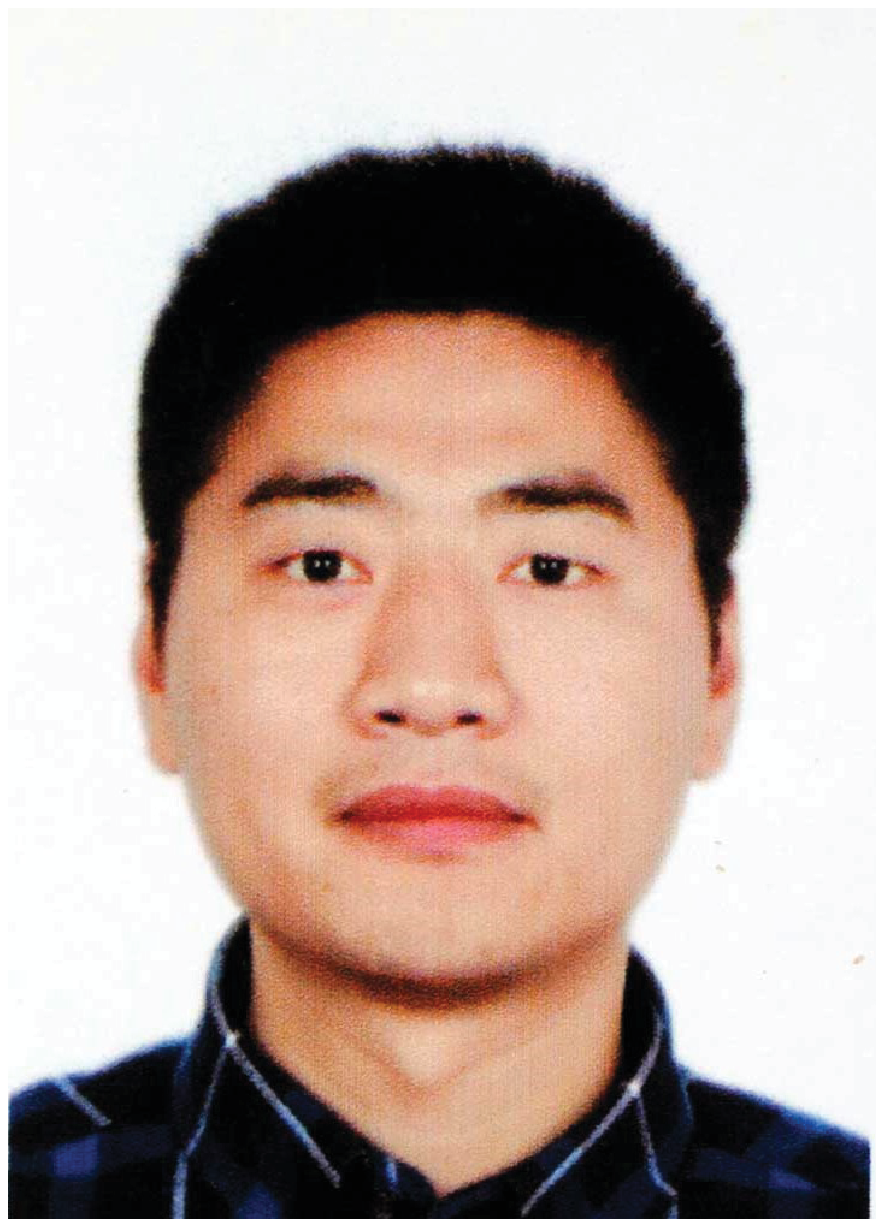}}]
{Xin Yu}
 received the B.S. degree in mathematics and 
 M.S. degree in control theory
from Qufu Normal University, 
China, in 2006 and 2009, respectively,
and PhD degree in control theory and control engineering from
Southeast University, 
China, in 2012.

From 2012 to 2022, he was an Associate Professor in the School of Electrical and Information
Engineering, Jiangsu University, Zhenjiang, China. Since 2023, he has been with the School
of Electrical Engineering and Automation, Jiangsu Normal University, where he is currently a Professor.
His research interests include nonlinear control, homogeneous systems theory, time-delay systems, stochastic control systems, and sampled-data control. 
\end{IEEEbiography}


\begin{IEEEbiography}[{\includegraphics[width=1in,height=1.25in,clip,keepaspectratio]{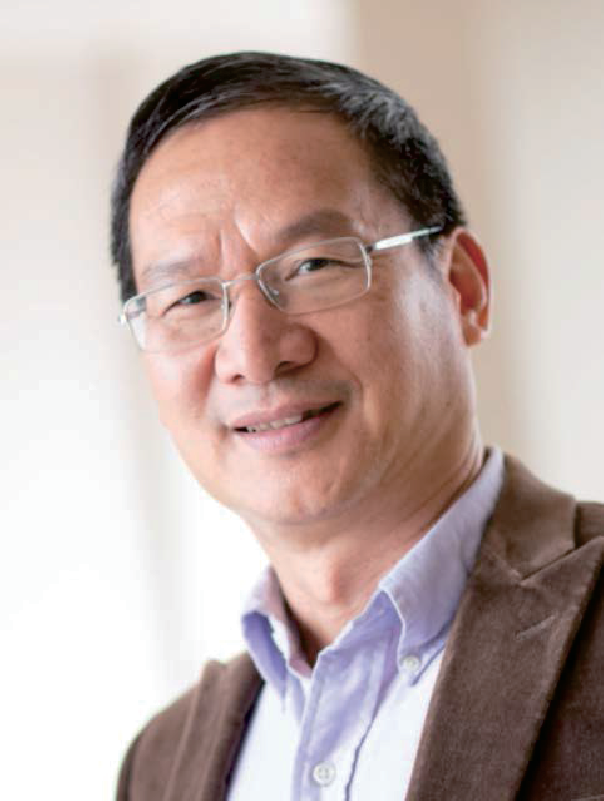}}]
{Wei Lin}
received the D.Sc. and M.S. degrees in systems science and mathematics from Washington
University, St. Louis, MO, USA, in 1993 and 1991,  M.S. degree from Huazhong University
of Science and Technology in 1986, and B.S.  from Dalian University of
Technology in 1983,  both in electrical engineering.  During 1986-1989, he was a Lecturer in the Department of Mathematics, Fudan University, Shanghai, China.  From 1994 to 1995, he was a Post-doctor and   
 a Visiting Assistant Professor in Washington University.
Since Spring of 1996, he has been a Professor
in the Department of Electrical, Computer, and Systems  Engineering,
Case Western
Reserve University, Cleveland, Ohio, USA.

Dr. Lin's research interests include nonlinear control,
time-delay systems, stochastic stability and stochastic control,
 homogeneous systems theory, estimation and adaptive control, under-actuated mechanical
and nonholonomic 
systems, renewable energy, power systems and smart grids.
 In these areas,  he has published   
 a number of peer reviewed
 papers  in  journals and conferences.
  More details can be found at   https://engineering.case.edu/research/labs/nonlinear-control-systems/about

Dr. Lin was a recipient of  the NSF CAREER Award, 
 the Warren E. Rupp Endowed
Professorship,  the Robert Herbold Faculty Fellow Award, the JSPS Fellow and IEEE Fellow.
He  served
as  an Associate Editor of  IEEE Trans. on  on Automa. Contr. (1999-2002),
a Guest Editor of  the special issue on ``New Directions in Nonlinear Control''
in IEEE Trans. on  on Automa. Contr. (2003), an Associate Editor of Automatica (2003-2005),
 a Subject Editor (2005-2010)  
of  Int. J. of Robust and Nonlinear Control, and an Associate Editor of  Journal of Control Theory and Applications (2005-2008),  on the Board of Governors of IEEE Control Systems Society (2003-2005),
 and Vice Program Chair of 2001 CDC (Short Papers) and 2002 CDC (Invited Sessions).
\end{IEEEbiography}

\end{document}